\newtheorem{thm}{Theorem}[section]
\newtheorem{lemma}{Lemma}
\newtheorem{cor}[thm]{Corollary}
\newtheorem{obs}{Observation}
\newcommand{\floor}[2]{\ensuremath{\left \lfloor {\frac{#1}{#2}} \right
\rfloor}}
\newcommand{\curl}[2]{\ensuremath{\Gamma_{#1, #2}}}
\newcommand{\RH}[1]{\textsl{RH}\ensuremath{(#1)}}
\newcommand{\vomit}[1]{}
\title{A Combinatorial Bound for Beacon-based Routing in Orthogonal Polygons} \author{Thomas C Shermer \\ School of Computing Science \\ Simon
Fraser University \\ Burnaby, BC \ \ V5A 1S6} 
\begin{document}
\maketitle

\begin{abstract}
	\emph{Beacon attraction} is a movement system whereby
	a robot (modeled as a point in 2D) moves in a free space so as to always
	locally minimize its Euclidean distance to 
	an activated beacon (which is also a point).
	This results in the robot moving directly towards the beacon
	when it can, and otherwise sliding along the edge of an obstacle.
	When a robot can reach the activated beacon by this method, we say that
	the beacon \emph{attracts} the robot.
	A \emph{beacon routing} from $p$ to $q$ is a sequence $b_1, b_2, \ldots, b_k$
	of beacons such that activating the beacons in order will attract a robot from
	$p$ to $b_1$ to $b_2 \ldots$ to $b_k$ to $q$, where $q$ is considered to be a
	beacon.
	A \emph{routing set of beacons} is a set $B$ of beacons such that any
	two points $p, q$ in the free space have a beacon routing with the intermediate
	beacons $b_1, b_2, \ldots b_k$ all chosen from $B$.
	Here we address the question of ``how large must such a $B$ be?'' in orthogonal
	polygons, and show that the answer is ``sometimes as large as \floor{n-4}{3},
	but never larger.''
\end{abstract}

\section{Background}
	\emph{Beacon attraction} has come to the attention of the community recently as
	a model of greedy geographical routing
	in dense sensor networks.
	In this application, each node of the network has a location, and each
	communication packet knows the location of its destination.
	Nodes having a packet to deliver forward the packet to their neighbor that is
	the closest (using Euclidean distance) to the packet's
 	destination \cite{Bose01, Karp:2000:GGP:345910.345953}.
	
	In the abstract geometric setting, the destination point is called a beacon,
	and the message is considered to be a point (or robot) that greedily moves
	towards the beacon.
	The robot, under this motion, may or may not reach the beacon---if it
	does reach the beacon, we say that the beacon \emph{attracts} the robot's
	starting point.
	The attraction relation between points has the flavor of a visibility-type
	relation, with the interesting twist that it is asymmetric: if point $p$
	attracts point $q$, then it does not follow that point $q$ attracts $p$.
	In a series of publications, Biro, Gao, Iwerks, Kostitsyna, and Mitchell have 
	studied various visibility-type questions for beacon attraction, such as
	computing attraction (and inverse-attraction) regions for points, computing 
	attraction kernels, guarding, and routing \cite{BiroIwerks, BiroGaoIwerks,
	biro2013beacon}.  In a recent paper, Bae, Shin, and Vigneron studied
	guarding via attraction in orthogonal polygons
	\cite{DBLP:journals/corr/BaeSV15}.
	
	In beacon-based routing, the goal is to route from a source $p$ to a
	destination $q$ through a series of intemediate points $b_1, b_2, \ldots b_k$
	where $b_1$ attracts $q$, $b_2$ attracts $b_1$, $b_3$ attracts $b_2$, etc., and
	finally $q$ attracts $b_k$.  The idea is that we activate the beacons 
	$b_1, b_2, \ldots b_k$ individually in turn, and then activate a beacon at $q$, 
	and we will have attracted $p$ all of the way to $q$.
	In the application setting, this corresponds to using greedy geographical
	routing for each hop in a multi-hop routing for the packet; beacons correspond
	to \emph{landmark} or \emph{backbone} nodes of the network \cite{4215665}.
	Ad-hoc networks (and to some extent, sensor networks) expect to see messages
	from many different $p$'s to many different $q$'s.
	Thus it is natural to ask whether we can find some set $B$ of backbone nodes
	(beacons) such that one can route from \emph{any} $p$ to \emph{any} $q$ using
	only backbone nodes chosen from $B$. 
	
	We'll call such a set $B$ a \emph{routing set of beacons}.
	Biro et al.\cite{BiroGaoIwerks} studied the
	problem of finding minimum-cardinality routing sets of beacons
	in simple polygons.
	They established that it is NP-hard to find such a minimum-cardinality $B$, and
	that such a $B$ can be as large as, but never exceed, $\floor{n-2}{2}$.
	Biro also conjectured \cite{biro2013beacon} that, in orthogonal polygons,
	such a $B$ could be as large as, but never exceed, \floor{n-4}{4}.
	In this paper, we disprove this conjecture, pinning this maximum
	minimum size at \floor{n-4}{3} instead.

 	We organize the remainder of this paper as follows.
 	In Section 2, we define some more terminology and study the decomposition we use.
 	In Section 3, we investigate
 	    the main technical obstacle to using direct induction on the problem,
 	    which we call \emph{trapped paths}.  We also show there how to overcome
 	    this obstacle.
 	In Section 4, we prove the upper bound (over all orthogonal polygons)
 	    on the maximum size of a minimum-sized routing beacon set.
 	In Section 5, we show a construction for arbitrarily large polygons where the
 	minimum size of the routing beacon set for the polygon matches the upper
 	bound.
 	We give concluding remarks in Section 6.

\section{Preliminaries}		
	\subsection{Attraction}
		\begin{figure}[htbp] 
			\begin{center}
			    \includegraphics*[scale=1.25]{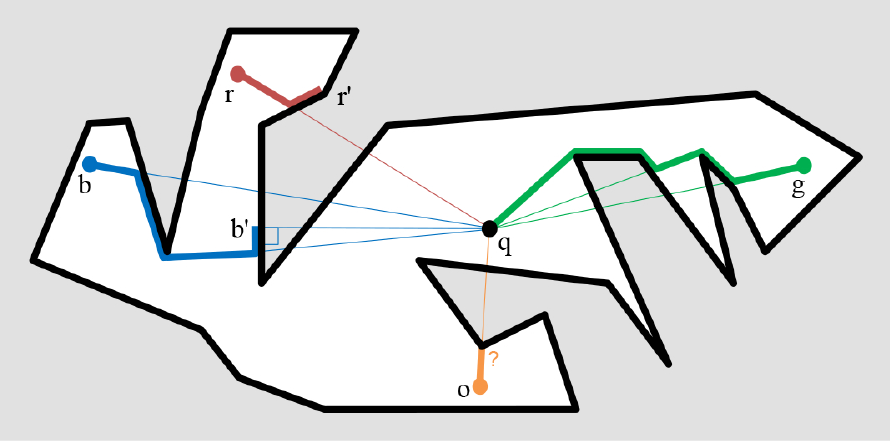} 
			\end{center}
			\caption{
				The movement of points $r, b, g,$ and $o$ under the influence of a beacon at
				$q$. }
			\label{fig:attraction}
		\end{figure}
		We first restrict our attention to polygons.
		Let $p$ be a robot (a mobile point) in a polygon $P$, and $q \in P$ be a
		stationary beacon. We consider the motion of $p$ under the influence of $q$,
		which we call the \emph{attraction path} of $p$ given beacon $q$
		(refer to Figure \ref{fig:attraction}).
		Whenever $p$ can move in a straight
		line towards $q$ inside $P$, then it follows that straight line until it either reaches $q$ or the
		boundary of $P$.
		Whenever $p$ cannot move in a straight line towards $q$ inside $P$, then
		it is on the boundary.
		In this case, it will move along the boundary in the direction that
		decreases its distance to $q$, if such a direction exists.
		The path that $p$ follows may alternate between boundary and
		straight-toward-$q$ sections.
		The figure shows the attraction paths of $r, b, g,$ and $o$ in thick lines,
		with construction lines from $q$ shown in thin lines.
		
		If the attraction path of $p$ given beacon $q$ reaches $q$, then we will
		say that $q$ \emph{attracts} $p$; in the figure, $q$ attracts $g$.
		An attraction path may not reach $q$ for three different reasons.  First,  it
		can become stuck on an edge at a point where the edge is perpendicular to the
		line to $q$, as is the case with $b$ becoming stuck at $b'$ in the figure.
		Second, it can become stuck at a convex vertex with both edges heading away
		from $q$, as is the case with $r$ becoming stuck at $r'$ in the figure.
		Last, a point may start at, or be attracted to, a reflex vertex with both
		edges leading \emph{towards} $q$, as is the case with $o$ in the figure.
		Here the point is not truly stuck, as it may go either direction along the
		boundary.
		In order to resolve the ambiguity here, previous authors have
		adopted a convention that the path always turns to one side or the other
		(say, right) at such reflex vertices \cite{biro2013beacon}.  Here we adopt a
		more conservative approach, saying that the path is \emph{indeterminate} when this happens.
		We will thus be placing our beacons so as to avoid this situation.

		If $q$ attracts $p$, it does not follow that $p$ attracts $q$; for example,
		$g$ does not attract $q$ in the figure.
		This asymmetry of attraction sets it apart from other visibility-type
		relations, which are typically symmetric.
		However, attraction can be placed relative to two known visibility types.
		Firstly, it is a superset of the usual visibility relation: if $p$ and
		$q$ are visible, then $q$ attracts $p$ (and $p$ attracts $q$).
		
		Secondly, in orthogonal polygons (the domain studied here), attraction is a
		subset of the \emph{staircase visibility} relation: if $q$ attracts $p$,
		then $q$ and $p$ are staircase visible.
		(Two points are staircase visible in an orthogonal polygon if there is a path
		$C$ between them in the polygon, composed entirely of horizontal and vertical
		segments, where $C$ is both x-monotone and y-monotone.)
		Staircase visibility is typically not used outside of orthogonal polygons
		and hence the restriction to orthogonal polygons is not onerous.
		
		To see this relation between attraction and staircase visibility, first note
		that attraction paths in orthogonal polygons are always x-monotone and
		y-monotone.  Then consider replacing pieces of the attraction path with
		staircases as suggested in Figure \ref{fig:staircase}---
		the diagonal segments become small-step staircases,
		staying near the attraction segment and therefore in the polygon, and the horizontal and
		vertical segments of the attraction path are left intact in the staircase
		path.
	
		\begin{figure}[htbp] 
			\begin{center}
			    \includegraphics*[scale=1.25]{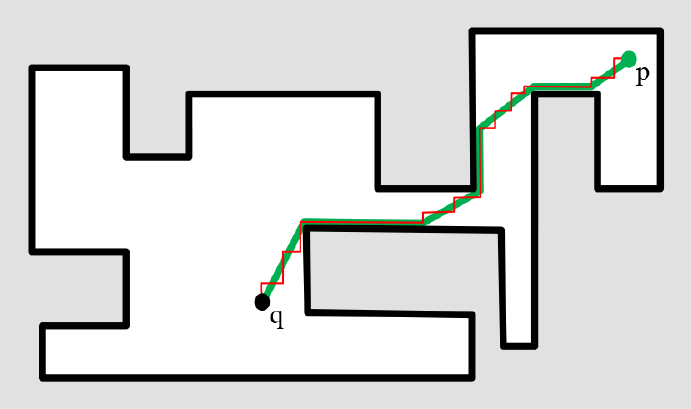} 
			\end{center}
			\caption{
				An attraction path from $p$ to $q$ and a corresponding staircase path.
				}
			\label{fig:staircase}
		\end{figure}
		
	\subsection{Routing segments}
		If $p$ and $q$ are points in a polygon with a beacon routing from $p$ to $q$,
		then by a \emph{routing segment} we mean any maximal section of the
		beacon-routing path during which a point travelling the path is attracted by
		a single beacon (or by the destination point $q$).  If the beacon routing from 
		$p$ to $q$ starts at $p$, proceeds to beacon $b_1$, then to beacon $b_2$, then 
		to $q$, then the routing segments are the part from $p$ to $b_1$, the part
		from $b_1$ to $b_2$, and the part from $b_2$ to $q$.
	
		We will call a routing segment \emph{local} if it is contained in (at most) 
		three rectangles of the decomposition; see Figure \ref{fig:local}. We will
		similarly call a routing path local if all of its segments are local,
		and a routing beacon set local if it supports a local
		routing path between every pair of points in the polygon.
		Our upper bound proof for routing sets of beacons constructs a local routing
		beacon set.		
		
		\begin{figure}[htbp] 
			\begin{center}
			    \includegraphics*[scale=1.25]{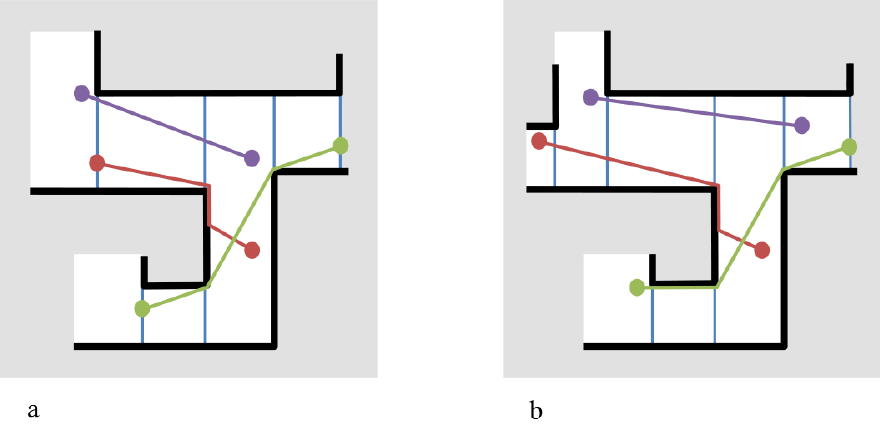} 
			\end{center}
			\caption{
				(a) local paths in the polygon.
				(b) nonlocal paths in the polygon.
			}
			\label{fig:local}
		\end{figure}
	
	\subsection{Decomposition and neighboring rectangles}
		Let $P$ be an orthogonal polygon of $n$ vertices in general position,
		by which we mean that $P$ has no co-vertical or co-horizontal edges.
		One can convert special-position instances to general-position ones
		with the usual perturbation technique, perturbing each edge a symbolic amount
		\emph{into} the polygon.  Moving edges into the polygon avoids creating
		new pairs $p, q$ in the attraction relation.
		
		Construct the \emph{vertical decomposition}
		(also known as the \emph{trapezoidation} \cite{fournier1984triangulating}) of
		$P$ by creating a vertical chord from every reflex vertex (see Figure
		\ref{fig:dualTree}).  We will call these chords the \emph{verticals} of the
		polygon.
		
			Because of our restriction to general position, there are $\frac{n-4}{2}$
		verticals, decomposing the polygon into $\frac{n-2}{2}$ axis-aligned rectangles.
		Each such rectangle has between one and four neighboring rectangles.
		If we form a graph of the neighbor relation on the rectangles, then we have
		the \emph{dual tree} (or \emph{weak dual}) of the decomposition, as shown in
		Figure \ref{fig:dualTree}.
		
		\begin{figure}[htbp] 
			\begin{center}
			    \includegraphics*[scale=1.25]{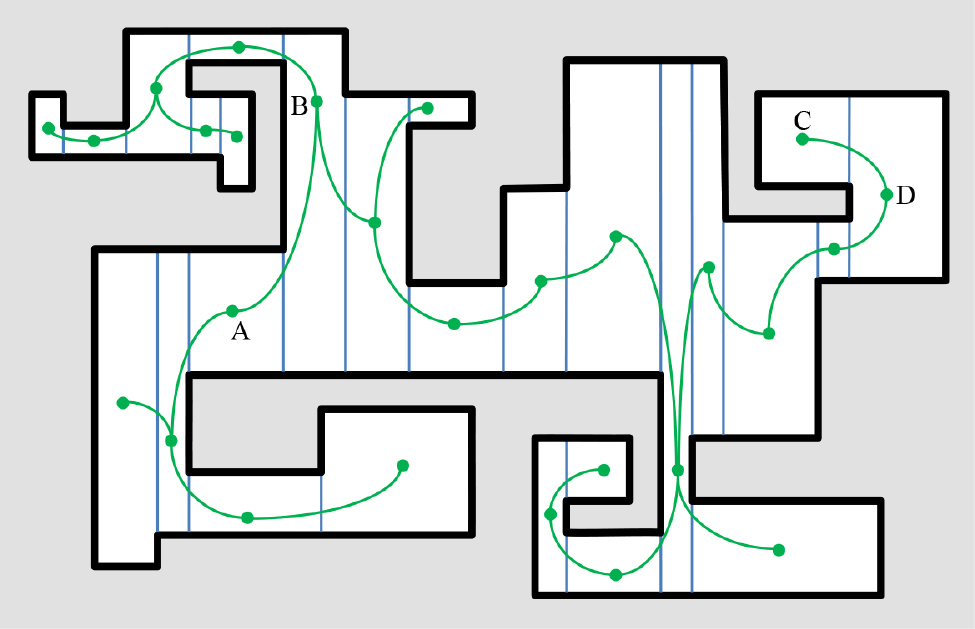} 
			\end{center}
			\caption{The vertical decomposition of a polygon, with its dual tree.}
			\label{fig:dualTree}
		\end{figure}

			We classify the different types of neighbors of a rectangle $R$ in 3 primary
			ways:  \emph{left} vs.\  \emph{right}, depending on the side of $R$ they are
			on; \emph{top} vs.\  \emph{bottom}, depending on whether the neighbor and $R$
			have the same polygon edge along their tops or bottoms; and \emph{short} vs.\ 
			\emph{tall}, depending on whether the neighbor 
			covers a smaller or a larger interval of y-coordinates than $R$ does.
			We combine these classifications: for instance, in Figure \ref{fig:dualTree},
			$A$ is a short bottom left neighbor of $B$, and $D$ is a tall top right neighbor of
			$C$.
		
		\begin{obs}
			If a rectangle $S$ is a is a tall left (or right) neighbor of rectangle $R$,
			then it is the \emph{only} left (or right, respectively) neighbor of $R$.
		\end{obs}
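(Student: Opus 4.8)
The plan is to reduce the observation to the single statement that a tall left neighbor $S$ of $R$ has a $y$-interval containing that of $R$. Granting this, the observation is immediate: the right side of $S$ lies on the vertical line through $R$'s left side and spans $S$'s full $y$-interval, so containment forces this right side to cover the whole left side of $R$. Every point of $R$'s left side then has $S$ directly across from it, leaving no room for a second left neighbor. Since a right neighbor is just a left neighbor after reflecting across a vertical line, it suffices to treat the left case.

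To prove the containment I would first pin down the common boundary of $R$ and $S$. As $S$ is a left neighbor, $R$ and $S$ meet along a vertical contact segment on the line supporting $R$'s left side, and this segment runs along a single vertical of the decomposition. A vertical, by construction, starts at a reflex vertex and runs until it first meets the boundary again at some point $w$. By general position $w$ is not itself a vertex---otherwise it would be co-vertical with the reflex vertex the chord came from---so $w$ lies strictly inside a horizontal polygon edge $h$. Because $R$ is on one side of the contact and $S$ on the other, $h$ bounds both of them; hence $R$ and $S$ share a common top edge or a common bottom edge.

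Now the containment drops out. Suppose $h$ is their common bottom (the common-top case is the mirror image); then $R$ and $S$ have the same bottom $y$-coordinate and differ only at the top. The other end of the contact is the reflex vertex, and it is the upper corner of whichever of $R,S$ has the lower top. Since $S$ is tall---its $y$-interval is longer than $R$'s---and the bottoms agree, it is $S$ that reaches higher; thus $R$ is the rectangle ending at the reflex vertex, so $R$'s left side is exactly this one chord and $R$'s $y$-interval sits inside $S$'s, as required.

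The step I expect to be the real obstacle is identifying $w$ as an ordinary interior point of a horizontal edge rather than a second reflex vertex. This is exactly where general position does the work: two reflex vertices facing each other along one vertical would contribute two distinct co-vertical edges, which general position forbids, and it is their absence that guarantees $R$ and $S$ genuinely share a horizontal edge instead of meeting along a contact pinched between two reflex corners. The remaining bookkeeping---reading off from ``tall'' which rectangle reaches past the other---is then routine.
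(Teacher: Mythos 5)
Your argument is correct, but there is nothing in the paper to compare it against: the statement appears as an unproven Observation, treated as evident from the structure of the vertical decomposition. Your proposal is a legitimate filling-in of that gap, and the mechanism you isolate---the contact lies on a single vertical; by general position the free endpoint $w$ of that vertical lies in the relative interior of a horizontal edge $h$, so the two cells flanking the vertical share $h$ as a common top or common bottom; tallness then forces $R$ to be the cell whose other horizontal side passes through the reflex vertex, so $R$'s $y$-interval is exactly the vertical's span and sits inside $S$'s---is precisely the implicit structural reasoning behind the paper's tall/solo/paired trichotomy and its Observation 2. Two compressed steps deserve a word. First, the claim that ``$h$ bounds both of them'' needs the fact that both cells extend all the way up (or down) to $w$, i.e.\ that the contact is the whole chord and is not interrupted by another vertex on its relative interior; this is again general position, and you do flag it in your closing paragraph, though the phrase ``co-vertical with the reflex vertex'' should properly read ``co-vertical with the vertical edge incident to that reflex vertex,'' since co-verticality is a relation between edges, not vertices. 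Second, in reading off from ``tall'' which rectangle reaches past the other, you implicitly use that the two $y$-intervals cannot be equal; this also follows from general position, as equal spans with a shared bottom would put two distinct horizontal edges at the same height, i.e.\ co-horizontal edges. With those two remarks made explicit, the proof is complete.
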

		
		\begin{obs}
			If a rectangle $S$ is a short left (or right) neighbor of rectangle $R$, then
			it is either the \emph{only} left (or right, respectively) neighbor of $R$, 
			or there is one other short left (or right, respectively) neighbor of $R$.
		\end{obs}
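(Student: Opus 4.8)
The plan is to prove the statement for left neighbors (the right case is identical) by showing that $R$ has at most two left neighbors, and that whenever it has two, both are short. Throughout I would use the two general-position hypotheses \emph{separately}: no two vertical edges of $P$ share an $x$-coordinate, and no two horizontal edges share a $y$-coordinate. The preceding observation already disposes of tall neighbors, so the work is to bound the number of neighbors and to force shortness in the two-neighbor case.

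First I would fix the local picture. Since $R$ is a cell of the vertical decomposition, its left side is a single vertical segment $\ell$ on the line $x=x_0$. Every point of $\ell$ lies either on a vertical edge of $P$ (so the exterior is immediately to its left and there is no neighbor there) or on a vertical chord (so the interior is to its left). Hence the left neighbors of $R$ correspond to the maximal chord-subsegments of $\ell$, and it suffices to bound how many such subsegments there are and to show each abuts exactly one neighbor.

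Next I would establish two counting claims. \emph{Each maximal chord-subsegment abuts exactly one neighbor}: if two neighbors abutted the same chord-subsegment they would be stacked vertically, so the horizontal line where they meet would be simultaneously the top of the lower cell and the bottom of the upper cell. As cell tops and bottoms lie on horizontal edges of $P$, this produces two horizontal edges at the same height, forbidden by general position; and the alternative that it is one edge is absurd, since a polygon edge has interior on only one side. \emph{There are at most two maximal chord-subsegments}: the boundary part of $\ell$ lies on vertical edges of $P$ at $x_0$, and general position permits at most one such edge, so the boundary part is a single (possibly empty) subsegment, which cuts $\ell$ into at most two chord-subsegments. Together these give at most two left neighbors.

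The step I expect to be most delicate is reconciling the two-neighbor case with general position, since the two reflex vertices that separate the neighbors both sit on $x=x_0$. This is permissible precisely because they are the two endpoints of the \emph{single} vertical edge of $P$ identified above, not endpoints of two distinct co-vertical edges; this is the one place the hypotheses must be read carefully. Finally I would close the classification: when two neighbors are present, the interposed vertical polygon edge occupies a middle band of $R$'s height, so each neighbor spans a proper subinterval of $R$'s $y$-range and is therefore short. By the preceding observation a tall neighbor would fill all of $\ell$ and admit no companion, so a short left neighbor $S$ is either the only left neighbor or is accompanied by exactly one other, which must likewise be short.
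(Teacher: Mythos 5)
Your argument is correct, but there is nothing in the paper to compare it against: the statement is one of the paper's unproved Observations, presented as immediate from the structure of the vertical decomposition, so you have in effect supplied the missing justification rather than replicated a proof. Your write-up identifies exactly the right ingredients. The left side $\ell$ of $R$ at $x = x_0$ decomposes into boundary points (lying on vertical edges of $P$) and chord points; general position allows at most one vertical edge of $P$ at $x_0$, so the boundary portion is a single subsegment and $\ell$ carries at most two maximal chord-subsegments; since in a vertical decomposition the top and bottom of every cell lie on horizontal edges of $P$, two cells stacked against one chord-subsegment would force two co-horizontal edges (or one edge with interior on both sides), which is impossible; and in the two-neighbor case the interposed vertical edge of $P$, whose endpoints are the two reflex vertices on $x_0$, confines each neighbor to a proper subinterval of $R$'s $y$-range, making both short. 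Your observation that the two reflex vertices are the two endpoints of the \emph{single} vertical edge at $x_0$ (rather than endpoints of two distinct co-vertical edges) is precisely the point where the general-position hypothesis must be read carefully, and you handle it correctly. Two tiny points you gloss over, both easily repaired: that the upper (resp.\ lower) neighbor cannot extend above (resp.\ below) $R$'s $y$-range follows because otherwise the top-left (resp.\ bottom-left) corner of $R$ would be a vertex of $P$ forcing a second vertical edge at $x_0$; and that the two chord-subsegments abut \emph{distinct} neighbors follows since a single axis-aligned rectangular cell spanning both would have to contain interior points of $P$ immediately left of the interposed boundary edge, where the exterior lies. With those remarks your argument is a complete and correct proof of the observation, at the same level of elementarity the paper implicitly intends.
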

		
		If a short neighbor is the only neighbor on a side (left or right) of a
		rectangle, then we call it a \emph{solo} neighbor.  If there is another short
		neighbor on the same side, we call it a \emph{paired} neighbor.
		We generally divide the different cases of a neighboring rectangle's type into
		into tall, solo, and paired.
		Figure \ref{fig:topRight} shows these three types of neighbors.
		 
		\begin{figure}[htbp] 
			\begin{center}
				\includegraphics*[scale=1.25]{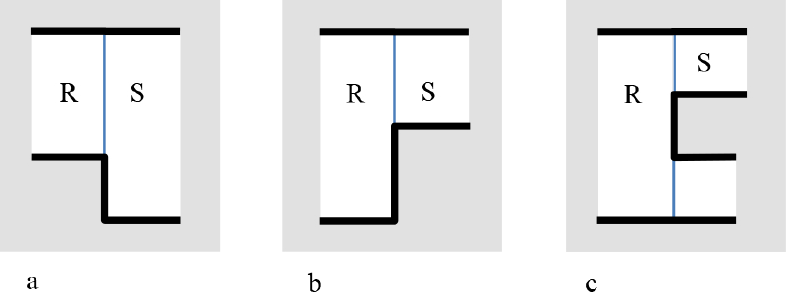} 
			\end{center}
			\caption{
				The three types of top right neighbor $S$ of a rectangle $R$:
				(a) tall, (b) solo, (c) paired. 
			}
			\label{fig:topRight} 
		\end{figure}

	\subsection{Beacon coverage}\label{sec:coverage}
	
		If a point $p$ in a polygon attracts a point $q$, and $q$ attracts $p$, then we
		say that $p$ \emph{covers} $q$.
		Covering amongst points is thus the symmetric subset of the attraction
		relation.
		Using covering allows us to use the same beacon for routing to and from a
		particular point.  If $p$ and $q$ are visible, then $p$ covers $q$, but the
		converse is not necessarily true.

		If $p$ covers every point in some
		region $Q$, then we say that $p$ \emph{covers} $Q$.
		And if there is a set of points $B$ in the polygon such that for every point
		$q$ in region $Q$, there is a $b$ in $B$ that attracts $q$, and a $b'$ in $B$
		that $q$ attracts, then we say that $B$ \emph{covers} $Q$.
		Typically, the point set $B$ will be our set of beacons, and $Q$ will be our
		polygon, or a subpolygon of it.  
		
		Note that this last notion of coverage is \emph{not} ``\ldots there is a $b$
		in $B$ such that $b$ covers $q$''; our notion is more permissive.  We will
		need this permissivity in our proof when we repair trapped paths.
		
		We add the adverb \emph{locally} to
		either type of coverage if that coverage uses only local path
		segments.
		
		To build a routing set of beacons we will mainly use individual beacons to
		cover different regions; the regions are rectangles and their unions.
		So, we start with an investigation of which rectangles of
		the decomposition a beacon covers.
		
		\begin{obs}\label{obs:contained}
		A beacon $b$ locally covers any rectangle of the decomposition it is in.
		\end{obs}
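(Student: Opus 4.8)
The plan is to reduce covering inside a single rectangle to ordinary visibility, and then to invoke the two facts already established: that every rectangle of the decomposition is a convex subregion of $P$, and that visibility implies (mutual) attraction. This makes the observation almost immediate; the only work is checking locality and the boundary cases.

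First I would note that every rectangle $R$ of the vertical decomposition is convex, being axis-aligned. Hence for any two points $b, q \in R$ the straight segment $\overline{bq}$ lies entirely in $R$, and since $R \subseteq P$, it lies in $P$; thus $b$ and $q$ are visible. By the visibility--attraction containment noted earlier (visibility implies that $q$ attracts $p$ and $p$ attracts $q$), we get that $b$ attracts $q$ and $q$ attracts $b$. By the definition of covering as the symmetric part of attraction, $b$ covers $q$. As $q$ was an arbitrary point of $R$, the beacon $b$ covers $R$.

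For locality, I would observe that along each of these two attraction paths the robot moves in a straight line toward the beacon the entire way---precisely because the beacon is visible---so the attraction path from $q$ to $b$ (respectively from $b$ to $q$) is exactly the segment $\overline{bq}$. This segment is contained in the single rectangle $R$, hence in at most three rectangles of the decomposition; therefore the covering is local.

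The one point needing a moment's care---the main, if minor, obstacle---is the collection of failure modes flagged for attraction paths: stalling at an edge perpendicular to the line to the beacon, getting stuck at a convex vertex, or reaching an indeterminate reflex vertex. One must check none of these arises here. They cannot, because straight-line motion toward a \emph{visible} beacon reaches the beacon without ever following the boundary; even when $b$ or $q$ lies on $\partial R$ (for example on a vertical chord, which is interior to $P$), the segment $\overline{bq}$ stays in the free space and the robot arrives at the beacon with no ambiguity. Hence the covering is genuine and local, completing the proof.
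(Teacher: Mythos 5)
Your proof is correct and matches the reasoning the paper leaves implicit: the paper states this as an unproved observation, and the intended justification is exactly yours---a rectangle is convex, so $b$ and any $q\in R$ are mutually visible, visibility implies mutual attraction along the straight segment (as noted in the paper's Section 2.1), and that segment lies in the single rectangle $R$, making the coverage local. Your extra care about boundary/indeterminacy cases is sound but unnecessary, since a straight-line path to a visible beacon never invokes the boundary rules.
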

		Note that if $b$ is on a vertical then it will be in two such rectangles.
		
		Let the \emph{rectangular hull} of a pointset $A$, denoted \RH{A}, be the
		smallest axis-aligned rectangle that is a superset of $A$.

		\begin{obs}\label{obs:hullcontained}
		Let $P$ be a polygon containing beacon $b$ and rectangle $R$.
		If \RH{R \cup \{b\}}\  is a subset of $P$,
		then $b$ covers $R$ in $P$. 
		\end{obs}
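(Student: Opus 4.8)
The plan is to reduce the claim to ordinary visibility, exploiting the fact that an axis-aligned rectangle is convex. Recall from the earlier discussion that visibility is a subset of attraction in both directions: if two points are visible (the open segment between them stays in $P$), then each attracts the other. Recall also that $b$ covering $R$ means that for \emph{every} point $q \in R$, the beacon $b$ attracts $q$ and $q$ attracts $b$. So it suffices to show that every $q \in R$ is visible from $b$ inside $P$.

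First I would observe that $\RH{R \cup \{b\}}$ is, by definition, an axis-aligned rectangle, and that it contains both $R$ and $b$: monotonicity of the rectangular hull gives $R \subseteq R \cup \{b\} \subseteq \RH{R \cup \{b\}}$, and of course $b \in \RH{R \cup \{b\}}$. Fix an arbitrary point $q \in R$. Then both $q$ and $b$ lie in $\RH{R \cup \{b\}}$, which is convex, so the straight segment $\overline{qb}$ is contained in $\RH{R \cup \{b\}}$. By hypothesis $\RH{R \cup \{b\}} \subseteq P$, and hence $\overline{qb} \subseteq P$; that is, $q$ and $b$ are visible in $P$.

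Next I would invoke the visibility-attraction relationship to conclude that $b$ attracts $q$ and $q$ attracts $b$, so $b$ covers $q$. Since $q \in R$ was arbitrary, $b$ covers every point of $R$, i.e.\ $b$ covers $R$ in $P$, as claimed. Note that visibility here also sidesteps the indeterminacy concern raised earlier: the attraction path between two mutually visible points is simply the connecting segment, so it reaches its target directly and never stalls on a perpendicular edge, a convex vertex, or an ambiguous reflex vertex.

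The argument is short, and the only real content is the convexity reduction; the ``obstacle'' is less a mathematical difficulty than a matter of bookkeeping, namely confirming that $R \subseteq \RH{R \cup \{b\}}$ so that \emph{all} of $R$ (not merely some bounding points) is swept in, and that the containment of the hull in $P$ transfers to containment of each segment $\overline{qb}$. Both follow immediately from the definition of the rectangular hull together with the convexity of a rectangle, so I would keep this observation's proof to a single short paragraph.
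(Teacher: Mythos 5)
Your proof is correct and matches the paper's intended justification exactly: the paper states this as an unproved observation, and the evident reasoning behind it is precisely your argument---$\RH{R \cup \{b\}}$ is a convex set containing $b$ and all of $R$, so each segment $\overline{qb}$ lies in $P$, making $b$ and $q$ mutually visible, and visibility implies mutual attraction (hence covering), as the paper notes in its discussion of attraction versus visibility. Your closing remark that visibility sidesteps the indeterminacy issue is a nice touch, but nothing beyond the convexity reduction is needed.
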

		
		The lemmas in the remainder of this section establish some beacon placements
		that cover rectangles other than their containing rectangles. 
		
		For the first lemma, we need some definitions. When $S$ is a short neighbor of
		$R$, we call the vertex of $S$ horizontally adjacent to the shared reflex vertex (of
		$R$ and $S$) the \emph{curl vertex of $S$ with respect to $R$}, and denote this
		vertex \curl{S}{R}. 
		(See Figure \ref{fig:soloattraction}c, where $q$ is the curl vertex of $S$ with respect to
		$R$).
		We shorten this phrase if $R$ and/or $S$ is clear or implied.
		
		If a curl vertex of a rectangle is reflex (see Figure \ref{fig:soloattraction}c),
		then it does not necessarily have routing paths similar to other points in
		its neighborhood in $S$.
		Therefore, when dealing with $S$, we will sometimes need to not include the
		curl vertex with it.  We thus define
		\[
			S^* = \left \{ 
					\begin{array}{ll}
						S \setminus \{ \curl{S}{R} \} & 
							\mbox{if \curl{S}{R}\  is a reflex vertex of the polygon } \\
						S & \mbox{otherwise}
					\end{array}
				 \right.
		\]
		
		Finally, if $R$ is a rectangle of the vertical decomposition, and $S$ is a
		side of $R$, then we refer to the intersection of $S$ with the boundary of the polygon as
		a \emph{wall}.
	
	We are now ready to state the first lemma.
	
	\begin{lemma}\label{lem:solocovered}
		If rectangle $S$ is a solo neighbor of rectangle $R$ in the decomposition of
		a polygon, then any point of $R$ locally covers $S^*$, and any point of $S^*$
		locally covers $R$.
	\end{lemma}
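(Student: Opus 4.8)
The plan is to reduce to a single canonical configuration by symmetry and then trace attraction paths explicitly inside the two-rectangle region $R \cup S$. Reflecting horizontally and/or vertically, I may assume $S$ is a solo top-right neighbor of $R$; write $R = [x_0,x_1]\times[y_0,y_2]$ and $S = [x_1,x_3]\times[y_1,y_2]$ with $y_0 < y_1 < y_2$, so the shared reflex vertex is $(x_1,y_1)$ and $\curl{S}{R} = (x_3,y_1)$. The first step is to record the boundary structure forced by ``solo'': since $S$ is the only right neighbor of $R$, the part of $R$'s right side below $y_1$ cannot be a chord (a chord would create a second right neighbor), so it is a wall; and since every horizontal edge of a decomposition rectangle lies on $\partial P$ (we only cut with vertical chords), the bottom edge of $S$ at $y=y_1$ is a wall as well. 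Hence $R \cup S$ is an L-shaped region whose only reflex vertex is $(x_1,y_1)$, bounded near its notch by these two walls, and the attraction dynamics will never carry a robot across either wall into the missing corner $[x_1,x_3]\times[y_0,y_1]$. Because covering is symmetric as a relation on points, both clauses of the lemma amount to the single statement that every $r \in R$ and every $s \in S^*$ attract each other along paths confined to $R\cup S$ (hence local, using only two rectangles).

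Second, I would dispatch the easy case $r_y \ge y_1$. Here $\RH{\{r,s\}} \subseteq [x_0,x_3]\times[y_1,y_2] \subseteq R\cup S \subseteq P$, since $r_x \le x_1 \le s_x$ and both $y$-coordinates lie in $[y_1,y_2]$. Thus the segment $rs$ lies in $P$ and the two points see each other directly; this is the point-version of Observation~\ref{obs:hullcontained}, and the path meets only two rectangles.

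Third --- and this is the technical heart --- I would treat $r_y < y_1$ by tracing the slide. For ``$s$ attracts $r$'' (robot at $r$, beacon $s$), the robot moves toward $s$ and either crosses $x=x_1$ at height $\ge y_1$, clearing the reflex vertex and running straight into the convex rectangle $\overline{S}$ to $s$, or it strikes the wall $x=x_1$ below $y_1$, slides up (moving up decreases its distance to the strictly higher beacon) to $(x_1,y_1)$, and then runs straight to $s$. For ``$r$ attracts $s$'' (robot at $s$, beacon $r$) the mirror argument applies: the robot either clears the vertex or strikes the wall $y=y_1$ and slides left to $(x_1,y_1)$, then runs straight into $\overline{R}$ to $r$. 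In every sub-case the only reflex vertex met is $(x_1,y_1)$, and I must confirm it is determinate: at $(x_1,y_1)$ the robot can always advance straight toward its beacon into the interior of the adjacent rectangle, so the indeterminate ``both incident edges lead toward the beacon'' situation never occurs there.

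The main obstacle, and the reason for passing from $S$ to $S^*$, is the curl vertex. I would show that $\curl{S}{R}=(x_3,y_1)$ is the unique point at which a robot may be \emph{started} indeterminately: when it is a reflex vertex its missing quadrant is to the southwest, so both of its incident edges point toward any beacon $r$ with $r_x < x_3$ and $r_y < y_1$ --- exactly the forbidden configuration. A complementary check --- that the remaining corners of $S$, in particular the top-right corner, are never indeterminate because $r$ lies weakly to their southwest with $r_x \le x_1 < x_3$ --- confirms that deleting only $\curl{S}{R}$ when it is reflex is both necessary and sufficient, which is precisely the definition of $S^*$. Finally I would observe that reflecting the canonical configuration recovers the solo bottom, solo left, and solo bottom-left cases verbatim.
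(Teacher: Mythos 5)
Your proposal is correct and follows essentially the same route as the paper's proof: reduce to one orientation by symmetry, dispatch the mutually visible case, trace the attraction path into one of the two notch walls incident to the shared reflex vertex and slide to that vertex, and exclude exactly the curl vertex \curl{S}{R}\ when it is reflex because the pull direction then points into its missing quadrant, making the path indeterminate. The only differences are cosmetic: you work in explicit coordinates and spell out details the paper leaves implicit (that ``solo'' forces the notch edges to be walls, and that the shared reflex vertex is always determinate), which strengthens rather than changes the argument.
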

	
	\begin{figure}[htbp] 
		\begin{center}
		    \includegraphics*[scale=1.25]{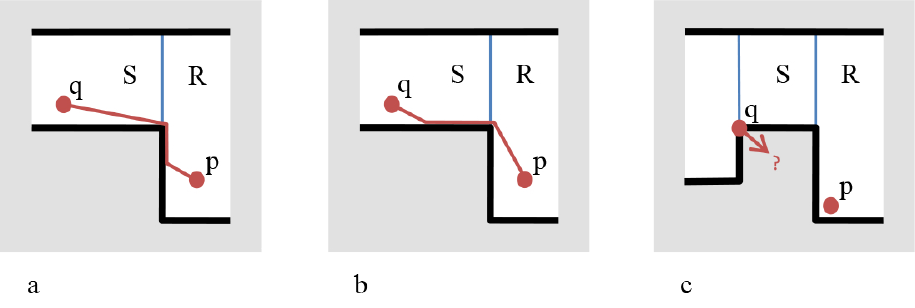} 
		\end{center}
		\caption{
			(a) $p$ is attracted into the left wall of $R$.
			(b) $q$ is attracted into the bottom wall of $S$.
			(c) $q = \curl{S}{R}$ is reflex; some $p$ will give indeterminate results.
		}
		\label{fig:soloattraction}
	\end{figure}
	
	\begin{proof}
		Let $p$ and $q$ be arbitrary points in $R$ and $S$, respectively, and without
		loss of generality, let $S$ be an upper-left neighbor of $R$.
		If $p$ and $q$ are visible, then they mutually attract along their line of
		visibility.
		  
		If $p$ and $q$ are not visible, consider trying to attract $p$ to $q$ by
		activating a beacon at $q$.
		The point will be pulled into the left wall of $R$, and then up along it; once
		it reaches the reflex vertex, it procedes directly to $q$.
		This is illustrated in Figure \ref{fig:soloattraction}a.
		
		Now consider trying to attract $q$ to $p$ by activating the beacon at $p$.
		If $q$ is not the curl vertex, then either it will be pulled into the bottom
		wall of $S$ to the right of the curl vertex (Figure
		\ref{fig:soloattraction}b), or it starts on the bottom wall of $S$ right of
		the curl vertex.  Thereafter it is pulled rightward on that bottom wall until
		it reaches the reflex vertex, where it procedes directly to $p$.
		
		If $q$ is the curl vertex then there is
		the possibility that the vector from $q$ to $p$ points outside of the
		polygon (See Figure \ref{fig:soloattraction}c).  
		Now, $q$ is on one or two edges of the polygon.
		If $q$ is on one edge, it is the edge on the bottom of $S$, and $S$'s left
		neighbor is a bottom neighbor.
		If $q$ is on two edges, forming a convex vertex, then a beacon at $p$
		unambiguously pulls $q$ along the bottom of $S$.
		In either of these cases, the path from $q$ proceeds rightward to
		the reflex vertex and directly to $p$ from there, as was the case with
		all of the other points of $S$.
		
		However, if $q$ is on two edges which form a reflex vertex,
		then the path of attraction is indeterminate; the point could be
		pulled horizontally or vertically.
		In this situation, then, $q$ does not cover $R$.  The lemma follows.
	\end{proof}
	
		We will call a six-sided orthogonal polygon (such as $R \cup S$ in the
		previous lemma) an \emph{L-shaped} polygon.
		Note that the proof above depends only on two edges of the L-shaped polygon
		being polygon boundary: the two edges incident on the reflex vertex.
		
	\begin{lemma}\label{lem:sololeafcovered}
		Let $S$ be a leaf rectangle that is a solo neighbor of rectangle $R$ in the
		decomposition of a polygon $P$, and $b$ be a beacon such that 
		$\RH{R \cup \{b\}} \subset P$.  Then $b$ covers $S$ in $P$.
	\end{lemma}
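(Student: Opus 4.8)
The plan is to split ``$b$ covers $S$'' into its two attraction statements and reduce each to showing that a single attraction path \emph{reaches $R$}, after which the already-established coverage of $R$ finishes the route. As in Lemma~\ref{lem:solocovered}, take $S$ to be (w.l.o.g.) an upper-left neighbor of $R$, let $v$ be the shared reflex vertex, and set coordinates so that the shared vertical is $x=0$, with $R$ to the right, $S=[-s,0]\times[a,h]$, and $v=(0,a)$. First I would record that, since $S$ is a \emph{leaf}, its left and bottom sides are both walls, so the curl vertex $\curl{S}{R}$ (the corner of $S$ horizontally adjacent to $v$) is \emph{convex}; hence $S^\ast=S$ and Lemma~\ref{lem:solocovered} gives coverage with nothing excised. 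The two tools are then: (i) by Observation~\ref{obs:hullcontained}, $\RH{R\cup\{b\}}\subset P$ gives that $b$ covers $R$, so to prove $b$ attracts $q\in S$ it suffices to show the attraction path of $q$ under beacon $b$ reaches $R$; and (ii) by Lemma~\ref{lem:solocovered} every point of $S=S^\ast$ covers $R$, so beacon $q$ attracts all of $R$, and to prove $q$ attracts $b$ it suffices to show the attraction path of $b$ under beacon $q$ reaches $R$.

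Let $R^+=\RH{R\cup\{b\}}\subset P$, a rectangle containing both $R$ and $b$, so any segment between a point of $R$ and $b$ lies in $P$. The first real step is to observe that the hull condition forces $b$ to lie weakly to the right of the shared vertical, i.e.\ $b_x\ge 0$: if $b_x<0$ then $R^+$ would contain the strip just left of $x=0$ lying below (or above) $S$, which is exterior to $P$, contradicting $R^+\subset P$. This $b_x\ge 0$ is exactly what prevents an approaching robot from sticking on a wall of $S$ perpendicular to its approach, since the walls of $S$ lie at $x\le 0$. With $b_x\ge 0$ the left edge of $R^+$ lies on $x=0$; and if moreover $b_y\in[a,h]$ then in fact $\RH{S\cup\{b\}}\subset P$, so Observation~\ref{obs:hullcontained} gives $b$ covers $S$ immediately. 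Thus the interesting cases are $b_y>h$ and $b_y<a$.

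For the direction $q\to b$, I would use that the attraction path of $q$ is $x$-monotone toward $b_x\ge 0$, hence moves rightward and must meet the line $x=0$. If it meets $x=0$ at a height in $[a,h]$ it crosses the shared edge into $R$; otherwise (when $b_y>h$ or $b_y<a$) it first runs into the top or bottom wall of $S$ and, because $b_x\ge 0$, slides along that wall toward the corner $(0,h)$ or the reflex vertex $v=(0,a)$, where it enters $R$. In every subcase the path reaches a boundary point of $R$, and since $b$ covers $R$ the robot then reaches $b$.

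The direction $b\to q$ is symmetric but is where the argument needs the most care. Here the attraction path of $b$ under beacon $q$ is $y$-monotone toward $q_y\in[a,h]$, so its $y$-coordinate stays at least $q_y\ge a$, and it is $x$-monotone toward $q_x\le 0$, so it must reach $x=0$. I would argue that the first contact with $x=0$ occurs at a height in $[a,h]$ (giving a point of $\partial R$), or—when $b_y>h$ and the path meets the wall portion of $x=0$ lying above $S$—that the path slides down that wall to the corner $(0,h)\in\partial R$; dually for $b_y<a$ at the reflex vertex $v$. The main obstacle is precisely this last configuration: showing that a robot starting above $R$ cannot escape leftward above $S$ before descending to $(0,h)$ (and dually below $R$). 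The resolution rests on the monotonicity bound $y\ge q_y\ge a$ together with the fact that immediately above the top wall of $S$, and below the bottom wall of $R$, the polygon is locally exterior, so the only available motion there is the slide to the corner. A secondary point to dispatch is ruling out indeterminate behavior at $v$, which is handled by the convexity of the curl vertex established at the outset and by $b_x\ge 0$ preventing $b$ from lying directly across a wall of $S$.
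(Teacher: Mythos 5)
Your proposal is correct, but it is considerably more self-contained than the paper's argument, which disposes of the lemma in three sentences: since $S$ is a leaf it has no reflex curl vertex, so $S^* = S$; and then the proof of Lemma~\ref{lem:solocovered} is simply rerun with $\RH{R \cup \{b\}}$ playing the role of $R$, which is legitimate because (as remarked just after that lemma) the argument there uses only that the two edges incident to the reflex vertex of the L-shaped region are polygon boundary---and here they are, $S$ being a solo neighbor. You instead keep Lemma~\ref{lem:solocovered} and Observation~\ref{obs:hullcontained} as black boxes (giving that any $q \in S$ covers $R$ and that $b$ covers $R$) and splice them together with a new explicit path-trace showing that each of the two attraction paths reaches the closed rectangle $R$, after which concatenation of attraction paths under a fixed beacon finishes. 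The two genuinely new ingredients in your version---the derivation of $b_x \ge 0$ from the hull hypothesis (using that $R$'s left wall below the shared reflex vertex $v$ is boundary because $S$ is solo), and the reduce-to-reaching-$R$ concatenation step---are both sound, and I verified your case analysis covers every configuration that can actually occur. Your route buys an explicit statement of what $\RH{R \cup \{b\}} \subset P$ forces geometrically; the paper's route buys brevity by recycling the earlier proof wholesale, which is also what lets it isolate exactly the condition that must be added in Lemma~\ref{lem:tallleafcovered}.

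Two blemishes, neither fatal. First, your case $b_y > h$ is vacuous: since $S$ is an \emph{upper} neighbor, $R$ and $S$ share the polygon edge along their tops at $y = h$, and every rectangle's top and bottom sides lie on polygon edges, so $\RH{R \cup \{b\}} \subset P$ already forces $b_y \in [\beta, h]$ (with $\beta$ the bottom of $R$). Consequently there is no ``wall portion of $x=0$ lying above $S$,'' no robot ``starting above $R$,'' and the paragraph you flag as the main obstacle (escaping leftward above $S$) concerns a configuration that cannot arise. Second, the assertion that the path of $b$ under beacon $q$ ``stays at least $q_y$'' is backwards in the one subcase that does arise ($b_y < a \le q_y$); the correct invariant is that the path's $y$-coordinate stays between $b_y$ and $q_y$. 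Your actual handling of the real subcase---the path hits the left wall of $R$ strictly below $v$ and slides up to $v \in R$---is right and does not depend on the misstated bound, so the proof stands once these are cleaned up.
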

	\begin{proof}
	The requirement that $S$ is a leaf removes the need for using $S^*$ rather than
	$S$, as leaves do not have reflex curl vertices. Otherwise the situation is the
	same as in the proof of Lemma \ref{lem:solocovered}, with $\RH{R \cup \{b\}}$
	playing the role of $R$ in that proof.  Because the reflex vertex of the
	L-shaped polygon $\RH{R \cup \{b\}} \cup S$ has both incident edges contained
	in the boundary of $P$, that proof applies.
	\end{proof}

	\begin{lemma}\label{lem:tallleafcovered}
		Let $S$ be a leaf rectangle that is a tall neighbor of rectangle $R$ in the
		decomposition of a polygon $P$, and $b$ be a beacon such that 
		$\RH{R \cup \{b\}} \subset P$.  If the two edges of $\RH{R \cup \{b\}} \cup
		S$ incident to its reflex vertex are contained in the boundary of $P$, then
		$b$ covers $S$ in $P$.
	\end{lemma}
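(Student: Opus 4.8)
The plan is to reduce this to the L-shaped polygon argument of Lemma \ref{lem:solocovered}, following the template of Lemma \ref{lem:sololeafcovered} but invoking the extra hypothesis to pin down the reflex vertex. Write $R' = \RH{R \cup \{b\}}$; by Observation \ref{obs:hullcontained}, $b$ covers $R'$, and in particular $b \in R' \subset P$. Without loss of generality I would orient $S$ as a tall top-right neighbor of $R$, so that $R$ and $S$ share the polygon's top edge, $S$ protrudes below $R$, and the shared vertical chord descends to the reflex vertex at its bottom. First I would establish that $R' \cup S$ is again an L-shaped polygon and locate its reflex vertex. Since the common top edge is polygon boundary and $R' \subset P$, the top of $R'$ cannot rise above that edge, so the tops of $R'$ and $S$ align; likewise $R'$ cannot cross the boundary walls bounding the leaf $S$ on its far sides. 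Whether or not $R'$ widens $R$ leftward or rightward, the union then splits into an upper strip spanning both pieces and a lower strip consisting of the protruding part of $S$, which is exactly an L-shape, with reflex vertex at the inner corner where the bottom edge of $R'$ meets the descending side of $S$.

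The crucial point, and the reason this lemma carries a hypothesis that Lemma \ref{lem:sololeafcovered} did not need, is that enlarging $R$ to $R'$ can relocate the reflex vertex downward: if $b$ lies below $R$, the bottom of $R'$ dips below the bottom of $R$, and the stretch of the shared chord between the two bottoms is polygon \emph{interior} rather than boundary. In that case the side of the L-shape descending from the reflex vertex would fail to be a wall, and the sliding behavior underlying Lemma \ref{lem:solocovered} could break. The hypothesis that the two edges of $R' \cup S$ incident to its reflex vertex lie in $\partial P$ is precisely what excludes this: it forces the reflex vertex to coincide with a genuine boundary corner and guarantees that both walls along which points are pulled are actually present.

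With the L-shape and its two boundary reflex edges in hand, the covering follows exactly as in Lemma \ref{lem:solocovered}. As noted in the remark after that lemma, the attraction argument depends only on the two edges incident to the reflex vertex being polygon boundary, and it shows that any point of one arm mutually attracts any point of the other arm. Here $b$ lies in the arm $R'$ and $S$ is the other arm; because $S$ is a leaf it has no reflex curl vertex, so $S^* = S$ and there is no excluded point. Hence $b$ attracts every point of $S$ and is attracted by every point of $S$, i.e.\ $b$ covers $S$ in $P$. I expect the main obstacle to be the shape-verification step: one must confirm that across all admissible positions of $b$ the union $R' \cup S$ is a genuine six-sided L-shape whose reflex vertex sits where the hypothesis places it, rather than a more intricate polygon that would introduce extra reflex vertices and spoil the simple sliding behavior.
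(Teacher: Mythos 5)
Your proposal is correct and takes essentially the same route as the paper: the paper's proof is precisely the one-line reduction you spell out---rerun Lemma \ref{lem:sololeafcovered} with $\RH{R \cup \{b\}}$ playing the role of $R$ (leaf $\Rightarrow$ no reflex curl vertex, so $S^* = S$), with the explicit hypothesis supplying the boundary condition on the two reflex-incident edges that the remark after Lemma \ref{lem:solocovered} requires, since such an edge ``can extend past $R$ towards $b$.'' One quibble that does not affect validity: because the horizontal sides of decomposition rectangles lie on polygon edges, $\RH{R \cup \{b\}} \subset P$ already forces the hull to extend $R$ only horizontally (your scenario of $b$ below $R$ dropping the reflex vertex down the chord is impossible, as the hull would exit $P$), so the degeneracy the hypothesis actually guards against is the horizontal reflex-incident edge of the hull running past $R$ toward $b$ over interior points of $P$.
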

	\begin{proof}
		Same as the previous lemma, except that a reflex-incident edge can extend
		past $R$ towards $b$, so the condition on these edges must be made explicit.
	\end{proof}
	
	Next we look at a rectangle with paired neighbors.
	
	Let $R$ have paired neighbors on the left; we define the 
	\emph{left center} of $R$ as the closed
	rectangle that is the full width of $R$ and has the vertical span of 
	the polygon edge on the left of $R$ (as illustrated in Figure
	\ref{fig:pairedattraction}a).
	We furthermore let the \emph{modified left center} of $R$
	be the left center with its two left corners removed.
	
	We similarly define the \emph{right center} and \emph{modified right center} of
	$R$, if $R$ has paired neighbors on the right.
	
	\begin{figure}[htbp] 
		\begin{center}
		    \includegraphics*[scale=1.25]{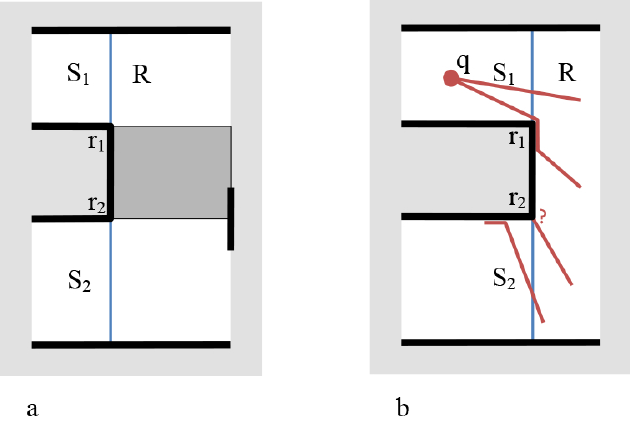} 
		\end{center}
		\caption{
			(a) the left center of $R$ is shown shaded.
			(b) If $p$ is attracted to the left side of $R$ at or above $r_1$, it
			    proceeds into $S_1$ (and directly to $q$). If $p$ is attracted to the
			    left wall of $R$ between $r_2$ and $r_1$, it is pulled up the wall and at
			    $r_1$ will enter $S_1$ and then will reach $q$.
				If $p$ is attracted to the left wall at the point $r_2$, the behavior is
				indeterminate.
				If $p$ is attracted to the left side below $r_2$, it proceeds into $S_2$ and
				does not reach $q$.
		}
		\label{fig:pairedattraction}
	\end{figure}
	
	\begin{lemma}\label{lem:pairedcovered}
		If rectangles $S_1$ and $S_2$ are paired left (or right) neighbors of
		rectangle $R$ in the decomposition, then any point in the modified left (right) center
		of $R$ locally covers $S_1^*$ and $S_2^*$.
	\end{lemma}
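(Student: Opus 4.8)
The plan is to reduce the statement to two essentially independent applications of the solo-neighbor argument of Lemma~\ref{lem:solocovered}, one handling $S_1$ and one handling $S_2$. Assume without loss of generality that the paired neighbors are on the left, with $S_1$ the upper and $S_2$ the lower, and let $r_1, r_2$ be the two reflex vertices bounding the left wall of $R$, with $r_1$ above $r_2$. Then the modified left center is the full-width slab of $R$ spanning $[r_2, r_1]$ in $y$, with the two corner points $r_1$ and $r_2$ deleted. Fix an arbitrary point $p$ in this region; I will argue that $p$ covers $S_1^*$, the treatment of $S_2^*$ being the top-to-bottom mirror image.

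First I would check that a beacon at any $q \in S_1$ attracts $p$. Because $S_1$ is a short neighbor, its $y$-range is contained in that of $R$, so the segment from $p$ toward $q$ stays within the $y$-range of $R$ and crosses the left side of $R$ rather than exiting through its top or bottom. The crossing height lies between $y_p$ and $y_q$; since $y_q \ge r_1$ and $p$ is not the corner $r_2$, it is strictly above $r_2$. Hence $p$ either crosses directly into the opening to $S_1$ (and continues to $q$) or meets the wall strictly above $r_2$, from which it slides up to $r_1$ and enters $S_1$, reaching $q$ exactly as in Figure~\ref{fig:pairedattraction}b. In the other direction, a beacon at $p$ attracts every $q \in S_1^*$: this is precisely the L-shaped configuration of Lemma~\ref{lem:solocovered} with reflex vertex $r_1$, whose proof requires only that the two edges meeting at $r_1$---the wall and the bottom edge of $S_1$---lie on the polygon boundary. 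The point $q$ is drawn onto the bottom wall of $S_1$, slides right to $r_1$, enters $R$, and proceeds to $p$; passing to $S_1^*$ excludes the reflex curl vertex \curl{S_1}{R}, for which this direction would be indeterminate. Both routing segments stay inside $R \cup S_1$, so the coverage is local.

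The treatment of $S_2^*$ is identical after reflecting top-to-bottom: now $y_q \le r_2$, and because $p$ is not the corner $r_1$ the crossing height is strictly below $r_1$, so $p$ is pulled down to $r_2$ and into $S_2$, while the other direction is the solo argument with reflex vertex $r_2$. Deleting \emph{both} left corners when forming the modified left center is exactly what lets both halves succeed at once: removing $r_2$ averts the indeterminate landing at $r_2$ when attracting toward $S_1$, and removing $r_1$ does the same when attracting toward $S_2$.

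I expect the forward direction to be the crux, since it is where the attracted point can fail to reach its target. There are exactly two failure modes to rule out: escaping through the top or bottom of $R$, which the shortness of $S_1$ and $S_2$ forbids by confining the attraction segment to the $y$-range of $R$; and landing at or below $r_2$ and being drawn into $S_2$ instead of $S_1$, which the restriction to the modified left center prevents. Once these are settled, the opposite direction and the locality claim follow immediately from Lemma~\ref{lem:solocovered} and the fact that each routing segment lives in just two rectangles.
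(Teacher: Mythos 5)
Your proposal is correct and follows essentially the same route as the paper: the paper likewise reduces both directions to the L-shaped argument of Lemma~\ref{lem:solocovered} (which, as the paper notes, needs only the two edges incident to the shared reflex vertex to be polygon boundary), and its one added observation is exactly your crux---that a point in the modified left center, being distinct from $r_2$, hits the left wall strictly above $r_2$ and is therefore pulled up into $S_1$, so the indeterminate landing at $r_2$ and the escape into $S_2$ (the last two cases of Figure~\ref{fig:pairedattraction}b) cannot occur. Your closing remark on why both left corners are deleted also matches the paper's comment that $r_2$ is removed as the counterpart of \curl{S_1}{R} and $r_1$ as the counterpart of \curl{S_2}{R}.
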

	\begin{proof}
		Without loss of generality, let $S_1$ be an upper-left neighbor and $S_2$ be a
		lower-left neighbor of $R$.
		Let $p$ be an arbitrary point in the modified left center of $R$.
		
		By symmetry, we need only show that $p$ covers $S_1^*$.
		Letting $q$ be an arbitrary point in $S_1^*$, we arrive at a situation
		quite similar to that in the proof of Lemma \ref{lem:solocovered}.
		The proof here is the same, except that we need to note that when $p$ is pulled
		towards $q$, if it hits a wall, it hits the wall that is on the left boundary of $R$
		\emph{above the bottom reflex vertex $r_2$}, and therefore is pulled upwards
		(see Figure \ref{fig:pairedattraction}b).  In other words, the last two cases
		of Figure \ref{fig:pairedattraction}b do not occur.
	\end{proof}
	
	We note that $r_2$ is removed from the center as a symmetric counterpart to
	\curl{S_1}{R} in the argument above, and $r_1$ as a counterpart to
	\curl{S_2}{R}.
	
	We will mostly be applying Lemma
	\ref{lem:pairedcovered} with the point in the modified center of $R$ being
	either $r_1 + \varepsilon\hat{x}$ or $r_2 + \varepsilon\hat{x}$.

	\subsection{A small quantity}
	We make use of a small quantity $\varepsilon$, which can be considered
	infinitesimal.
	We could also define it concretely by first taking the the line arrangement
	formed by the lines through every pair of vertices in $P$.
	Then we let $\varepsilon$ be half of the minimum distance between
	intersections of this arrangement.
	
	Let $\hat{x}$ and $\hat{y}$ be unit vectors
	in the x- and y-directions, respectively.
	We will often use $\varepsilon \hat{x}$ or $\varepsilon \hat{y}$ as offsets from
	vertices or other important points in our polygon; Figure \ref{fig:epsilon}
	shows a few of these.
	(In this and in all later figures, the size of $\varepsilon$ is exaggerated.)
	
	\begin{figure}[htbp] 
		\begin{center}
		    \includegraphics*[scale=1.25]{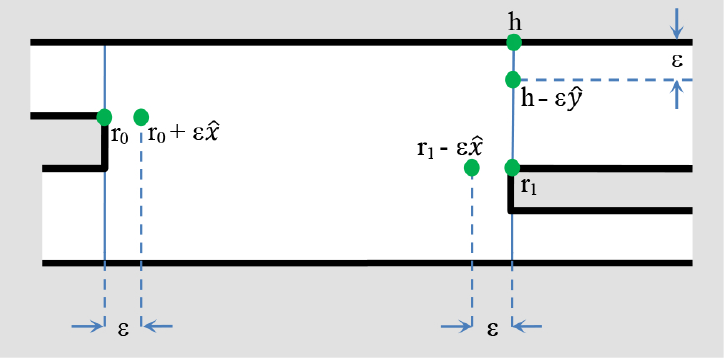}
		\end{center}
		\caption{
			Points $r_0 + \varepsilon \hat{x}$, $r_1 -
			\varepsilon \hat{x}$, and $h - \varepsilon \hat{y}$.
			$\varepsilon$ is not shown to scale; in
			general it would be much smaller.
		}
		\label{fig:epsilon}
	\end{figure}

\subsection{Preparation}
	We will prove the theorem by induction on the size of the dual tree of the
	vertical decomposition.
	We first root the dual tree at an arbitrary leaf.
	At each step, we will examine the structure of the vertical decomposition
	in the vicinity of a deepest node in the rooted tree. 
	We will place some beacons and remove some
	rectangles/dual tree nodes;
	we will place at most two beacons per every three rectangles removed.
	We stop and consider basis cases when the depth of the dual tree reaches
	0, 1, or 2.
	
	We start with a tree $T_0$ that is the entire dual tree
	of the polygon $P$ (which we also denote by $P_0$).
	After step $k$, we will have a tree $T_k$ which is a subgraph of $T_0$,
	with the rectangles corresponding to its vertices forming a 
	single polygon $P_k$ which is a subpolygon of $P$.
	We call each induction step from $T_k$ and $P_k$ to $T_{k+1}$ and $P_{k+1}$ a
	\emph{reduction}.
	
	In a reduction from $P_k$ to $P_{k+1}$, we will let $C_{k+1}$ denote the
	\emph{cut-off region}, which is the closure of $P_k \setminus P_{k+1}$,
	and use $C$ rather than $C_{k+1}$ when the subscript is clear from context.
	Each $C_{i}$ will be the union of some rectangles in the
	decomposition.
	Typically (but not always) $C_{k+1}$ will be connected, and the intersection of
	$C_{k+1}$ and $P_{k+1}$ will then be a vertical $V$.
	In $P_{k+1}$, the vertical $V$ is part of the polygon boundary, but in $P_k$ it
	is not.
	
	If $C_{k+1}$ is not connected, then the intersection of $C_{k+1}$ and $P_{k+1}$
	will be a set of verticals $V, V', \ldots$.  Again, these verticals are part of
	the boundary of $P_{k+1}$ but not of $P_{k}$.

\section{Trapping and repairing paths}\label{sec:repair} 

	To form a beacon set $B_k$ for $P_k$, we would like to take the beacon set
	$B_{k+1}$ for $P_{k+1}$ (which inductively exists) and add a few beacons to it.
	We could use $B_{k+1}$ for routing between pairs of points in $P_{k+1}$ (as a
	subset of $P_k$), and then just worry about routing the points of $C_{k+1}$ (to
	each other, and into and out of $P_k$).  However, this simple strategy does not work,
	because in $P_k$, the beacons $B_{k+1}$ may not be a routing set for the region
	$P_{k+1}$.
	
	This happens because, in rebuilding $P_{k}$ by adding $C_{k+1}$ to $P_{k+1}$,
	the points of $V$ (or $V'$, or $V'', \ldots$) have changed status:
	\begin{itemize}
	\item one end of $V$ changed from a vertex in $P_{k+1}$ to a point in the
		middle of a horizontal edge in $P_k$,
	\item the other end of $V$ changed from a convex vertex or point on a
		vertical edge to a reflex vertex, and
	\item the remainder changed from boundary to non-boundary.
	\end{itemize}
	
	This is important because attraction paths use the boundary in their
	definition.
	
	When $C_{k+1}$ is connected, we will call the rectangle of $C_{k+1}$ containing
	$V$ the \emph{detachment rectangle}, and the rectangle of $P_{k+1}$ containing $V$
	the corresponding \emph{attachment rectangle}.
	When $C_{k+1}$ is not connected, there will be multiple detachment rectangles,
	but they will all have the same attachment rectangle.
	We consider
	the cases of a detachment rectangle $T \subseteq C$ being taller or shorter
	than the corresponding attachment rectangle $R$.
	
	\begin{figure}[htbp] 
		\begin{center}
		    \includegraphics*[scale=1.25]{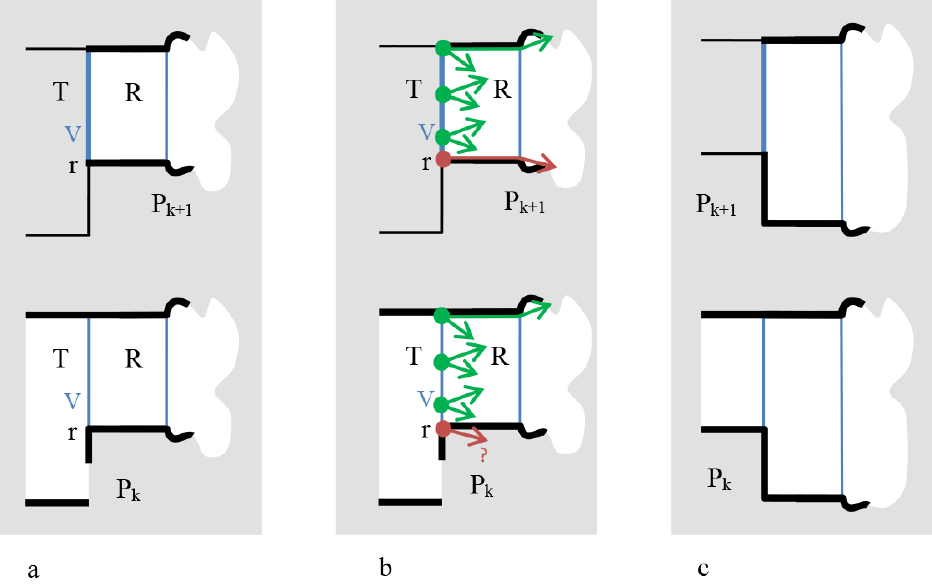} 
		\end{center}
		\caption{
		    (a) $T$ is taller than $R$.
		    (b) Paths to and from $V$ are preserved, except possibly those from $r$.
		    (c) $T$ is shorter than, and a solo neighbor of, $R$.
		}
		\label{fig:repairtallshort}
	\end{figure}
		
	Without loss of generality, we assume that $T$ is an upper-left neighbor of
	$R$.  Consider the case where $T$ is taller than $R$; this is illustrated in
	Figure \ref{fig:repairtallshort}a.
	In $P_{k+1}$, the (relative) interior of $V$ was boundary, but in $P_k$ it is
	not.  We therefore examine all paths in $P_{k+1}$'s routing that are incident
	on $V$.
	 
	In any beacon attraction path, the path can go through the interior of the
	polygon and along some edges.
	Unless a path is entirely collinear with an edge, in order to successfully
	reach the beacon, the only edges along which the path may travel are those
	that have a reflex vertex at the end of the edge it is moving toward.
	Since $V$ neither is in the
	interior of $P_{k+1}$ nor has a reflex vertex on an end in $P_{k+1}$, 
	aside from those paths contained entirely in $V$, no path segments of
	$P_{k+1}$'s beacon routing pass \emph{through} a point of $V$.
	In other words, paths that are incident on $V$ must originate
	or terminate on $V$.
	
	For all points of $V$ other than the bottom  vertex $r$ (reflex in $P_k$),
	these paths that are present in $P_{k+1}$ are also present in $P_k$ (see Figure
	\ref{fig:repairtallshort}b).  For $r$, however, destinations to the right
	and below in $P_{k+1}$ would attract along the horizontal edge, but in $P_k$
	the path cannot choose between the horizontal and vertical edges to start
	($r$ is similar to, but a generalization of, a reflex curl vertex as in Figure \ref{fig:soloattraction}c).
	This problem is easily solved, however, by considering $r$ to be part of $T$
	during the inductive step, obviating the need for it to have
	inductively-generated paths.  The beacon that covers $T$ in the new beacon set
	will also cover $r$.

	Now consider the case where $T$ is shorter than $R$. 
	If $R$ has no other left neighbor, as in Figure \ref{fig:repairtallshort}c, 
	then the edge through $V$ doesn't have a reflex
	vertex at either end in $P_{k+1}$, and thus all paths in $P_{k+1}$ 
	incident on $V$ either originate or terminate there (or both).
	Furthermore, all of these paths are with beacons or points
	lying at or to the right of $V$, so these paths are undisturbed by the
	inductive step.
	
	If $R$ has another left neighbor, then the situation is different.
	The beacons of $P_{k+1}$ may have routings dependent on $V$ being boundary: a
	routing path section may hit the wall of $P_{k+1}$ at a point on $V$ (or start
	on $V$), and then be pulled along that wall until it leaves the wall at some
	reflex vertex (see Figure \ref{fig:trap}a).  In $P_k$, this same section, upon hitting $V$, would
	continue into $T$ and become trapped, not reaching the beacon, as shown in 
	Figure \ref{fig:trap}b.

	\begin{figure}[htbp] 
		\begin{center}
		    \includegraphics*[scale=1.25]{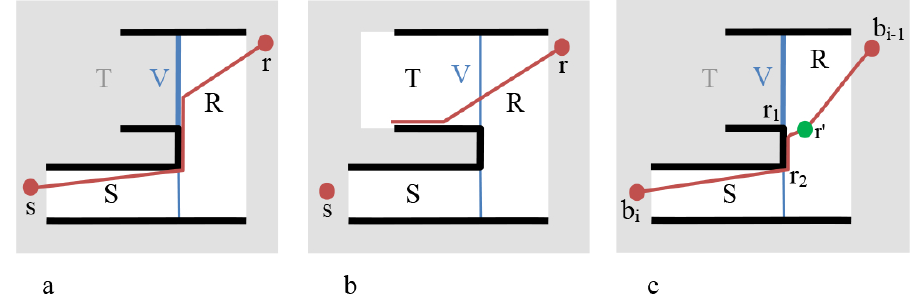} 
		\end{center}
		\caption{
		    (a) a path section hits a wall in $P_{k+1}$.
		    (b) the path continues into $C$ in $P_k$.
		    (c) repairing a section between $b_i$ and $b_{i-1}$ with $r'$.
		}
		\label{fig:trap}
	\end{figure}

	To fix this problem, we will use a new beacon to \emph{repair} such trapped path
	sections, as suggested in Figure \ref{fig:trap}c.
	Let $b_{i-1} b_i$ be a trapped path section of the inductively-generated
	routing beacon set $B_{k+1}$; either or both of the ends of the section may be arbitrary
	points in $P_{k+1}$, and a beacon has been activated at $b_i$.
	By symmetry, without loss of generality assume that the section starts on or
	hits a \emph{left} wall on a rectangle $R$ and is then pulled
	\emph{down} the wall and into another rectangle $S$, as in the figure.
	
	Attraction paths in orthogonal polygons are always both x-monotone and
	y-monotone.
	Thus $b_{i-1}$ is at or right of $V$, and $b_i$ is left of $V$.
	The beacon $b_i$ cannot be colinear with $V$, as then either the path
	would be vertical (and not trapped) or it would hit the left side of $R$
	at $b_i$, not some point on $V$.
	Furthermore, $b_{i-1}$ is either the bottom point $r_1$ of $V$, or above this
	point, and $b_i$ is at or below the reflex vertex $r_2$ on the left of $S$.
	These allowable regions for $b_i$ and $b_{i-1}$ are shown in Figure
	\ref{fig:allowable}.

	\begin{figure}[htbp] 
		\begin{center}
			\includegraphics*[scale=1.25]{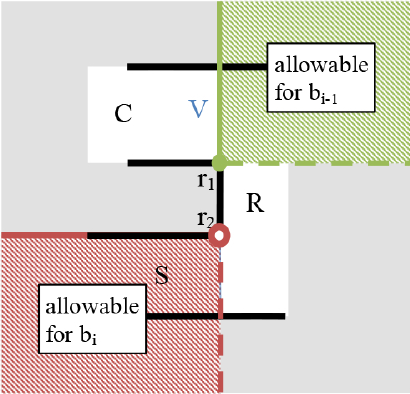}
		\end{center}
		\caption{
		    The allowable regions for $b_i$ and $b_{i-1}$.
		}
		\label{fig:allowable}
	\end{figure}

	Since $R$ has a neighbor $S$ in $P_{k+1}$, and another neighbor $T$
	is connected to it along $V$, which is on the same side as $S$,
	the rectangle $R$ has paired neighbors in $P_k$.
	So we can observe that
	paths can be trapped only when we reduce $P_k$ to $P_{k+1}$ by cutting
	between a rectangle and one of a set of its paired neighbors.
	
	To establish a way to repair trapped paths, 
	we will assume that the inductive routing beacon set is local.  This allows us
	to contain the path section that needs repair in three rectangles: $R, S,$ and
	one other.  This other rectangle is either a left neighbor of $S$ or a right
	neighbor of $R$.

	\begin{lemma}\label{lem:repair}
		Let $B_{k+1}$ be a local routing set of beacons in $P_{k+1}$.
		If a left (or right) paired neighbor $T$ has
		been cut from rectangle $R$ in $P_k$ as part of forming $P_{k+1}$, we can add
		the point $r + \varepsilon\hat{x}$ (or $r - \varepsilon\hat{x}$) to $B_{k+1}$ to obtain a beacon set that supports local
		routing between any pair of points in the subpolygon $P_{k+1}$ of $P_k$, where $r$ is the reflex
		vertex of $P_{k}$ common to $T$ and $R$.
	\end{lemma}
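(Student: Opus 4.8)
The plan is to start from the inductively-guaranteed local routing in $P_{k+1}$ and edit only its \emph{trapped} segments. I would fix $p,q\in P_{k+1}$, take a local routing between them supported by $B_{k+1}$, and observe that, by the analysis preceding the lemma, every segment of this routing that is not trapped is still a valid attraction in $P_k$; so it is enough to repair each trapped segment $b_{i-1}b_i$. With the orientation fixed in the text (the path hits a left wall on $R$ and is pulled down into $S$), I would splice in the new beacon $w=r+\varepsilon\hat{x}$, replacing the single trapped attraction $b_{i-1}\to b_i$ by the pair $b_{i-1}\to w$ and $w\to b_i$. It then suffices to show that, in $P_k$, the beacon $w$ attracts $b_{i-1}$, that $b_i$ attracts $w$, and that both paths are local; since $w$ is added once and every trapped segment cut off along $V$ is repaired by this same $w$, this yields the claimed set. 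Recall from Figure \ref{fig:allowable} that $b_{i-1}$ lies at or above $r$ and at or right of $V$, that $b_i$ lies below $r_2$ and left of $V$, and that locality confines the segment to $R$, $S$, and one further rectangle $U$ (a right neighbor of $R$ or a left neighbor of $S$).

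The attraction $w\to b_i$ is the easy direction. The robot launched from $w$ is pulled onto the left wall of $R$ just below $r$, slides down to $r_2$, and enters $S$ exactly as in the solo case of Lemma \ref{lem:solocovered} applied to the L-shape formed by $S$ and the part of $R$ at or below the height of $r$; the two edges of that L-shape incident to $r_2$ are genuine boundary of $P_k$, since the interior part of $V$ lies strictly above $r$. Equivalently, this path merges with the lower portion of the original trapped path, the part at or below $r$, which is unaffected by attaching $T$ and already reaches $b_i$ (in Case $B$ this portion runs on through $S$ into $U$). This path meets only $R$, $S$, and possibly $U$, so it is local; and $b_i\in S^*$ because the original segment already reached it.

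For $w$ attracting $b_{i-1}$ I would use that every attraction path is $x$- and $y$-monotone, so the path from $b_{i-1}$ to $w$ is confined to $\RH{\{b_{i-1},w\}}$, which lies at or right of $V$ and at or above $r$. Hence the path can never touch $V$ or drop below $r$; it stays in the region where $P_k$ and $P_{k+1}$ coincide and cannot be diverted into the freshly attached $T$, which is exactly why the beacon is nudged to $r+\varepsilon\hat{x}$ rather than placed on $V$. If $b_{i-1}\in R$ (the case where $U$ is a left neighbor of $S$), then $w$ is an interior point of $R$ and attracts $b_{i-1}$ by visibility. If $b_{i-1}$ lies in a right neighbor $U$ of $R$, the robot travels leftward through $U$ and reaches $R$ either directly across their shared vertical or, failing that, by first meeting the bottom edge of $U$ and sliding along it: here I would invoke the fact that in the vertical decomposition every rectangle's top and bottom lie on the polygon boundary, so this bottom edge is a solid wall along which the robot can only slide (it is not an opening through which it could escape into a fourth rectangle) until it reaches the reflex corner $U$ shares with $R$ and turns inward. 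Once inside $R$ it reaches the interior point $w$, and the entire path stays within $U\cup R$, so it is local.

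I expect this last case to be the crux. The delicate point is not that the robot might be pulled toward $T$ (the monotonicity bound rules that out) but that it reach $w$ from the right neighbor without sticking on a wall and without leaving $U\cup R$; the argument turns on the decomposition property that the bottom of $U$ is boundary, forcing a slide into the shared reflex corner rather than an escape. Verifying this, together with checking that the degenerate possibilities (for instance $b_{i-1}=r$, or $U$ tall versus short top versus short bottom) all funnel to $w$ inside $U\cup R$, is the main work of the proof.
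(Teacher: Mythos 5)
Your proposal follows the same architecture as the paper's proof: splice the new beacon (your $w$; the paper calls it $r'$ and reserves $w$ for the lower-left corner of $S$) into each trapped section, use locality to confine $b_{i-1}$ to $R$ or a right neighbor and $b_i$ to $S$ or a left neighbor, and verify the two attraction claims case by case. Your monotone-bounding-box observation is a genuinely clean way to see that the path from $b_{i-1}$ to $r'$ never touches $V$ or enters $T$. But you have inverted where the difficulty lies, and both of the concrete geometric claims you commit to fail in exactly the hard sub-cases. First, the ``easy'' direction: your claim that the robot released at $r'$ is pulled onto the left wall of $R$ and slides down to $r_2$ is false whenever $b_i$ (in a left neighbor $Z$ of $S$) lies \emph{below} the line through $r'$ and $r_2$. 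In that case the free flight from $r'$ passes below $r_2$ and lands on the common bottom of $R \cup S$ --- and it lands \emph{inside} $R\cup S$, rather than sailing over the corner of $S$ into $Z$, only because $\varepsilon$ is small enough that $r'$ lies above the line through $r_2$ and $S$'s lower-left corner. The paper splits this direction into exactly these two cases (Z1/Z2) and invokes that $\varepsilon$-condition; your appeal to Lemma \ref{lem:solocovered} covers only $b_i \in S^*$, and your ``merges with the old path'' remark, while the right idea (it is how both of the paper's subcases conclude), is asserted without the case split that makes it true.

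Second, your ``crux'' case is analyzed with the wrong wall. When $U$ is a \emph{tall bottom} right neighbor of $R$ (same bottom edge, extending above $R$'s top), the robot aimed at $r'$ --- which sits at the height of $r$, strictly above the common bottom --- can never meet the bottom of $U$, since its $y$-monotone path terminates at $r$'s height. What it can hit is $U$'s \emph{left} wall above the shared reflex vertex $r_A$ (i.e., above $R$'s top), down which it slides to $r_A$ before proceeding to $r'$. So your ``tops and bottoms are boundary'' fact addresses only the short-top-neighbor subcase (the paper's easy Case A3), and the real obstacle --- showing that this left wall, which carried the original path in $P_{k+1}$, is \emph{still boundary in $P_k$} --- is hidden inside your unproved assertion that the bounding box lies ``in the region where $P_k$ and $P_{k+1}$ coincide.'' That coincidence is not automatic: the cut region $C$ can contain rectangles other than $T$, and a priori one of them could abut $U$ above $R$'s top. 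The paper closes this with the observation that all detachment rectangles of a reduction share the single attachment rectangle $R$ (``none of our reductions can trap paths across two verticals''), so nothing of $C$ attaches along $U$'s wall. Your monotonicity box rules out diversion into $T$, but by itself says nothing about the boundary status of the walls the path must slide along; supplying that argument, together with the Z1/Z2 split above, is the substance of the proof rather than deferred bookkeeping.
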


	\begin{proof}
		By symmetry, we need only prove the version where a left paired neighbor is cut
		off.  Let $S$ be the left neighbor of $R$ other than $T$.
		
		Let $r'$ be $r + \varepsilon\hat{x}$ and $B'$ be $B_{k+1} \cup \{ r' \}$.
		Let $b_{i-1}b_i$ be a trapped section of any path of the routing on $B_{k+1}$
		in $P_k$.
		We will replace this section with a pair of (local) sections $b_{i-1}r'$ and
		$r'b_{i}$ when using $B'$.
		We must only establish that these path sections are attractive ($r'$ attracts
		$b_{i-1}$ and $b_i$ attracts $r'$) and local.
		
		The section $b_{i-1}b_i$ in $P_{k+1}$ contains points in the relative interior
		of $S$, as this section proceeds from the left side of $R$ \emph{into} $S$,
		as detailed above in connection with Figure \ref{fig:trap}c.
		It also contains points in the relative interior of $R$, as the points on
		the left side of $R$ above the reflex vertex are relative interior.
		Therefore, being local, $b_{i-1}b_i$ contains points in the relative interior
		of at most one more rectangle.
		We can conclude that $b_{i-1}$ is in $R$ or a right neighbor of $R$,
		and $b_i$ is in $S$ or a left neighbor of $S$ (the only right neighbor of $S$
		is $R$).
		
		Recall that for $b_{i-1}b_i$ to be trapped, $b_{i-1}$ must be $r$ itself, or
		above $r$ (in which case it is also above $r'$).
		Consider what happens when $b_{i-1}$ is attracted by a beacon placed at $r'$
		in $P_k$.  We aim to show that this attraction path is local and reaches
		$r'$.
		
		If $b_{i-1}$ is in $R$, then it is attracted in a straight
		line to $r'$; this is a local section.
		If $b_{i-1}$ is in a right neighbor $A$ of $R$, we consider four cases.
		
		\begin{figure}[p] 
		   \begin{center}
		      \includegraphics*[scale=1.5]{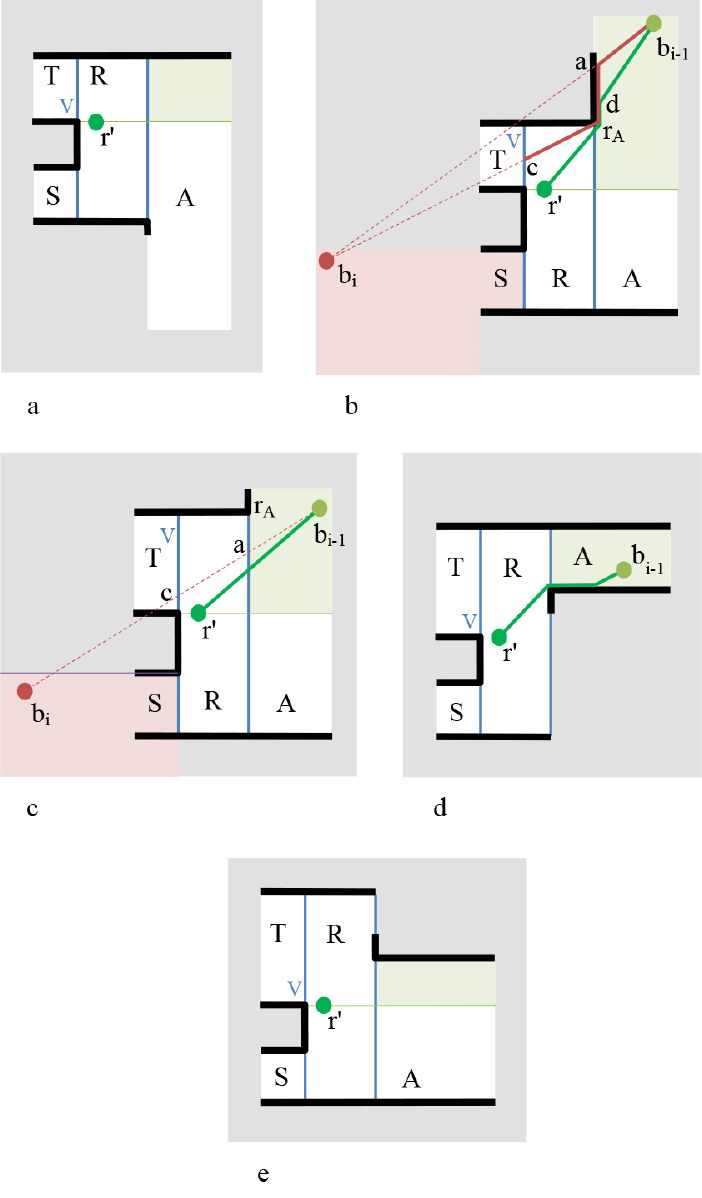} 
		   \end{center}
		   \caption{
		     (a) $A$ is a tall top neighbor of $R$. 
		     (b) and (c) $A$ is a tall bottom neighbor of $R$.
		     (d) $A$ is a short top neighbor of $R$.
		     (e) $A$ is a short bottom neighbor of $R$.
		   }
		   \label{fig:attracttoprime}
		\end{figure}
		
		\begin{description}
		\item[Case A1.  $A$ is a tall top neighbor of $R$.]
			Any $b_{i-1}$ above $r'$ is visible
			(and therefore attracted in a straight line) to $r'$ (see Figure
			\ref{fig:attracttoprime}a).
		
		\item[Case A2. $A$ is a tall bottom neighbor of $R$.]
			Refer to Figure \ref{fig:attracttoprime}b and c.
			Let $r_A$ be the reflex vertex shared between $A$ and $R$.

			$b_i$ attracts $b_{i-1}$ downwards and left,
			owing to the restrictions on their locations (see Figure
			\ref{fig:allowable}).
			Let $a$ be the point where this attraction path first encounters the left
			side of $A$.
			
			If $a$ is above $r_A$, then the situation is as illustrated in Figure
			\ref{fig:attracttoprime}b.
			Here $b_i$ attracts $b_{i-1}$ to $a$, then down the left side of $A$
			to $r_A$, and from there to some point $c$ on $V$.
			$r'$ will attract $b_{i-1}$ to some point $d$ on the left side of $A$.
			If this point is below $r_A$, then $r'$ and $b_{i-1}$ are visible and we
			are done. If $d$ is above $r_A$, then $r'$ will attract the point from $d$
			down the left side of $A$ to $r_A$, and from there directly to $r'$.
			
			The point $d$ is below $a$, because $r'$ must be below $b_{i-1}b_i$.
			This implies that the segment $b_{i-1}d$ is contained in $A$.
			It also implies that $dr_A$ is a subsegment of $ar_A$.
			Since $ar_A$ was boundary in $P_{k+1}$, $dr_A$ was also boundary
			in $P_{k+1}$.
			None of our reductions can trap paths across two verticals, so (with $V$
			being the vertical involved in the trapping here) $ar_A$ and $dr_A$ must also
			be boundary in $P_k$.
			Finally, the segment $r_Ar'$ is contained in $R$, and thus the path
			$b_{i-1}dr_Ar'$ is an attraction path in $P_k$.

			If $a$ is below $r_A$,  then the situation is as in Figure
			\ref{fig:attracttoprime}c.
			Here $b_i$ attracts $b_{i-1}$ directly into $V$ at some point $c$.
			The line segment $cb_{i-1}$ is above $r'$, as $c$ is at or above and
			$b_{i-1}$ is strictly above $r'$.  Thus the line segment $r'b_{i-1}$ is
			below $cb_{i-1}$ and hence contained in $A \cup R$, making it a local path
			segment in $P_{k}$.

		\item[Case A3. $A$ is a short top neighbor of $R$.]
			Refer to Figure \ref{fig:attracttoprime}d. 
			Either $b_{i-1}$ is visible to $r'$
			or $r'$ attracts $b_{i-1}$ into the bottom wall of $A$ at some point $a$;
			this
			path continues left to the reflex vertex $r_a$ shared between $A$ and $R$,
			and then is attracted straight to $r'$.
			Again the path is contained within $A \cup R$ and therefore local.
		
		\item[Case A4. $A$ is a short bottom neighbor of $R$.]
			Refer to Figure \ref{fig:attracttoprime}e. 
			$A$'s top must be above $r'$ in order
			for it to contain the start of a trapped path section.
			In this case, $b_{i-1}$ and $r'$ are visible.
		\end{description}
		
		In each case, we have shown that any $b_{i-1}$ that starts a trapped section
		has a local path section to $r'$.
		
		We now do a similar analysis to show that $r'$ has a local path section to
		any $b_i$ that ends a trapped path section.
		
		As argued above, $b_i$ must be either in $S$ or in a left neighbor $Z$ of $S$.
		If $b_i$ is in $S$, then it attracts $r'$ by Lemma \ref{lem:pairedcovered}.
		Furthermore, the path of this attraction stays within $R \cup S$, so it is
		local.
		
		If $b_i$ is in $Z$, then 
		let $w$ be the lower-left corner of $S$, as in Figure
		\ref{fig:attractfromprime}a.
		
		Consider the relative placement of $r'$ and $b_i$.
		$r'$ is strictly above and to the right of $b_i$
		(recall that $b_i$ must be at the level of, or lower than,
		the reflex vertex $r_S$ common to $R$ and $S$).
		Thus, a beacon at $b_i$ will pull a point at $r'$ along a vector
		that is both downwards and leftwards.
		Since some small neighborhood of $r'$ does not contain any boundary of the
		polygon,
		it is free to travel along that vector, and it thus will not encounter
		polygon boundary until it is strictly below $r'$ (and strictly left of it).
		When it does reach polygon boundary, it is either on the left side of $R$
		between $r_T$ and $r_S$, or on the bottom of $S$ or $R$ (if the vector is
		downwards enough).  We have chosen $\varepsilon$ to be small enough that if
		the line $r'r_S$ hits the line through the bottom of $R$, it hits it 
		either in $S$ or $R$, and not to the left of $S$.
		Equivalently, $\varepsilon$ is small enough that $r'$ is above the line
		$wr_S$.
		
		We now examine two cases, based on where $b_i$ is relative to the line $r'r_S$.
		
		\begin{figure}[htbp] 
			\begin{center}
				\includegraphics*[scale=1.1]{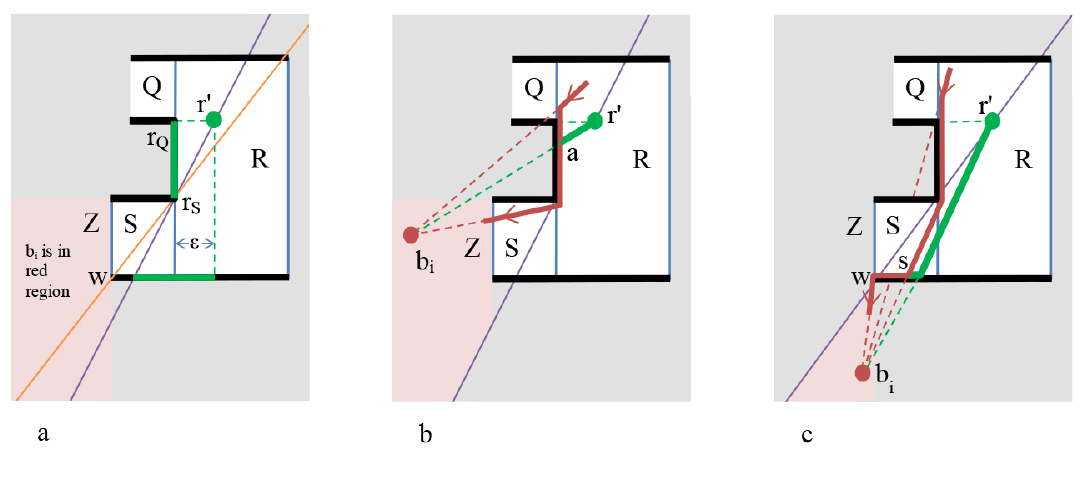} 
			\end{center}
			\caption{
				(a) $r'$ is above the line $wr_s$. 
				(b) $b_i$ attracts $r'$ to $a$.  From then on, it follows $P_{k+1}$'s
					path from $b_{i-1}$ to $b_i$.
				(c) $b_i$ attracts $r'$ to $s$, then leftwards.  It soon encounters $P_{k+1}$'s
					path from $b_{i-1}$ to $b_i$.
			}
			\label{fig:attractfromprime}
		\end{figure}
		
		\begin{description}
		\item[Case Z1. $b_i$ lies on or above the line $r'r_S$.]
			Refer to Figure \ref{fig:attractfromprime}b. 
			In this case, a beacon at $b_i$ attracts $r'$ into a point $a$ on the left side
			of $R$ strictly between $r_1$ and $r_2$.  
			We note that the routing path segment from $b_{i-1}$ to $b_{i}$ in
			$P_{k+1}$ includes the point $a$, as it traverses the entire length of the
			segment $r_1$ to $r_2$.
			Once the point coming from $r'$ hits $a$, it
			will follow the rest of the path from the $b_{i-1}b_i$ section.
			This part of the path is not trapped, being entirely below $V$.
			Thus, there is a valid path segment from $r'$ to $b_i$ in $P_k$,
			and this path segment is local, contained in $Z \cup R \cup S$.
		
		\item[Case Z2. $b_i$ lies below the line $r'r_S$.]
			Refer to Figure \ref{fig:attractfromprime}c.
			In this case, the routing path from $b_{i-1}$ to $b_i$ in $P_{k+1}$,
			after travelling down the left of $R$ to $r_S$, leaves $r_S$ at an angle
			below $r_Sw$ and therefore next encounters the bottom of $S$ at some point $s$.
			It is then pulled leftwards to $w$, which must be a reflex vertex shared by $S$
			and $Z$, and from there it proceeds directly to $b_i$.
			
			A attraction path starting at $r'$ in $P_k$ will either be pulled into
			the bottom of $R$ or $S$.  It is next pulled leftwards to $w$.  At this
			point, or earlier (at $s$), we again start following the old routing path from $b_{i-1}$ to $b_i$, so this
			path also eventually reaches $b_i$.  Again, it is contained in  $Z \cup R \cup
			S$, and is therefore local.
		\end{description}
		
		Now we have shown $r'$ is attracted to any $b_{i}$ that ends a
		trapped section by a local attraction path.  Thus, the two sections
		$b_{i-1}r'$ followed by $r'b_i$ are a valid replacement for any trapped
		section $b_{i-1}b_i$.
	\end{proof} 

	We use the term \emph{repair position} to refer to the placement of the new
	beacon (point) in the previous lemma.
	
	Note that when we repair a path from $b_{i-1}$ to $b_{i}$ by inserting $r'$,
	we do not change the ``reverse'' path from $b_{i}$ to $b_{i-1}$.
	This means that even though our later case analysis will deal only with
	regions \emph{covered} by single beacons, by repair we may end up with regions
	where the symmetry of covering is broken, and routing to a region uses a
	different beacon than routing out of the region does.

\subsection{Routing beacon sets}\label{sec:routingset}
	
	The conditions in the following lemma are sufficient (but not
	necessary) to form a local beacon routing set by inductively cutting off
	a region $C_{k+1}$ from $P_k$ to yield $P_{k+1}$.  Let $A_k(B)$ be the
	attraction relation (digraph) on the points of $B$ in $P_{k}$.

	\begin{lemma}\label{lem:routingset}
		If the following conditions hold, then $B_k = B_{k+1} \cup B'$ is a
		routing beacon set for $P_k$.
		\begin{enumerate}
		  	\item The beacons given ($B'$) locally cover the region $C_{k+1} = P_k
		  	\setminus P_{k+1}$.
		  	\item Each strongly connected component of $A_k(B')$ contains at least
		  		one point in $P_{k+1}$.
		  	\item If a detachment rectangle of $C_{k+1}$ is one of a set of paired
		  	    neighbors of the corresponding attachment rectangle, and the other
		  	    neighbor of the pair is not also a detachment rectangle, then a beacon
		  	    of $B'$ is in repair position.
		\end{enumerate}
	\end{lemma}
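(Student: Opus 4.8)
The plan is to reduce the claimed routing-set property to two facts about $P_k$: (i) the beacons of $B_k$ form a single strongly connected \emph{network}, i.e.\ there is a beacon routing between every ordered pair of them, and (ii) every point $x$ of $P_k$ can reach some beacon of $B_k$ by a routing and can be reached from some beacon of $B_k$ by a routing. Given (i) and (ii), a routing between arbitrary $p,q \in P_k$ is assembled from three pieces that share beacon endpoints: a routing taking $p$ onto the network at some beacon $b_p$, a routing across the network from $b_p$ to a beacon $b_q$ from which $q$ can be served (both supplied by (i)), and a routing taking $b_q$ off the network down to $q$. Routings concatenate at any common intermediate beacon of $B_k$, so the concatenation $p \to b_p \to b_q \to q$ is a valid routing; since every piece is local, $B_k$ is in fact a \emph{local} routing set, which is what the induction needs.

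To establish (i) together with the $P_{k+1}$ half of (ii), I would transfer the inductively constructed local routing of $B_{k+1}$ from $P_{k+1}$ into $P_k$. By the inductive hypothesis $B_{k+1}$ routes between every pair of points of $P_{k+1}$; the only way such a routing can break when we pass to $P_k$ is along a \emph{trapped} section (the lone exceptional point $r$ at the bottom of a taller detachment rectangle is instead treated as part of $C_{k+1}$ and served by coverage), and by the analysis of Section~\ref{sec:repair} trapping arises only where a reduction cut a paired neighbor off its attachment rectangle. Condition~3 places a beacon of $B'$ in repair position at exactly each such cut, so Lemma~\ref{lem:repair} lets us splice that beacon into every trapped section and recover a valid local routing in $P_k$. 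Hence every pair of points of $P_{k+1}$ is mutually routable in $P_k$; in particular the beacons of $B_{k+1}$ and the repair beacons (all of which lie in $P_{k+1}$) form a strongly connected \emph{core}, and this already gives both halves of (ii) for every point of $P_{k+1}$.

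It remains to attach the rest of $B'$ and to serve the points of $C_{k+1}$. For the beacons: each $b \in B'$ lies in some strongly connected component of $A_k(B')$, which by Condition~2 contains a beacon $b^*$ located in $P_{k+1}$. Within the component $b$ and $b^*$ are mutually routable, and $b^*$, being a point of $P_{k+1}$, is mutually routable with the core by the previous paragraph; thus every beacon of $B_k$ joins the one strongly connected network, giving (i). For the points of $C_{k+1}$: Condition~1 says $B'$ locally covers $C_{k+1}$ in $P_k$, so each $c \in C_{k+1}$ is attracted by some beacon of $B'$ (a direct routing of $c$ onto the network) and attracts some beacon of $B'$ (a direct routing off the network to $c$), completing (ii). The coverage here is the permissive, possibly asymmetric notion, so the onboarding and offboarding beacons of $c$ may differ; strong connectivity of the network absorbs this asymmetry, exactly as anticipated in the remark following Lemma~\ref{lem:repair}.

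The main obstacle is the transfer-and-repair step of the second paragraph. Everything else is bookkeeping with concatenation of routings, but this step needs three things to hold simultaneously: that no routing section of $B_{k+1}$ passes \emph{through} $V$ (so only sections originating or terminating on $V$, i.e.\ potential trapped sections, are at risk); that trapping occurs only at paired cuts, so that Condition~3 really does provide a repair beacon wherever one is needed; and that a single repair beacon per cut suffices for all trapped sections there, including the case where $C_{k+1}$ is disconnected and several verticals are repaired at once. The groundwork of Section~\ref{sec:repair} and Lemma~\ref{lem:repair} supplies exactly these facts, so the remaining work is only to verify that Conditions~1--3 between them account for every vertical and every point introduced by the reduction.
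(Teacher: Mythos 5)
Your proposal is correct and follows essentially the same route as the paper's proof: repaired induction (Lemma \ref{lem:repair} invoked via condition 3), coverage of $C_{k+1}$ by $B'$ (condition 1) to get points on and off the beacons, and condition 2 to tie each beacon of $B'$ to $P_{k+1}$. The only difference is presentational — the paper spells the assembly out as four cases according to whether $p$ and $q$ lie in $C_{k+1}$ or $P_{k+1}$, whereas you state it once as a strongly connected beacon network with onboarding/offboarding routings, which is the same argument.
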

	
	\begin{proof}
		The only condition under which inductively-generated paths get trapped is that
		exactly one of a paired set of neighbors of an attachment rectangle is in
		$C_{k+1}$.
		Thus, if there is a possibility of trapped paths, by the third condition we
		have a beacon of $B'$ placed so that we can repair the inductive paths as per
		Lemma \ref{lem:repair}.
		We'll use the term ``repaired induction'' to refer to performing a recursive
		step followed by repair of the paths, if necessary.
		
		If $x$ is a point in $C$, then let $B'(x)$ be a beacon of $B'$ that covers
		$x$.  $B'(x)$ exists by the first condition.  And if $b$ is a beacon in $B'$,
		then let $S(b)$ be a point of $B' \cap P_{k+1}$ that is strongly connected to
		it in $A_k(B')$.  $S(b)$ exists by the second condition.
		
		Consider routing from an
		arbitrary point $p$ to another arbitrary point $q$ in $P_k$.
		Depending on whether each of $p$ and $q$ is in $C_{k+1}$ or not, there are
		four possibilities.

		\begin{description}
		\item[$p$ and $q$ are both in $P_{k+1}$.]
			By repaired induction, there is a local beacon path between $p$ and $q$
			using $B_k$ (plus possibly the beacon in repair position).

		\item[$p$ is in $C_{k+1}$ and $q$ is in $P_{k+1}$.]
			We can route from $p$ directly to $B'(p)$.
			From there, we can route to the beacon $b' = S(B'(p))$ in $B \cap P_{k+1}$,
			by the second condition.
			By the third condition,
			we can then route from $b'$ to $q$ by
			repaired induction.

		\item[$p$ is in $P_{k+1}$ and $q$ is in $C$.] 
			We can ``reverse'' the previous routing,
			routing from $p$ to $S(B'(q))$ by repaired induction,
			from there to $B'(q)$ by the second condition,
			and then directly to $q$.
		
		\item[both $p$ and $q$ are in $C$.]
			We route from $p$ to $B'(p)$, and then to $S(B'(p))$,
			to $S(B'(q))$, to $B'(q)$, and finally to $q$.
		\end{description}
		
		The lemma follows.

	\end{proof}

\section{Reductions}\label{sec:reductions}

	Assume we are after step $k$, having tree $T_k$ and polygon $P_k$ remaining.
	$T_k$ is rooted at a leaf.
	If $T_k$ is of depth 0, 1, or 2, we stop.
	Otherwise, let $L$ be a deepest node in the dual
	tree, let $A_1$ be its direct ancestor (parent),
	and in general let $A_j$ be the direct ancestor of $A_{j-1}$.
	The grandparent $A_2$ of $L$ exists,
	because $T_k$ has depth at least 3.
	In general, we will start by trying to reduce the size of $T_k$
	by removing the dual tree nodes of $A_1$'s subtree;
	this corresponds to cutting the polygon on the vertical chord
	between $A_1$ and $A_2$.  Later we will consider cases that require
	us to examine $A_2$ and its subtree.
			
	We let $A_0$ be synonymous with $L$, and denote the reflex vertex shared between
	$L = A_0$ and $A_1$ as $r_{01}$, and the reflex vertex shared between
	$A_1$ and $A_2$ as $r_{12}$, etc.  
	Other leaves in the vicinity will be denoted $L'$, $L''$, etc.
	and the reflex vertex shared between $L'$ and $A_1$ will be $r'_{01}$, etc.
			
	Throughout this section, all coverage is local and for conciseness we omit the
	adverb, writing \emph{covers} rather than \emph{locally covers}. 
	
	We assume without loss of generality (by symmetry)
	that $A_2$ is an upper right neighbor of $A_1$.  
	With respect to $A_1$, the neighbor $A_2$ is either tall, solo, or paired.
	We first examine the case when $A_2$ is taller than $A_1$.

	\subsection{Case 1: $A_2$ is a tall neighbor of $A_1$}\label{sec:atwotall}
	
		In this case, $A_1$ must have at least one child (the deepest leaf $L$) and can
		have at most two children.  All of $A_1$'s children are left children.

		\begin{lemma}\label{lem:twotalltwokids}
			If $A_2$ is a tall upper right neighbor of $A_1$, 
			and $A_1$ has two children,
			then $P_k$ can be reduced
			by 3 rectangles at a cost of 2 beacons.
		\end{lemma}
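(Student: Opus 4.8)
The plan is to realize this step as a single application of Lemma \ref{lem:routingset}: cut $P_k$ along the vertical $V = r_{12}$ separating $A_1$ from $A_2$, remove the three rectangles of $C = L \cup L' \cup A_1$, and charge exactly two beacons. First I would pin down the local structure. Because $A_2$ is a tall right neighbor, the observation that a tall right neighbor is the \emph{unique} right neighbor forces $A_2$ to be the only right neighbor of $A_1$, so every child of $A_1$ is a left neighbor; having two children then means $A_1$ has a pair of short (paired) left neighbors. Since $L = A_0$ is a deepest node, its sibling $L'$ lies at the same depth and is therefore also a leaf. Hence $A_1$ has paired left neighbors $L, L'$, both leaves, its only remaining neighbor is $A_2$, and cutting along $V$ cleanly detaches the connected region $C$ with $C \cap P_{k+1} = V$.

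For the beacons I would take $b_1 = r_1 + \varepsilon\hat{x}$ in the modified left center of $A_1$ (where $r_1, r_2$ bound the left wall of $A_1$) and $b_2 = r_{12} + \varepsilon\hat{x}$, which sits just inside $A_2 \subseteq P_{k+1}$. Condition 1 of Lemma \ref{lem:routingset} is then immediate: $b_1$ covers $A_1$ by Observation \ref{obs:contained}, and by Lemma \ref{lem:pairedcovered} it covers $L^*$ and $L'^*$; because leaves carry no reflex curl vertex (the reasoning behind Lemma \ref{lem:sololeafcovered}), $L^* = L$ and $L'^* = L'$, so $b_1$ covers all of $C$. Condition 3 is discharged by the choice of $b_2$: the point $r_{12} + \varepsilon\hat{x}$ is exactly the repair position of Lemma \ref{lem:repair} for cutting the left neighbor $A_1$ off of $A_2$, so whenever trapping can occur a repair beacon is already present.

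The substantive step is condition 2 — that $b_1$ and $b_2$ lie in one strongly connected component of $A_k(B')$, which then meets $P_{k+1}$ at $b_2$. I would establish mutual attraction of $b_1$ and $b_2$ by cases on how $A_1$ abuts $A_2$. If $A_1$ is a \emph{solo} left neighbor of $A_2$, then Lemma \ref{lem:solocovered} (with $R = A_2$, $S = A_1$) gives that any point of $A_1^*$ covers $A_2$; here $b_1$ lies in $A_1^*$ because the modified left center keeps it off \curl{A_1}{A_2} (a corner of $A_1$), so $b_1$ covers $b_2$. If instead $A_1$ is one of a \emph{paired} set of left neighbors of $A_2$, then $b_2 = r_{12} + \varepsilon\hat{x}$ is a point $r_i + \varepsilon\hat{x}$ of the modified left center of $A_2$, so Lemma \ref{lem:pairedcovered} gives that $b_2$ covers $A_1^* \ni b_1$. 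In either case $b_1$ and $b_2$ mutually attract, so all three conditions of Lemma \ref{lem:routingset} hold and $B_{k+1} \cup \{b_1, b_2\}$ routes $P_k$, having spent two beacons on three rectangles.

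I expect the main obstacle to be precisely this solo-versus-paired dichotomy at the $A_1$--$A_2$ interface, since the single point $b_2 = r_{12} + \varepsilon\hat{x}$ must simultaneously lie in $P_{k+1}$, mutually attract $b_1$, and occupy the repair position in the paired case. What makes it go through is the coincidence that $r_{12} + \varepsilon\hat{x}$ is both the repair position of Lemma \ref{lem:repair} and a modified-left-center point of $A_2$ of the form $r_i + \varepsilon\hat{x}$, so the placement dictated by repair is exactly the one Lemma \ref{lem:pairedcovered} certifies as covering $A_1^*$. The only remaining routine check is that $b_1$'s modified-left-center position keeps it inside $A_1^*$, away from the curl vertex.
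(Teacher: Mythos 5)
Your proposal is correct and follows essentially the same route as the paper: the same cut $C = L_1 \cup L_2 \cup A_1$, the same beacon placements $b_1 = r_1 + \varepsilon\hat{x}$ and $b_2 = r_{12} + \varepsilon\hat{x}$ (the latter in repair position inside $A_2$), and the same verification of the three conditions of Lemma~\ref{lem:routingset} using Lemma~\ref{lem:pairedcovered} for coverage of the cut-off rectangles. The only divergence is minor: for condition 2 the paper simply observes that $b_1$ and $b_2$ are visible (the segment between them lies in $A_1 \cup A_2$), whereas your solo-versus-paired case analysis at the $A_1$--$A_2$ interface via Lemmas~\ref{lem:solocovered} and~\ref{lem:pairedcovered} establishes the same mutual attraction by a more laborious but valid route.
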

		
		\begin{figure}[htbp] 
			\begin{center}
				\includegraphics*[scale=1.25]{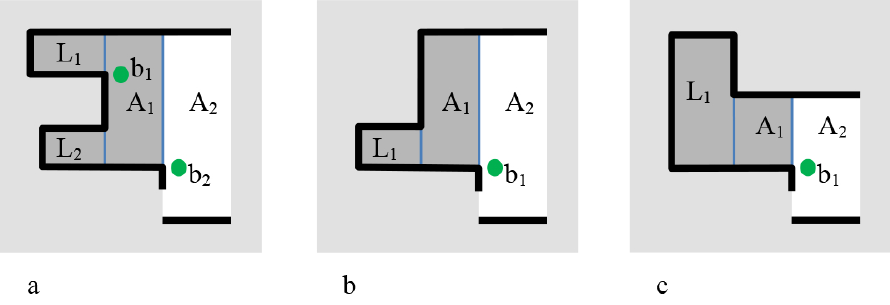} 
			\end{center}
			\caption{
				$A_2$ is a tall neighbor of $A_1$.
				(a) $A_1$ has two children $L_1$ and $L_2$. 
				(b) $A_1$ has a solo lower-left child. 
				(c) $A_1$ has a tall lower-left child.
			}
			\label{fig:twotall}
		\end{figure}
		
		\begin{proof}
			The two children $L_1$ and $L_2$ must be left paired
			children, as shown in Figure \ref{fig:twotall}a.

			This figure also introduces some visual conventions: 
			First, the figure shows the typical local area in $P_k$.
			Second, parts of the boundary of $P_k$ that are \emph{known} to be boundary of
			$P$ are shown with thick black lines.  Parts of the boundary of $P_k$ without
			thick black lines (such as the lower left side of $A_2$ in the figure) may or may not
			be boundary of $P$.
			Third, the beacon placements are shown as green dots.  Beacons placed
			horizontal to and near a reflex vertex (such as both $b_1$ and $b_2$ in the figure) are considered to be placed
			$\pm\varepsilon\hat{x}$ away from them.
			Finally, the choice of which
			rectangles to remove in the reduction are shown as shaded rectangles.

			In this situation,
			we have removed 3 rectangles ($L_1$, $L_2$, and $A_1$) at a cost of placing 2
			beacons ($b_1$ and $b_2$).
			Now we show that, if $P_{k+1}$ has a set $B_{k+1}$ of beacons that allows a
			routing, then $P_k$ has a set of beacons $B_{k} = B_{k+1} \cup \{ b_1, b_2 \}$
			that allows a routing.
 
			Let $C = P_{k} \setminus P_{k+1}$, i.e. $C$ is the union of the
			rectangles $L_1$, $L_2$, and $A_1$.  Also let $B = \{ b_1, b_2 \}$.
			Now the conditions of Lemma \ref{lem:routingset}\ are seen to be satisfied:
			$b_1$ covers the cut-off rectangles $L_1, L_2,$ and $A_1$ (by Lemma
			\ref{lem:pairedcovered}); $b_1$ and $b_2$ are visible, so $B'$ is strongly
			connected in the attraction graph, and $b_2$ is in repair position
			in $P_{k+1}$.
		\end{proof}

		In Case 1, where $A_2$ is taller than $A_1$, it remains to examine the cases
		where $A_1$ has one child.
		We first consider the situation where the one child is a lower neighbor.

		\begin{lemma}\label{lem:lowerleft}
			If $A_2$ is a tall upper right neighbor of $A_1$, and $A_1$ has one
			lower-left child, then $P_k$ can be reduced by 2 rectangles at a cost of 1 beacon.
		\end{lemma}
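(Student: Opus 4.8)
The plan is to realize this as one reduction of Lemma \ref{lem:routingset} type that cuts the polygon along the vertical $V$ shared by $A_1$ and $A_2$, removing the two rectangles $A_1$ and $L = A_0$, and to discharge all three conditions of that lemma with a \emph{single} new beacon. Since $A_2$ is a tall right neighbor of $A_1$, the observation that a tall neighbor is the unique neighbor on its side forces $A_2$ to be the only right neighbor, so the lone child $L$ is a left neighbor; as $A_1$ has exactly one child, $L$ is either a solo or a tall lower-left neighbor of $A_1$ (the two situations of Figure \ref{fig:twotall}b,c), and being a deepest node it is a leaf rectangle. These are precisely the hypotheses of Lemmas \ref{lem:sololeafcovered} and \ref{lem:tallleafcovered}.

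The beacon I would place is $b = r_{12} + \varepsilon\hat{x}$, the repair position associated with detaching $A_1$ from $A_2$. The first task is Condition~1, local coverage of $C = A_1 \cup L$. Because $r_{12}$ is the lower-right corner of $A_1$ and $A_2$ occupies the strip immediately to its right (and, being tall, extends below $A_1$), the hull $\RH{A_1 \cup \{b\}}$ is just $A_1$ widened by $\varepsilon$ into $A_2$, so $\RH{A_1 \cup \{b\}} \subset P$; Observation \ref{obs:hullcontained} then gives that $b$ covers $A_1$. Feeding this same enlarged rectangle into Lemma \ref{lem:sololeafcovered} (when $L$ is solo) or Lemma \ref{lem:tallleafcovered} (when $L$ is tall), with $\RH{A_1 \cup \{b\}}$ in the role of $R$ and $L$ in the role of $S$, yields that $b$ also covers $L$; the hull condition on the beacon is automatic since $b \in \RH{A_1 \cup \{b\}}$. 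Everything touched lies in $A_1$, $L$, and the sliver of $A_2$, so the coverage is local.

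The remaining conditions are light, and the same placement settles them simultaneously. For Condition~2, $B' = \{b\}$ is a singleton whose only strongly connected component is $\{b\}$ itself, and $b \in A_2 \subseteq P_{k+1}$. For Condition~3, the detachment rectangle is $A_1$ and the attachment rectangle is $A_2$; since $A_1$ is a short left neighbor of $A_2$, the observation that a short neighbor is either solo or one of a pair tells us that either Condition~3 is vacuous (solo case) or $A_1$ is paired, and in the paired case $b = r_{12} + \varepsilon\hat{x}$ is exactly the repair position demanded by Lemma \ref{lem:repair}. Lemma \ref{lem:routingset} then delivers $B_k = B_{k+1} \cup \{b\}$, a reduction of $2$ rectangles for $1$ beacon.

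The step I expect to be the real obstacle is verifying the extra hypothesis of the tall-leaf case, Lemma \ref{lem:tallleafcovered}: I must confirm that the two edges of $\RH{A_1 \cup \{b\}} \cup L$ incident to its reflex vertex $r_{01}$ are genuinely boundary of $P$, rather than internal chords introduced by an earlier reduction. Here $r_{01}$ is the top-left corner of $A_1$; its horizontal incident edge lies along the polygon's top boundary (horizontal edges are never created by cuts, which produce only vertical chords), and its vertical incident edge is the part of $L$'s original right wall lying above $A_1$. Since $L$ is a leaf of the original dual tree that has never been cut, and the $\varepsilon$-widening of $A_1$ happens at the opposite (right) side and so never reaches $r_{01}$, both edges are undisturbed original boundary. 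I would argue this explicitly from the thick-edge conventions of Figure \ref{fig:twotall}c and record that $\varepsilon$ is small enough to keep $\RH{A_1 \cup \{b\}}$ inside $A_1 \cup A_2$.
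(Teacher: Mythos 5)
Your proposal is correct and matches the paper's proof essentially verbatim: the same beacon placement at $r_{12} + \varepsilon\hat{x}$, coverage of $A_1$ via Observation \ref{obs:hullcontained} and of $L$ via Lemma \ref{lem:sololeafcovered} or Lemma \ref{lem:tallleafcovered}, and the same discharge of the strong-connectivity, membership-in-$P_{k+1}$, and repair-position conditions of Lemma \ref{lem:routingset}. One small caveat on your final paragraph: the claim that $L$ ``is a leaf of the original dual tree that has never been cut'' need not hold (a leaf of $T_k$ can be a former internal node of $T_0$), but this is harmless, since the edge condition of Lemma \ref{lem:tallleafcovered} only requires the two reflex-incident edges to be boundary of the \emph{current} polygon $P_k$---in which earlier cut verticals count as boundary---and that follows immediately from $L$ being a leaf of $T_k$.
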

		\begin{proof}
			The child $L$ is either a short neighbor or a tall neighbor of $A_1$.
			These two cases are shown in  Figure \ref{fig:twotall}b and
			\ref{fig:twotall}c, respectively.  Also shown are the cut-off regions $C = L_1 \cup A_1$ and the
			placement of a beacon $b_1$ to complete the reduction.
			
			By Observation \ref{obs:hullcontained}, $b_1$ covers $A_1$.
			By Lemma \ref{lem:sololeafcovered} or  
			\ref{lem:tallleafcovered}, $b_1$ covers $L_1$.
			$b_1$ is itself (trivially) a strongly-connected graph,
			and it is in $P_{k+1}$.  Furthermore, it is in repair position. 
			Thus by Lemma  Lemma \ref{lem:routingset}, the set of beacons $B_{k+1} \cup
			\{ b_1 \}$ is a routing set.
		\end{proof}

		Now we consider the situation where the one child is an upper-left neighbor.
		We will handle the case of a short upper-left child here, and defer
		the case of a tall upper-left child to Section \ref{sec:deferred}.

		\begin{lemma}\label{lem:upperleft}
			If $A_2$ is a tall upper right neighbor of $A_1$,
			and $A_1$ has one short upper-left child,
			then $P_k$ can be reduced by 2 rectangles at a cost of 1 beacon.
		\end{lemma}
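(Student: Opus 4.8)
The plan is to cut off the two rectangles $L$ and $A_1$ along the vertical $V$ shared by $A_1$ and $A_2$, place a single beacon $b_1$ at the repair position $r_{12} + \varepsilon\hat{x}$, and then invoke Lemma \ref{lem:routingset} to conclude that $B_{k+1}\cup\{b_1\}$ routes $P_k$. This achieves the stated cost of one beacon for two removed rectangles. I would first record the local structure: since $A_1$ has exactly one child and its parent $A_2$ lies to the right, $L$ is the \emph{only} left neighbor of $A_1$; being short it is therefore a solo neighbor, and being the deepest node it is a leaf. Thus $L$ is a solo short upper-left leaf, which is exactly the configuration Lemma \ref{lem:sololeafcovered} was built for, and $C = L \cup A_1$ is the cut-off region.

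For condition~1 of Lemma \ref{lem:routingset} I must show $b_1$ locally covers $C$. Because $A_2$ is \emph{tall}, its vertical extent contains that of $A_1$, so $b_1 = r_{12}+\varepsilon\hat{x}$ lies in the interior of $A_2$ and $\RH{A_1 \cup \{b_1\}}$ is merely $A_1$ together with a width-$\varepsilon$ strip lying inside $A_2$; hence $\RH{A_1\cup\{b_1\}}\subset P$. Observation \ref{obs:hullcontained} then gives that $b_1$ covers $A_1$, and since the hull sits in $A_1\cup A_2$ this coverage is local. To cover $L$, I play $\RH{A_1\cup\{b_1\}}$ in the role of $R$ in Lemma \ref{lem:sololeafcovered}, with $S=L$: since $L$ is a solo leaf, the reflex vertex $r_{01}$ shared by $L$ and $A_1$ has both incident edges (the wall below $r_{01}$ on the left of $A_1$, and the bottom wall of $L$) on the boundary of $P$, so the lemma applies and $b_1$ covers $L$ by a path confined to $A_2\cup A_1\cup L$, hence locally.

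The remaining conditions are immediate. Since $b_1\in A_2\subseteq P_{k+1}$ and $B'=\{b_1\}$ is a single point forming its own strongly connected component, condition~2 holds; and $b_1$ occupies the repair position for the cut between the detachment rectangle $A_1$ and its attachment rectangle $A_2$, so condition~3 holds whether or not $A_1$ happens to be one of a paired set against $A_2$. Lemma \ref{lem:routingset} then yields the reduction. The one point genuinely needing care — and hence the main obstacle — is checking that a \emph{single} beacon can simultaneously cover $A_1$, cover $L$, lie in $P_{k+1}$, and sit at the repair position; this all hinges on $A_2$ being tall, which is precisely what keeps $\RH{A_1\cup\{b_1\}}$ inside $P$ and places $b_1$ in the interior of $A_2$. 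The argument is otherwise the exact analogue of Lemma \ref{lem:lowerleft}, with an upper-left rather than a lower-left child.
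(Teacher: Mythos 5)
Your proposal is correct and takes essentially the same route as the paper: the paper likewise cuts $C = L_1 \cup A_1$, places a single beacon at $r_{12} + \varepsilon\hat{x}$, covers $A_1$ via Observation \ref{obs:hullcontained} (using that $A_2$ is tall) and the solo leaf via Lemma \ref{lem:sololeafcovered}, notes the beacon is in $P_{k+1}$, trivially strongly connected, and in repair position, and concludes by Lemma \ref{lem:routingset}. The only difference is that you spell out the details the paper compresses into ``as in the previous proof,'' referring to Lemma \ref{lem:lowerleft}.
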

		\begin{proof}
			This situation is shown in  Figure \ref{fig:twotallupper}a,
			along with the rectangles to remove ($C = L_1 \cup A_1$), and the 
			placement of a beacon $b_1$ to complete the reduction.
			
			As in the previous proof, $b_1$ covers $A_1$ and $L_1$.
			$b_1$ is a strongly-connected graph,
			it is in $P_{k+1}$, and is in repair position.
			Thus by Lemma \ref{lem:routingset} the set of beacons $B_{k+1} \cup \{ b_1
			\}$ is a routing set.
		\end{proof}

		Figure \ref{fig:twotallupper}b shows the situation when $L_1$ is a tall
		upper-left child of $A_1$.  This
		situation fails the condition in Lemma \ref{lem:tallleafcovered}.
		Here a beacon at $b_1$ would not suffice, as
		any point of $L_1$ below $b_1$ would not attract $b_1$.

		\begin{figure}[htbp] 
			\begin{center}
				\includegraphics*[scale=1.25]{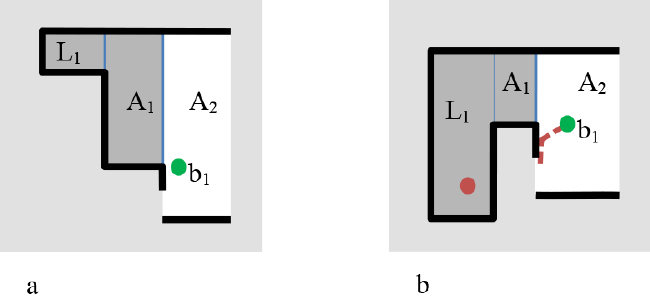} 
			\end{center}
			\caption{ 
				$A_2$ is a tall neighbor of $A_1$.
				(a) $A_1$ has a short upper-left child $L_1$.
				(b) $A_1$ has a tall  upper-left child $L_1$; the point
					$b_1$ is not attracted by the point in $L_1$.
					Here we have exaggerated $\varepsilon$ to make the
					diagram clear.
			}
			\label{fig:twotallupper}
		\end{figure}
		
		The technique we use to handle this case involves examining the structure of
		$A_2$'s subtree.  We defer that analysis until
		Section \ref{sec:deferred}.

	\subsection{Case 2: $A_2$ is a solo neighbor of $A_1$}\label{sec:atwosolo}

		As in the previous case, $A_1$ must have at least one child, can have at most
		two children, and all of its children are left children.  We again start with 
		the case of when $A_1$ has two children.

		\begin{lemma}\label{lem:twosolotwo}
			If $A_2$ is a solo upper right neighbor of $A_1$, and $A_1$ has two children,
			then $P_k$ can be reduced by 3 rectangles at a cost of 2 beacons.
		\end{lemma}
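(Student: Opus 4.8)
The plan is to follow the pattern established in Lemma~\ref{lem:twotalltwokids}, adapting it to the situation where $A_2$ is a \emph{solo} (rather than tall) upper right neighbor of $A_1$. First I would fix the configuration: since $A_1$ has two children and all are left children (as stated at the start of Case~2), the two children $L_1$ and $L_2$ must be paired left neighbors of $A_1$, with $L_1$ the upper and $L_2$ the lower one. I would draw the local area in $P_k$, mark the cut-off region as $C = L_1 \cup L_2 \cup A_1$, and place two beacons. The natural placement, mirroring the tall case, is a beacon $b_1$ in the modified left center of $A_1$ (so that it covers $L_1^*$ and $L_2^*$ by Lemma~\ref{lem:pairedcovered}, and also covers $A_1$ itself by Observation~\ref{obs:contained}), together with a second beacon $b_2$ placed in repair position near $r_{12}$, i.e.\ at $r_{12} + \varepsilon\hat{x}$ or $r_{12} - \varepsilon\hat{x}$ as appropriate.

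Next I would verify the three conditions of Lemma~\ref{lem:routingset} for $B' = \{b_1, b_2\}$ and $C = L_1 \cup L_2 \cup A_1$. Condition~(1), that $B'$ locally covers $C$, follows from the coverage of $b_1$ just described, taking care that the curl-vertex exclusions ($L_i$ versus $L_i^*$) are absorbed exactly as in the tall case, since the curl vertices of $L_1, L_2$ can be folded into the covering argument or handled by the detachment-rectangle discussion of Section~\ref{sec:repair}. For condition~(2), I must check that every strongly connected component of $A_k(B')$ meets $P_{k+1}$; the cleanest way is to ensure $b_1$ and $b_2$ are mutually attracting (or that $b_2 \in P_{k+1}$ and $b_1, b_2$ lie in the same strong component), so that the single component containing both beacons has a representative in $P_{k+1}$, namely $b_2$. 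For condition~(3), I would observe that the relevant detachment rectangle (the one sharing the vertical $V$ with the attachment rectangle $A_2$) is $A_1$ itself, and confirm that $b_2$ sits in repair position with respect to that cut, exactly as in Lemma~\ref{lem:twotalltwokids}.

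The main obstacle I anticipate is the geometry of the solo configuration on the $A_1$--$A_2$ side. Because $A_2$ is solo rather than tall, the shared vertical $V$ between $A_1$ and $A_2$ behaves differently: $A_1$ may itself be one of a paired set of right neighbors of $A_2$, which is precisely the situation that can \emph{trap} paths and force the repair beacon. So the delicate point is to confirm that $b_2$ in repair position genuinely repairs any inductively-generated trapped paths crossing $V$, invoking Lemma~\ref{lem:repair}, and to check that the reverse-path asymmetry noted after that lemma causes no trouble. I would also double-check that $b_1$ and $b_2$ can be chosen visible to each other (or at least mutually attracting) despite the solo geometry; if direct visibility fails, I would instead argue mutual attraction along the top wall shared by $A_1$ and $A_2$, or place $b_2$ so that $b_1$ attracts it and it attracts $b_1$ through $A_1$. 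Once these checks pass, the reduction removes the three rectangles $L_1, L_2, A_1$ at the cost of the two beacons $b_1, b_2$, and the lemma follows by Lemma~\ref{lem:routingset}.
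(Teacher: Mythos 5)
Your proposal is correct and follows essentially the same route as the paper: cut off $C = L_1 \cup L_2 \cup A_1$, place $b_1$ in the modified left center of $A_1$ (the paper uses $r_1 + \varepsilon\hat{x}$) so that Lemma~\ref{lem:pairedcovered} gives coverage of all of $C$ (the curl vertices are moot since $L_1, L_2$ are leaves), place a second beacon next to the $A_1$--$A_2$ vertical inside $A_2 \subseteq P_{k+1}$ (the paper uses $q + \varepsilon\hat{y}$, which guarantees visibility with $b_1$), and invoke Lemma~\ref{lem:routingset}. Your one inaccuracy is the anticipated obstacle: because $A_2$ is a \emph{short} neighbor of $A_1$, the rectangle $A_1$ is a \emph{tall} left neighbor of $A_2$ and hence, by Observation~1, the \emph{only} left neighbor of $A_2$, so $A_1$ can never be one of a paired set of neighbors of $A_2$; thus no trapped paths can arise, condition~(3) of Lemma~\ref{lem:routingset} is vacuous (as the paper notes, ``there are no trapped paths to repair''), and your placement of $b_2$ at $r_{12} + \varepsilon\hat{x}$ is harmless but unnecessary as a repair position---it still works as the connector to $P_{k+1}$, since even when direct visibility with $b_1$ fails the two beacons mutually attract within $A_1 \cup A_2$.
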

		\begin{proof}
			Because they are both left children, $A_1$'s children must be short children; 
			this situation is shown in Figure
			\ref{fig:twosolo}a, along with the rectangles to remove ($C = L_1 \cup L_2 \cup A_1$), and the 
			placement of beacons $b_1$ at $r_1 + \varepsilon\hat{x}$ and $b_2$ at $q +
			\varepsilon\hat{y}$ to complete the reduction.
			($r_1$ is the reflex vertex shared between $L_1$ and $A_1$, 
			and $q$ is the reflex vertex shared between $A_1$ and $A_2$.)
			
			By Lemma \ref{lem:pairedcovered}, the beacon $b_1$ covers all of $C$, and
			beacon $b_2$ is used only to connect $b_1$ to
			the beacons of $P_{k+1}$.
			The attraction graph on $b_1$ and $b_2$ is strongly-connected,
			as they are visible.
			The beacon $b_2$ is in $P_{k+1}$, and there are no trapped paths to repair.
			Thus, by Lemma \ref{lem:routingset}, the set of beacons $B_{k+1} \cup \{ b_1,
			b_2 \}$ is a routing set.
		\end{proof}
		
		\begin{figure}[htbp] 
			\begin{center}
				\includegraphics*[scale=1.25]{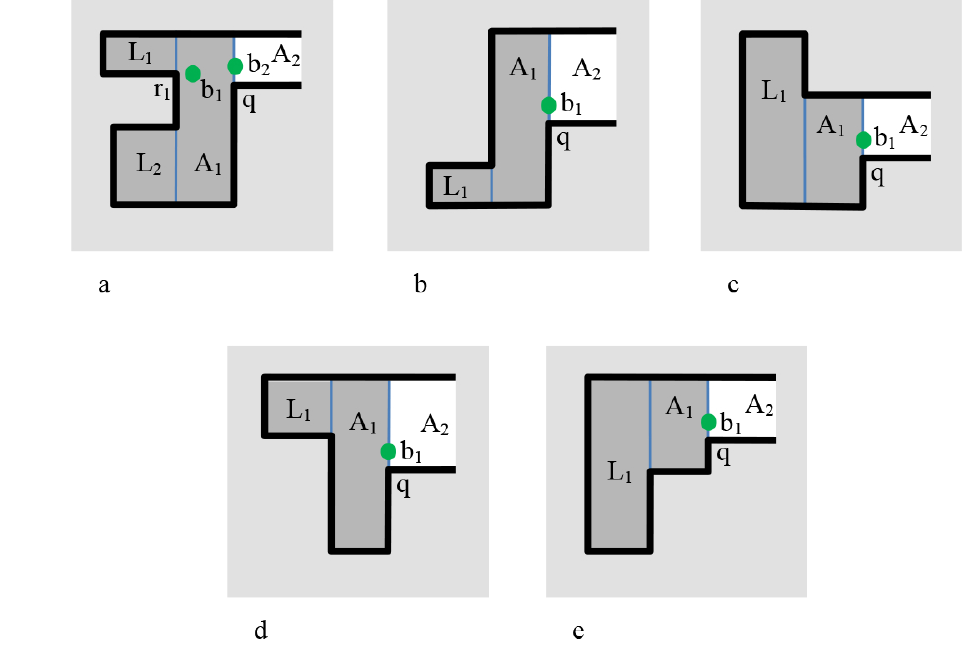} 
			\end{center}
			\caption{ 
				$A_2$ is a solo neighbor of $A_1$.
				(a) $A_1$ has two children $L_1$ and $L_2$.
				(b) $A_1$ has a solo lower-left child. 
				(c) $A_1$ has a tall lower-left child.
				(d) $A_1$ has a solo upper-left child. 
				(e) $A_1$ has a tall upper-left child.
			}
			\label{fig:twosolo}
		\end{figure}

		Since all of the cases when $A_1$ has one child are similar, we handle them in
		one lemma.
		
		\begin{lemma}\label{lem:twosoloone}
			If $A_2$ is a solo upper right neighbor of $A_1$, and $A_1$ has one child,
			then $P_k$ can be reduced by 2 rectangles at a cost of 1 beacon.
		\end{lemma}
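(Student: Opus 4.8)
The plan is to follow the template of the one-child lemmas of Case~1 (Lemmas~\ref{lem:lowerleft} and~\ref{lem:upperleft}): cut off the region $C = L_1 \cup A_1$, install a single beacon $b_1$, and verify the three hypotheses of Lemma~\ref{lem:routingset}. The structural fact I would lean on throughout is that, because $A_2$ is \emph{short} relative to $A_1$, the detachment rectangle $A_1$ is a \emph{tall} left neighbor of the attachment rectangle $A_2$; a tall neighbor is the unique neighbor on its side, so $A_1$ is the only left neighbor of $A_2$ and is therefore not one of a paired set. Consequently no inductively-generated path can be trapped by this cut, condition~3 of Lemma~\ref{lem:routingset} is vacuous, and --- unlike in Case~1 --- no repair beacon is required.

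First I would place $b_1$ on the vertical $V$ shared by $A_1$ and $A_2$, say at $q + \varepsilon\hat{y}$ where $q = r_{12}$, nudged off the reflex vertex so that the attraction paths stay determinate. This placement is essentially forced: a beacon strictly to the right of $V$ lies in the interior of $A_2$, and since $A_2$ is short, the strip to the right of $A_1$ and below $q$ is \emph{outside} $P$, so $\RH{A_1 \cup \{b_1\}}$ would escape $P$. Keeping $b_1$ on $V$ instead gives $\RH{A_1 \cup \{b_1\}} = A_1$ and places $b_1$ on the boundary of $P_{k+1}$, so $b_1 \in P_{k+1}$ (condition~2); the one-vertex attraction digraph is trivially strongly connected.

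Next I would verify condition~1, that $b_1$ covers all of $C$. Coverage of $A_1$ is immediate from Observation~\ref{obs:contained}, since a beacon on a vertical lies in both incident rectangles and hence $b_1 \in A_1$. For the child there are four subcases (Figure~\ref{fig:twosolo}b--e). When $L_1$ is a solo neighbor (subcases b and d) Lemma~\ref{lem:solocovered} applies directly: $b_1$ is a point of $A_1$, and since $L_1$ is a leaf, $L_1^* = L_1$. When $L_1$ is tall (subcases c and e) I would invoke Lemma~\ref{lem:tallleafcovered} with $R = A_1$; because $\RH{A_1 \cup \{b_1\}} = A_1$, the L-shape of that lemma is exactly $A_1 \cup L_1$, whose two reflex-incident edges are a horizontal boundary edge of $A_1$ and the exposed vertical side of the leaf $L_1$.

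The step I expect to be the main obstacle is this last edge check in the tall upper-left subcase (e) --- the analogue of the configuration deferred in Case~1 (Figure~\ref{fig:twotallupper}b) --- and the point worth stressing is \emph{why} it succeeds here though it failed there. In Case~1 the beacon had to sit at the repair position $r_{12} + \varepsilon\hat{x}$ inside the tall rectangle $A_2$, so $\RH{A_1 \cup \{b_1\}}$ protruded past the right wall of $A_1$ and its bottom edge ran through the interior of $A_2$, violating the hypothesis of Lemma~\ref{lem:tallleafcovered}. Here $A_2$ is short and does not dip below $A_1$, and --- since no repair is needed --- $b_1$ may rest on $V$ with $\RH{A_1 \cup \{b_1\}} = A_1$; the bottom-incident edge then coincides with $A_1$'s own boundary edge, and the condition holds in every subcase. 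With coverage confirmed, Lemma~\ref{lem:routingset} yields that $B_{k+1} \cup \{ b_1 \}$ is a routing set, having removed the two rectangles $A_1$ and $L_1$ for the single beacon $b_1$.
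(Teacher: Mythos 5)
Your proposal is correct and takes essentially the same route as the paper's proof: the same beacon placement $b_1 = q + \varepsilon\hat{y}$ on the vertical shared by $A_1$ and $A_2$, the same four-subcase analysis of the child (Figure \ref{fig:twosolo}b--e), coverage of $A_1$ by Observation \ref{obs:contained} and of $L_1$ by Lemma \ref{lem:sololeafcovered} (equivalently, Lemma \ref{lem:solocovered}, since $b_1 \in A_1$) or Lemma \ref{lem:tallleafcovered}, and the conclusion via Lemma \ref{lem:routingset}. Your additional remarks---that $A_1$ is a tall, hence only, left neighbor of $A_2$, so condition~3 is vacuous and no repair beacon is needed, and why the tall upper-left subcase succeeds here although it had to be deferred in Case~1---are correct elaborations of points the paper leaves implicit.
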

		\begin{proof}
			We consider the four possibilities for $A_1$'s child $L_1$:
			either $L_1$ is a lower-left solo neighbor of $A_1$,
			a lower-left tall neighbor,
			an upper-left solo neighbor,
			or an upper-left tall neighbor.
			These possibilities are shown in Figure \ref{fig:twosolo}b--e.
			In each, the beacon $b_1$ is placed at $q + \varepsilon\hat{y}$, where
			$q$ is the reflex vertex shared
			between $A_1$ and $A_2$.
			
			In the various cases, $L_1$ is either a solo neighbor or a tall neighbor of
			$A_1$, and Lemma \ref{lem:sololeafcovered}\ or Lemma
			\ref{lem:tallleafcovered} applies to establish that $b_1$ covers $L_1$.
			By Observation \ref{obs:contained}, $b_1$ also covers $A_1$.
			Since $b_1$ is in $A_2$, and is itself a trivial
			strongly connected graph, the conditions of Lemma \ref{lem:routingset} are
			satisfied and this lemma follows.
		\end{proof} 

	\subsection{Case 3: $A_2$ is a paired neighbor of $A_1$}
			\label{sec:atwopaired} 
		The rectangle paired with $A_2$ as a right neighbor of $A_1$ must be a leaf
		$L_1$.
		$A_1$ must have at least one child; it can have up to three.
		
		\begin{lemma}\label{lem:twopairedthree}
			If $A_2$ is a paired upper right neighbor of $A_1$, and $A_1$ has three
			children, then $P_k$ can be reduced by 4 rectangles at a cost of 2 beacons.
		\end{lemma}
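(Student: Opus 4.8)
The plan is to apply Lemma~\ref{lem:routingset} with the cut-off region $C$ taken to be the subtree of $A_1$, detached by severing the vertical $V$ between $A_1$ and $A_2$. Since $A_1$ is the parent of a deepest node, all of $A_1$'s children are leaves; as $A_2$ is a paired upper-right neighbor, the paired partner $L_1$ is a lower-right leaf, and the two remaining children form a pair of left-neighbor leaves $L_2$ (upper) and $L_3$ (lower). Thus $C = A_1 \cup L_1 \cup L_2 \cup L_3$ is four rectangles, the detachment rectangle is $A_1$, and the attachment rectangle is $A_2$. Because $A_2$ is a \emph{short} right neighbor of $A_1$, the rectangle $A_1$ is the \emph{tall} (hence unique) left neighbor of $A_2$; so $A_1$ is not one of a paired set of neighbors of $A_2$, no inductively-generated path can be trapped by this cut, and the third condition of Lemma~\ref{lem:routingset} is vacuous. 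That disposes of the structural bookkeeping; the real content is the coverage condition.

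I would place two beacons. The first, $b_1$, goes in the modified left center of $A_1$ (at the lower-left reflex vertex nudged inward, $r_2 + \varepsilon\hat{x}$); by Lemma~\ref{lem:pairedcovered} it covers $L_2^* = L_2$ and $L_3^* = L_3$ (leaves have no reflex curl vertices), and by Observation~\ref{obs:contained} it covers $A_1$. The second, $b_2$, goes in $A_2$ just above $r_{12}$ (at $r_{12} + \varepsilon\hat{y}$), so that $b_2 \in P_{k+1}$. Making $b_1$ and $b_2$ mutually visible across $A_1$ renders $A_k(B')$ a single strongly connected component containing the point $b_2 \in P_{k+1}$, which yields the second condition of Lemma~\ref{lem:routingset}. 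What is left for the first condition is the coverage of the paired right leaf $L_1$, and this is the whole difficulty.

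The key is that coverage of a region is the \emph{permissive}, asymmetric notion: routing \emph{into} $L_1$ and routing \emph{out of} $L_1$ may use different beacons, and this is precisely what lets two beacons suffice. Routing out of $L_1$ is serviced by $b_2$: a robot in $L_1$ is pulled up to the top wall of $L_1$, slides left to the shared reflex vertex, climbs the right wall of $A_1$ to $r_{12}$, and then runs up $V$ (now interior in $P_k$) to $b_2$---a local path lying in $L_1 \cup A_1 \cup A_2$. The hard direction, which I expect to be the main obstacle, is routing \emph{into} $L_1$. No beacon in $A_2$ can be attracted by a point of $L_1$: descending toward such a point it is caught on the bottom edge of $A_2$ and stranded, so $b_2$ is useless for this direction. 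Hence routing into $L_1$ must be handled by $b_1$, and I must show that the robot descending from $b_1$ to an arbitrary $q \in L_1$ does not leak through the now-interior chord $V$ into $A_2$ (where it would again be trapped on $A_2$'s bottom edge).

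Concretely, the descent segment from $b_1$ meets the right side of $A_1$ at a height that is a convex combination of the height of $b_1$ and that of $q$; I must guarantee this crossing lies at or below $r_{12}$, so the robot enters the right wall (and slides down to the reflex vertex) or enters $L_1$ directly, rather than crossing $V$. Choosing $b_1$ at the \emph{low} reflex $r_2$ is what makes this plausible, since $q$ lies below the reflex vertex shared by $A_1$ and $L_1$, which is itself below $r_{12}$. The delicate sub-case is when the left wall of $A_1$ sits entirely above $r_{12}$ (a narrow $A_1$ with a wide, low $L_1$), where even the lowest admissible placement of $b_1$ can send the crossing above $r_{12}$; resolving this case---by a sharper placement argument, by a short analysis relating the heights of the two walls of $A_1$, or by exploiting $\varepsilon$ being below the resolution of the vertex arrangement---is the step where the genuine care lies and the one I would expect to occupy most of the proof.
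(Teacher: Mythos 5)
Your structural bookkeeping matches the paper exactly: the cut-off region $C = A_1 \cup L_1 \cup L_2 \cup L_3$, the observation that $A_1$ is the tall (hence solo) left neighbor of $A_2$ so that no paths are trapped and condition 3 of Lemma \ref{lem:routingset} is vacuous, and the visibility of $b_1$ and $b_2$ inside $A_1$ are all as in the paper. But the coverage argument has a genuine gap, which you flag yourself and do not close --- and the ``delicate sub-case'' you identify is not merely delicate: your placement scheme cannot be repaired within itself. Every point of the modified left center of $A_1$ lies at height at least $y(r_2)$, the top of $L_3$, because the left center is by definition the full-width strip spanning the left wall of $A_1$. Nothing forces $y(r_2) < y(r_{12})$: take $L_3$ nearly as tall as $A_1$ and $A_2$ nearly as tall as $A_1$, so $y(r_2) > y(r_{12})$. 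Now make $L_1$ very wide and take $q$ near its far right end, just below $y(r')$. The straight pull from any beacon $b$ in the modified left center toward $q$ crosses the line through $A_1$'s right side at a height that is a convex combination heavily weighted toward $y(b) \geq y(r_2) > y(r_{12})$ (and tends to $y(b)$ as $L_1$ widens, for any horizontal position of $b$ in $A_1$); the robot therefore passes through the now-interior vertical $V$ into $A_2$ and is stranded on $A_2$'s bottom edge at the foot of the perpendicular to $q$. So for wide enough $L_1$, \emph{no} point compatible with Lemma \ref{lem:pairedcovered} is attracted by all of $L_1$: any single beacon that covers $L_2$ and $L_3$ via that lemma is too high to satisfy the into-$L_1$ half of condition 1 of Lemma \ref{lem:routingset}. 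Neither a sharper placement in the center nor an $\varepsilon$-argument can help, since the obstruction is the lower bound $y(r_2)$ on the beacon's height, not the choice of $\varepsilon$.

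The paper avoids this by abandoning Lemma \ref{lem:pairedcovered} in this case and splitting the left pair between the two beacons using hull containment instead. It places $b_1$ at $t + \varepsilon\hat{y}$, where $t$ is the \emph{lower-left corner} of $A_1$: this point lies on the vertical between $A_1$ and $L_3$, so it covers both $A_1$ and $L_3$ by Observation \ref{obs:contained}, and since $A_1$ and $L_1$ share their bottom edge, $\RH{L_1 \cup \{b_1\}}$ is a bottom strip contained in $A_1 \cup L_1$, so $b_1$ covers $L_1$ by Observation \ref{obs:hullcontained} --- two-way straight-line visibility with every point of $L_1$, at height $\varepsilon$, below any possible leak through $V$. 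Coverage of $L_2$ is then reassigned to $b_2$ at $u - \varepsilon\hat{y}$ ($u$ the upper-right corner of $A_1$), which covers $L_2$ by the symmetric top-strip hull argument and lies in $A_2 \subseteq P_{k+1}$. In short: your instinct about where the difficulty lies (routing \emph{into} $L_1$, and the permissive asymmetric coverage notion) is exactly right, but the fix is not a refinement of your $b_1$; it is a different division of labor in which $b_1$ drops to the bottom corner and $b_2$ inherits $L_2$. As written, your proof does not establish condition 1 of Lemma \ref{lem:routingset} and is incomplete.
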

		\begin{proof}
			All of $A_1$'s neighbors must be short, as shown in Figure
			\ref{fig:twopaired}a.
			We place two beacons: $b_1$ at $t + \varepsilon\hat{y}$, where $t$ is
			the lower-left corner of $A_1$, and $b_2$ at $u - \varepsilon\hat{y}$,
			where $u$ is the upper-right corner of $A_1$.
	
			The beacon $b_1$ covers $L_1$ by Observation \ref{obs:hullcontained};
			it also covers $L_3 \cup A_1$, by Observation \ref{obs:contained}.
			The beacon $b_2$ covers $L_2$ by Observation \ref{obs:hullcontained}, and
			it is also a part of $P_{k+1}$.
			$b_1$ and $b_2$ are visible, and thus
			strongly connected in the attraction graph.  
			The reattachment of $A_1$ to $A_2$ causes no paths in $P_{k+1}$ to become
			trapped. By Lemma \ref{lem:routingset}, the result follows.
		\end{proof} 
		
		\begin{figure}[htbp] 
			\begin{center}
				\includegraphics*[scale=1.25]{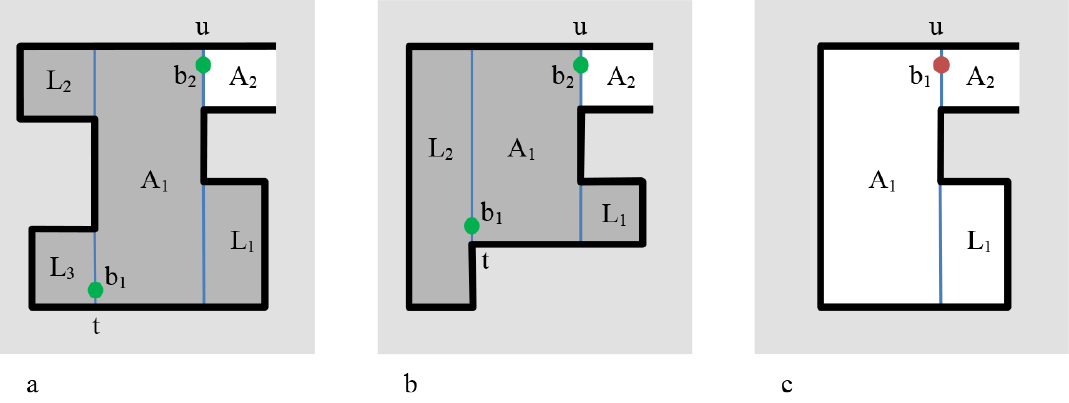} 
			\end{center}
			\caption{ 
				$A_2$ is a paired neighbor of $A_1$.
				(a) $A_1$ has three children. If $A_1$ has two short children,
				$L_2$ or $L_3$ can be removed.
				(b) $A_1$ one short and one tall child. 
				(c) $A_1$ has one short child.
			}
			\label{fig:twopaired}
		\end{figure}
		
		\begin{lemma}\label{lem:twopairedtwo}
			If $A_2$ is a paired upper right neighbor of $A_1$, and $A_1$ has two
			children, then $P_k$ can be reduced by 3 rectangles at a cost of 2 beacons.
		\end{lemma}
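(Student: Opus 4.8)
The plan is to follow the pattern of the three-children argument in Lemma~\ref{lem:twopairedthree}, adapting it to the presence of a single left child. First I would pin down the local structure. Since $A_2$ is a paired upper-right neighbor of $A_1$, the other right neighbor of $A_1$ is a leaf $L_1$, which is necessarily short and shares the bottom edge of $A_1$; it is always a child. With exactly two children, the remaining child is the unique left neighbor $L_\ell$ of $A_1$, so $L_\ell$ is either solo or tall, and it is a leaf, since every child of $A_1$ lies at the deepest level of the tree. The cut-off region is then $C = A_1 \cup L_1 \cup L_\ell$, three rectangles, and I would cut along the vertical between $A_1$ and $A_2$.

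Next I would dispose of the trapped-path question before placing any beacons. Because $A_2$ is a short right neighbor of $A_1$, the rectangle $A_1$ is a \emph{tall} left neighbor of $A_2$; hence, by the observation that a tall neighbor is the unique neighbor on its side, $A_1$ is the only left neighbor of $A_2$. Detaching $A_1$ from $A_2$ is therefore the ``detachment taller than attachment'' situation of Section~\ref{sec:repair}: no inductive path of $B_{k+1}$ becomes trapped, so condition~3 of Lemma~\ref{lem:routingset} is vacuous, and the only delicate point — the reflex vertex $r_{12}$ common to $A_1$ and $A_2$ — is absorbed into $C$ and covered by the new beacon covering $A_1$.

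For the beacons I would place $b_1 = t + \varepsilon\hat{y}$ just inside the lower-left corner $t$ of $A_1$, and $b_2 = u - \varepsilon\hat{y}$ just inside the upper-right corner $u$ of $A_1$, so that $b_2$ lies on the vertical between $A_1$ and $A_2$ and hence in $P_{k+1}$ (see Figure~\ref{fig:twopaired}b). Then $b_1$ covers $A_1$ by Observation~\ref{obs:contained}; it covers $L_1$ because $\RH{L_1 \cup \{b_1\}} \subseteq A_1 \cup L_1 \subseteq P$ (the two rectangles share a bottom edge), so Observation~\ref{obs:hullcontained} applies; and it covers $L_\ell$ by Lemma~\ref{lem:solocovered} when $L_\ell$ is solo (a leaf, so $L_\ell^* = L_\ell$) or by Lemma~\ref{lem:tallleafcovered} when $L_\ell$ is tall. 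In the tall case, keeping $b_1$ inside $A_1$ forces $\RH{A_1 \cup \{b_1\}} = A_1$, and the two edges of the L-shape $A_1 \cup L_\ell$ incident to its reflex vertex are one horizontal edge — part of the top or bottom of $A_1$, which is always polygon boundary since the vertical decomposition has no horizontal chords — and one vertical wall of $L_\ell$; both are boundary, so the hypothesis of Lemma~\ref{lem:tallleafcovered} is satisfied. Thus $b_1$ locally covers all of $C$. To finish I would invoke Lemma~\ref{lem:routingset}: condition~1 holds since $b_1$ covers $C$; condition~2 holds because $b_1$ and $b_2$ are mutually visible (the segment between them stays in $A_1 \cup A_2$), so $B' = \{b_1,b_2\}$ forms a single strongly connected component containing the point $b_2 \in P_{k+1}$; and condition~3 is vacuous. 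Hence $B_{k+1} \cup \{b_1, b_2\}$ routes $P_k$, removing three rectangles at a cost of two beacons.

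The step I expect to be the real obstacle is covering a \emph{tall} left child, which is precisely the configuration deferred to Section~\ref{sec:deferred} in the tall-$A_2$ case of Lemma~\ref{lem:upperleft}: there a single beacon was forced into $A_2$ in order to reach $P_{k+1}$, enlarging its rectangular hull and breaking Lemma~\ref{lem:tallleafcovered}. Here the two-beacon budget is exactly what rescues us: dedicating $b_2$ to connectivity lets the covering beacon $b_1$ remain inside $A_1$, so its hull never extends and the reflex-incident edges stay on the boundary. The crux is therefore to confirm that this separation of roles genuinely prevents the return attraction path from a tall $L_\ell$ from escaping into $A_2$ or $L_1$, and this reduces to the observation that the horizontal reflex-incident edge is always a rectangle top or bottom, and hence always polygon boundary.
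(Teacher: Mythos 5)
Your proof is correct and takes essentially the same approach as the paper's: the identical beacon placement ($b_1$ at $t+\varepsilon\hat{y}$, $b_2$ at $u-\varepsilon\hat{y}$ with $b_2$ serving only as the connector into $P_{k+1}$), the same appeal to Lemma~\ref{lem:routingset}, and the same observation that no trapped paths arise because the detachment rectangle $A_1$ is a tall (hence only) left neighbor of $A_2$. The only difference is organizational: the paper splits into the two-short-children case (reusing the placement and proof of Lemma~\ref{lem:twopairedthree}) and the tall-child case, whereas you treat the single left child uniformly via Lemmas~\ref{lem:solocovered} and \ref{lem:tallleafcovered}, along the way making explicit the boundary-edge hypothesis of Lemma~\ref{lem:tallleafcovered} that the paper leaves implicit.
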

		\begin{proof}
			If $A_1$ has two short children, then the situation must be
			as shown in Figure \ref{fig:twopaired}a, with either $L_2$ or $L_3$
			removed.  We can use the same beacon placement and proof as in Lemma
			\ref{lem:twopairedthree}, but remove only 3 rectangles instead of 4.
			
			If one of $A_1$'s children is tall, then the situation is as shown
			in Figure \ref{fig:twopaired}b if it is a tall top neighbor, or a similar
			situation if it is a tall bottom neighbor.  In either case, we 
			place $b_1$ at $t + \varepsilon\hat{y}$, where $t$ is
			the lower-left corner of $A_1$, and $b_2$ at $u - \varepsilon\hat{y}$,
			where $u$ is the upper-right corner of $A_1$.
	
			The beacon $b_1$ covers $C = L_1 \cup L_2 \cup A_1$, and 
			beacon $b_2$ is used only to connect $b_1$ to the beacons of $P_{k+1}$.
			The beacons $b_1$ and $b_2$ are visible, and thus strongly connected in the
			attraction graph.  
			The reattachment of $A_1$ to $A_2$ causes no paths in $P_{k+1}$ to become
			trapped. Thus, by Lemma \ref{lem:routingset}, the result follows.
		\end{proof} 
		
		If $A_1$ has only one child, then it must be the short child $L_1$ that is
		paired with $A_2$, as shown in Figure \ref{fig:twopaired}c.  We would like
		to use a beacon in the same place as $b_2$ in the two- and three-child cases,
		but this beacon ($b_1$ in the figure) is not attracted by all of the points of
		$L_1$.  Thus we must do something different.
		
		We will handle this case by examining one level farther up the dual tree,
		considering $A_2$'s children.  This, along with handling our previously
		deferred case, is done in the next section.

	\subsection{Three-level reduction}\label{sec:deferred}

		Consider $A_2$, the grandparent of some deepest $L$ node in the dual tree.
		If any of the cases handled by Lemmas \ref{lem:twotalltwokids} through
		\ref{lem:twopairedtwo} is present on any of its children, then perform the
		corresponding reduction.
		If one cannot do this, then every height-two subtree of $A_2$ can be pictured
		like the rectangles $A_1 \cup L_1$ in
		either Figure \ref{fig:threeleveltypes}a 
		(deferred from Section \ref{sec:atwotall})
		or Figure \ref{fig:threeleveltypes}b
		(deferred from Section \ref{sec:atwopaired}).
		It is also possible that $A_2$ has one or two height-one subtrees, as shown in
		Figure \ref{fig:threeleveltypes}c.
				
		\begin{figure}[htbp] 
			\begin{center}
				\includegraphics*[scale=1.25]{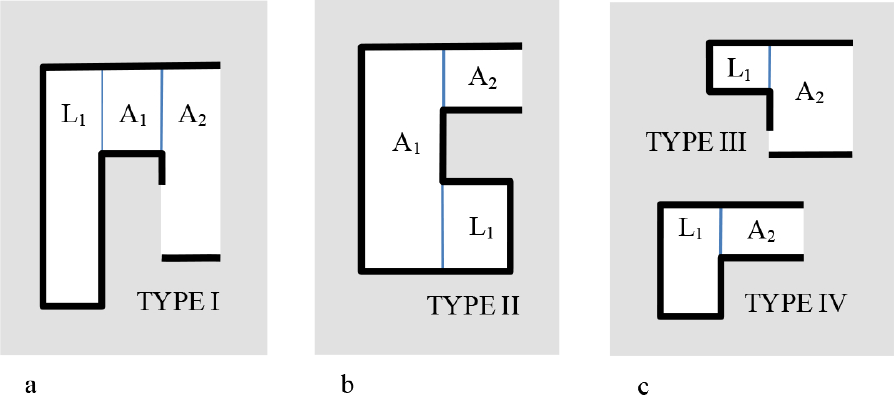} 
			\end{center}
			\caption{ 
				(a) Type I subtree of $A_2$.
				(b) Type II subtree of $A_2$. 
				(c) Types III and IV subtrees of $A_2$.
			}
			\label{fig:threeleveltypes}
		\end{figure}

		We call these subtrees Types I to IV.  In Type I, $L_1$ and $A_2$ are tall
		neighbors of $A_1$, and they all share a horizontal edge.
		In Type II, $L_1$ and $A_2$ are paired neighbors of $A_1$.
		In Type III, the subtree of $A_2$ is a short leaf.
		In Type IV, the subtree of $A_2$ is a tall leaf.
		
		We call Type II and Type IV subtrees \emph{tall}, because they include a tall
		neighbor of $A_2$.  We similarly call Type I and Type III subtrees
		\emph{short}.  Note that \emph{tall} and \emph{short} do not refer to the
		depth of the subtrees (Types I and II have depth 2, and Types III and IV have
		depth 1).
		
		We now remove our assumption that $A_2$ is an upper-right neighbor of $A_1$ in
		order to assume (without loss of generality) that $A_3$ is an upper-right
		neighbor of $A_2$.
		Note that $A_3$ does exist because we have assumed that the depth of the dual
		tree is at least 3.
		
		\begin{lemma}\label{lem:typeII}
			If $A_2$ has a Type II subtree, then $P_k$ can be reduced by 3 
			rectangles at a cost of 2 beacons or 5 rectangles at
			a cost of 3 beacons.
		\end{lemma}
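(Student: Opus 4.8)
The plan is to cut off the whole subtree rooted at $A_2$ along the vertical separating $A_2$ from its parent $A_3$ (an upper-right neighbor of $A_2$ after the reorientation), pay for the removed rectangles with beacons drawn from the covering lemmas of Section~\ref{sec:coverage}, and then discharge the hypotheses of Lemma~\ref{lem:routingset}. The point of descending to this third level is that, in a Type~II subtree, the child $A_1$ is a tall neighbor of $A_2$ whose own paired neighbors are $A_2$ and a leaf $L_1$; hence $A_1$, $L_1$, and $A_2$ all lie inside the cut-off region $C$. The pairing that defeated the two-level reduction of Section~\ref{sec:atwopaired} is therefore interior to $C$ and is dispatched by coverage alone, and the only pairing that can still trap an inductively-generated path of $B_{k+1}$ is the one, if any, at the $A_2$--$A_3$ cut.

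First I would treat the case in which the Type~II subtree is the only subtree of $A_2$, so $C=A_2\cup A_1\cup L_1$ is three rectangles. I would place $b_1$ in the modified center of $A_1$ (e.g.\ at the appropriate $\varepsilon\hat{x}$-offset of a reflex vertex of $A_1$): by Observation~\ref{obs:contained} it covers $A_1$, and by Lemma~\ref{lem:pairedcovered} it covers both $L_1$---with $L_1^{*}=L_1$, since a leaf has no reflex curl vertex---and $A_2^{*}$. I would place $b_2$ at the repair position $r_{23}\pm\varepsilon\hat{x}$ (the sign chosen to put it just inside $A_3$), so that $b_2\in P_{k+1}$, $b_2$ satisfies the repair-position requirement of Lemma~\ref{lem:repair} for the $A_2$--$A_3$ cut, and $b_2$ covers the only point of $A_2$ that $b_1$ can miss, namely a reflex curl vertex $\curl{A_2}{A_1}$ lying against the $A_2$--$A_3$ boundary (via the rectangular-hull condition of Observation~\ref{obs:hullcontained}). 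This removes three rectangles at the cost of two beacons. In the second case $A_2$ carries one further subtree that must be absorbed to stay within budget---necessarily a short (Type~I) subtree $A_1'\cup L_1'$ on the $A_3$ side, since a second tall neighbor of $A_2$ is impossible---so $C$ grows to the five rectangles $A_2,A_1,L_1,A_1',L_1'$ and I would add a third beacon $b_3$ in $A_1'$, covering $A_1'$ by Observation~\ref{obs:contained} and the tall leaf $L_1'$ by Lemma~\ref{lem:tallleafcovered}, for a reduction of five rectangles at the cost of three beacons.

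Finally I would verify the three conditions of Lemma~\ref{lem:routingset}. Condition~(1) follows by assembling the coverage claims once I confirm that $b_2$ really does pick up the possibly-reflex curl vertex of $A_2$ that the center beacon(s) exclude; keeping the $S^{*}$-versus-$S$ bookkeeping straight here is the fiddly part. Condition~(2) holds because $b_2\in A_3\subset P_{k+1}$ and the beacons are positioned so that $b_1$ (and $b_3$) are visible to $b_2$ through $A_2$, placing all of $B'$ in one strong component meeting $P_{k+1}$. Condition~(3) is exactly the placement of $b_2$ in repair position. The step I expect to be the main obstacle is twofold, and both parts concern the extra level we have introduced: first, making the single beacon $b_2$ simultaneously lie in $P_{k+1}$, occupy the repair position, cover the leftover curl vertex of $A_2$, and remain mutually visible with the subtree beacons, so that the attraction between $b_1$ (or $b_3$) and $b_2$ genuinely routes across $A_2$ rather than slipping into $L_1$ or out of the polygon at a bounding vertical; and second, in the five-rectangle case, covering the tall leaf $L_1'$, which is precisely the configuration that defeats the naive leaf-coverage argument (cf.\ Figure~\ref{fig:twotallupper}b) and which one must handle by using the presence of $A_3$ to satisfy the edge hypothesis of Lemma~\ref{lem:tallleafcovered}. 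The $3/2$-versus-$5/3$ split is governed by whether $A_2$ carries only the Type~II subtree or also a second height-two subtree, and both ratios respect the two-beacons-per-three-rectangles budget.
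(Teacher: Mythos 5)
Your overall strategy matches the paper's for most of the configurations: descend to $A_2$, cut at the $A_2$--$A_3$ vertical, cover the Type~II subtree with a beacon in the modified center of $A_1$ (the paper uses $r_{12}-\varepsilon\hat{x}$ and Lemma~\ref{lem:pairedcovered}), and use a second beacon in $P_{k+1}$ in repair position. But there is a genuine gap: your claim that a second subtree of $A_2$ is ``necessarily a short (Type~I) subtree'' is false. The other (lower-right) subtree can also be Type~III, i.e.\ a lone short leaf $L'$ paired with $A_3$ as right neighbors of $A_2$. This case is not an oversight you can absorb into your scheme: under your uniform cut at the $A_2$--$A_3$ vertical the cut-off region would have $s=4$ rectangles ($A_2, A_1, L_1, L'$), and neither of your two beacons covers $L'$ --- it is a short \emph{paired} neighbor of $A_2$, so Lemma~\ref{lem:sololeafcovered} does not apply, $\RH{L' \cup \{b_2\}}$ is blocked by the wall between $L'$ and $A_3$, and $b_1$ sits in $A_1$'s center, not $A_2$'s right center. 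Adding a third beacon gives $b=3 > \floor{2 \cdot 4}{3} = 2$, breaking the budget that Corollary~\ref{cor:casescondensed} and the upper-bound induction require. The paper's Case~4 resolves this with an idea your proposal lacks: a \emph{disconnected} cut-off region. It removes only the two subtrees $L \cup A_1$ and $L'$ (three rectangles), keeping $A_2$ in $P_{k+1}$, and places $b_2$ at $r'_{2}-\varepsilon\hat{x}$, which is simultaneously in $P_{k+1}$ and in repair position for the trapped paths created by detaching the paired leaf $L'$.

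Two smaller points. First, in your five-rectangle case you place $b_3$ strictly inside $A_1'$; its visibility to $b_2$ ``through $A_2$'' is then not automatic, and the paper avoids this by putting $b_3$ at the reflex vertex $r'_{12}$ on the vertical, so that $b_3$ lies in $A_2$ and all three beacons see one another. Second, your merged treatment of the single-subtree case is essentially sound: the paper splits it into $A_3$ shorter than $A_2$ (no trapped paths, $b_2$ at $u-\varepsilon\hat{y}$) and $A_3$ taller ($b_2$ at $r_{23}+\varepsilon\hat{x}$ in repair position), but your uniform placement of $b_2$ near $r_{23}$ works in both, since repair position is harmless when no paths are trapped. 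Your attention to the $S^{*}$-versus-$S$ bookkeeping for the curl vertex of $A_2$ is well placed, though in the configurations where the Type~II subtree is the only subtree, the curl vertex of $A_2$ with respect to $A_1$ turns out not to be a reflex vertex of the polygon, which is why the paper can assert coverage of all of $A_2$ there.
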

		\begin{proof}
			If $A_2$ has a Type II subtree, then that subtree is on the left of $A_2$,
			because it is tall and $A_3$ is on the right.  Also because it is tall, there
			is no other Type II or Type IV subtree present.  We consider cases based on 
			what the lower right neighbor of $A_2$ is: it can be a Type I subtree, a Type
			III subtree, or it can be absent.  If it is absent, then we further break
			down the situation based on whether $A_3$ is taller than or shorter than
			$A_2$.
			
			\begin{description}
			\item[Case 1: $A_2$ has no lower left neighbor and $A_3$ is shorter
			than $A_2$.] 
			
				This situation is as depicted in Figure \ref{fig:typetwo}a. 
				The rectangle $A_1$ of the Type II subtree is either a tall bottom left
				neighbor of $A_2$ (as depicted) or a tall top left neighbor of $A_2$
				(as suggested by the dashed lines).
				In any case, we put a beacon $b_1$ at $r_{12} - \varepsilon\hat{x}$ and a
				beacon $b_2$ at $u - \varepsilon\hat{y}$, where $r_{12}$ is the reflex vertex
				shared by $A_1$ and $A_2$, and $u$ is the upper-right corner of $A_2$.
				(The red point $b'_1$ on the figure shows where $b_1$ would be if $A_1$
				extends below $A_2$.)
				
				Here we have no trapped paths to repair, because $A_2$
				is taller than $A_3$. 
				The beacon $b_1$ covers $L \cup A_1 \cup A_2$, by Lemma
				\ref{lem:pairedcovered}.
				The two beacons are visible, thus strongly connected, and $b_2$ is in
				$P_{k+1}$.
			
				\begin{figure}[htbp] 
					\begin{center}
						\includegraphics*[scale=1.25]{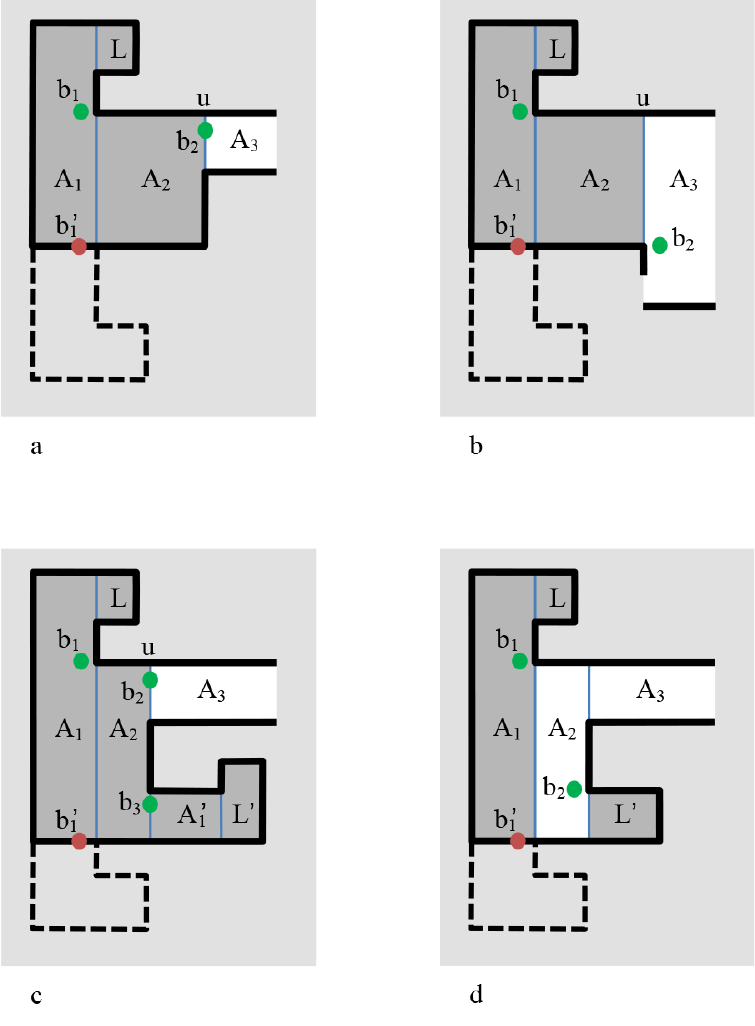} 
					\end{center}
					\caption{ $A_2$ has a Type II left neighbor.
						(a) $A_3$ is a solo neighbor of $A_2$.
						(b) $A_3$ is a tall neighbor of $A_2$.
						(c) $A_3$ is paired with a Type I subtree.
						(d) $A_3$ is paired with a Type III subtree.
					}
					\label{fig:typetwo}
				\end{figure}

			\item[Case 2: $A_2$ has no lower left neighbor and $A_3$ is taller than
			$A2$.] 
			
				This situation is as depicted in Figure \ref{fig:typetwo}b. 
				Again, $A_1$ is either a top or a bottom neighbor, and we depict the first
				while only suggesting the second.
				We put a beacon $b_1$ at $r_{12} - \varepsilon\hat{x}$ and a
				beacon $b_2$ at $r_{23} + \varepsilon\hat{x}$, where $r_{12}$ is the reflex
				vertex shared by $A_1$ and $A_2$, and $r_{23}$ is the reflex
				vertex shared by $A_2$ and $A_3$.
				
				Here we may have trapped paths going through $A_3$, but we have placed $b_2$
				in repair position.  The beacons cover the removed rectangles $A_2, A_1,$ and
				$L$, and by Observation \ref{obs:hullcontained}, the beacons see one
				another.
			
			\item[Case 3: The lower left neighbor of $A_2$ is Type I.] 
					
				Let $L'$ be the leaf of the lower left subtree, and $A'_1$ be its other
				rectangle.  This situation is as depicted in Figure \ref{fig:typetwo}c.
				Here we put a beacon $b_1$ at $r_{12} - \varepsilon\hat{x}$, a beacon $b_2$
				at $u - \varepsilon\hat{y}$, and a beacon $b_3$ at $r'_{12}$, where $b_1$ is
				the reflex vertex shared by $A_1$ and $A_2$, $u$ is the upper-right corner
				of $A_2$, and $r'_{12}$ is the reflex vertex shared by $A'_1$ and $A_2$.
				
				Here $b_1$ covers $A_1$ and $L$, $b_3$ covers $A'_1$ and $L'$, and $b_2$
				covers $A_2$.  The beacons are all visible to one another, so they are
				strongly connected in the attraction graph.  There are no trapped paths, as
				$A_3$ is shorter than $A_2$.  We remove the five rectangles $L, A_1, L', A'_1,$ and 
				$A_2$.

			\item[Case 4: The lower left neighbor of $A_2$ is Type III.] 	
			
				Let $L'$ be the leaf rectangle that is the sole rectangle in the Type III
				subtree to the lower left of $A_2$.  The situation is as depicted in Figure
				\ref{fig:typetwo}d.
				Here we use the technique of removing \emph{two} subtrees of the dual tree:
				the Type II subtree $L \cup A_1$ and the Type III subtree $L'$.
				
				We place beacons $b_1$ at $r_{12} - \varepsilon\hat{x}$, and $b_2$ at
				$r'_{2} - \varepsilon\hat{x}$.  These beacons are visible to one another and
				$b_2$ is in $P_{k+1}$.  Detaching $A_1$ from $A_2$, and then reattaching it,
				cannot create trapped paths because $A_1$ is taller than $A_2$.
				On the other hand, detaching and reattaching $L'$ from $A_2$ will cause some
				paths to become trapped, as $L'$ is a short paired neighbor of $A_2$.
				However, we have placed $b_2$ in repair position for this eventuality.
				The beacon $b_1$ covers $L$ and $A_1$, and $b_2$ covers $L'$.
			\end{description}
			
			In every case, we have reduced the polygon by either 3 rectangles at a cost
			of 2 beacons, or 5 rectangles at a cost of 3 beacons.
			
		\end{proof} 
		
		Next we handle the case when $A_2$ has a Type IV subtree.  In what follows,
		we will use the phrase ``on the vertical'' to mean ``on the relative interior
		of the vertical''--i.e. we do not include the vertical's endpoints as
		allowable positions.
		
		\begin{lemma}\label{lem:typeIV}
			If $A_2$ has a Type IV subtree, then $P_k$ can be reduced by 4 
			rectangles at a cost of 2 beacons.
		\end{lemma}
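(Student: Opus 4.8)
The plan is to show that the presence of a Type~IV subtree forces an essentially rigid local picture and then to reduce it by cutting $A_2$ off from $A_3$. First I would fix the configuration. Because a Type~IV subtree consists of a tall neighbor of $A_2$, that leaf is the \emph{only} neighbor of $A_2$ on its side; since $A_3$ is a (non-leaf) right neighbor of $A_2$, the tall leaf---call it $L'$---must be the unique \emph{left} neighbor of $A_2$. A tall subtree is unique, so the height-two subtree carrying the deepest leaf is short, i.e.\ of Type~I: a short neighbor $A_1$ of $A_2$ flanked by the tall $A_2$ and a tall leaf $L_1$ along a common horizontal edge. As the left side of $A_2$ is taken by $L'$, this $A_1$ is a right neighbor, so $A_1$ and $A_3$ are paired right neighbors of $A_2$, both short; in particular $A_2$ is taller than $A_3$. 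Hence $A_2$ has exactly the two children $L'$ and $A_1$, and I would take $C = L' \cup A_2 \cup A_1 \cup L_1$ (four rectangles), cut off along the vertical $V$ shared by $A_2$ and $A_3$.

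Next I would place two beacons and verify Lemma~\ref{lem:routingset}. Let $b_1 = r_{12} + \varepsilon\hat{x}$, a point of $A_1$; by Observation~\ref{obs:contained} it covers $A_1$, and since $A_1$ is the short middle rectangle of a Type~I triple both edges incident to the reflex vertex of $A_1 \cup L_1$ are polygon boundary, so Lemma~\ref{lem:tallleafcovered} shows $b_1$ covers $L_1$. Let $b_2$ lie on the relative interior of $V$. Because $V$ lies on the boundary of $A_2$, we have $\RH{A_2 \cup \{b_2\}} = A_2 \subseteq P_k$, and both edges incident to the reflex vertex of $A_2 \cup L'$ are polygon boundary (the horizontal one is a top or bottom edge of $A_2$ shared with $A_3$ or $A_1$ and hence boundary; the vertical one bounds the leaf $L'$); thus Lemma~\ref{lem:tallleafcovered}, with $A_2$ in the role of $R$, shows $b_2$ covers $L'$, while Observation~\ref{obs:contained} gives $A_2$. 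So $\{b_1,b_2\}$ covers $C$ and the first condition holds.

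The crux is the second condition, and it dictates the unusual placement of $b_2$. The natural home for a connecting beacon is $A_3 \subseteq P_{k+1}$, but the paired neighbors $A_1,A_3$ cut an outward notch into the right side of $A_2$ between $r_{12}$ and $r_{23}$, so the rectangular hull of $A_2$ with any point of $A_3$ escapes $P_k$ through that notch and Lemma~\ref{lem:tallleafcovered} can no longer cover $L'$. Putting $b_2$ on the relative interior of $V$ resolves the tension exactly: $V$ sits on $A_2$'s boundary (keeping the hull equal to $A_2$) yet belongs to $P_{k+1}$. It then remains to show that $b_1$ and $b_2$ form a single strong component of $A_k$, which necessarily meets $P_{k+1}$ at $b_2$. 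They are \emph{not} mutually visible, as the notch separates them, so I would trace the attraction directly: a point at $b_1$ is pulled left along the top of $A_1$ to $r_{12}$ and then up the wall between $r_{12}$ and $r_{23}$ to $b_2$, while a point at $b_2$ slides down that same wall to $r_{12}$ and then right to $b_1$. Both paths stay inside $A_1 \cup A_2 \cup A_3$ and are therefore local.

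Finally, the third condition is vacuous: $A_2$, being a tall left neighbor of $A_3$, is its \emph{only} left neighbor, so detaching it along $V$ traps no inductive path (only cutting one of a \emph{pair} of neighbors can do so), and the bottom vertex $r_{23}$ of $V$ is absorbed into $A_2 \subseteq C$ and covered by $b_2$, removing any need for it to carry inductive paths. With all three conditions satisfied, Lemma~\ref{lem:routingset} makes $B_{k+1}\cup\{b_1,b_2\}$ a routing set for $P_k$---a reduction of four rectangles at the cost of two beacons. I expect the main obstacle to be exactly the interaction described above: covering the tall leaf $L'$ pins the connecting beacon to $\overline{A_2}$ while the second condition pins it to $P_{k+1}$, and only placing it on the relative interior of $V$, together with the wall-sliding verification of strong connectivity this forces, reconciles the two.
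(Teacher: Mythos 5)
Your configuration analysis matches the paper's exactly: the Type IV leaf $L'$ must be the unique left neighbor of $A_2$, the deepest leaf forces a Type I subtree $A_1 \cup L_1$ on the right, paired with $A_3$; the cut-off region is the same four rectangles; $b_2$ goes on the relative interior of the vertical $V$ between $A_2$ and $A_3$ and covers $A_2$ and $L'$ just as in the paper; and your observation that condition 3 of Lemma \ref{lem:routingset} is vacuous (since $A_2$ is the tall, hence only, left neighbor of $A_3$) is correct. The gap is your placement of $b_1$ at $r_{12}+\varepsilon\hat{x}$: it breaks condition 2, and the attraction path you trace to establish strong connectivity is geometrically impossible. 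Let $\ell$ be the vertical line through the right side of $A_2$, with the chord to $A_1$ at the bottom, the wall from $r_{12}$ up to $r_{23}$ in the middle, and $V$ at the top. Your $b_1$ lies strictly to the right of $\ell$, so a robot at $b_2$ activated toward $b_1$ is pulled down and to the \emph{right}, away from the wall---it never touches the wall, so it cannot ``slide down that same wall to $r_{12}$'' as you claim. Instead it immediately enters $A_3$, hits the bottom wall of $A_3$ at a point left of the vertical line through $b_1$, slides rightward along that wall, and becomes stuck at the point directly above $b_1$, where the edge is perpendicular to the direction to the beacon. (This is the same failure mode the paper flags in Figure \ref{fig:twotallupper}b, where an $\varepsilon$ horizontal offset across a vertical kills attraction from above.) So $b_1$ does not attract $b_2$; your first path shows $b_2$ attracts $b_1$, but the attraction digraph on $\{b_1,b_2\}$ then has two singleton strongly connected components, and the component $\{b_1\}$ contains no point of $P_{k+1}$. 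Condition 2 of Lemma \ref{lem:routingset} fails, and with it the ability to route from $P_{k+1}$ back into $L_1$ and $A_1$.

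The paper's proof sidesteps this by putting $b_1$ on the relative interior of the vertical between $A_1$ and $A_2$, i.e.\ on $\ell$ itself. Then $b_1$ and $b_2$ are collinear, and the segment joining them runs along the lower chord, the wall (boundary of the closed polygon), and $V$; the beacons see one another and hence mutually attract---the ``notch'' you worried about only separates the beacons because you pushed $b_1$ off the vertical. Nothing else in your argument needs to change: $b_1$ on that vertical still lies in $A_1$, so Observation \ref{obs:contained} covers $A_1$ and Lemma \ref{lem:tallleafcovered} with $\RH{A_1 \cup \{b_1\}} = A_1$ covers $L_1$, exactly as in your write-up. With that one-point correction your argument coincides with the paper's proof.
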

				
		\begin{figure}[htbp] 
			\begin{center}
				\includegraphics*[scale=1.25]{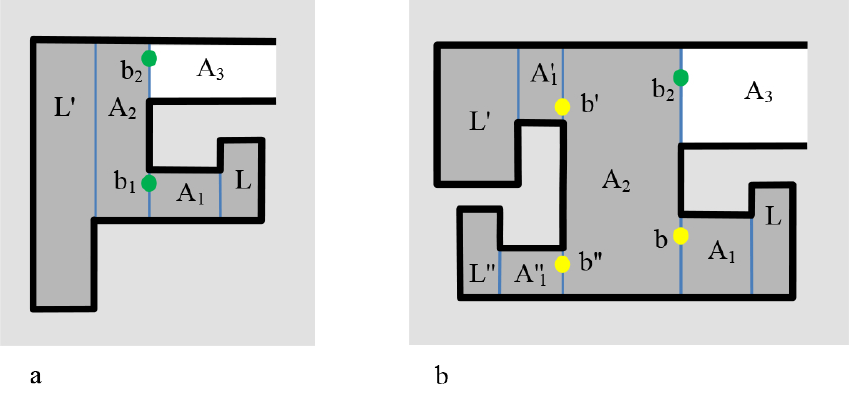} 
			\end{center}
			\caption{ (a) $A_2$ has a Type IV neighbor. 
					  (b) All children of $A_2$ are Type I.}
			\label{fig:typefour}
		\end{figure}
		
		\begin{proof}
			If $A_2$ has a Type IV subtree, then that subtree is on the left of $A_2$,
			because it is tall and $A_3$ is on the right.  Also because it is tall, there
			is no other Type II or Type IV subtree present.
			$A_2$ is some deepest leaf's grandparent, so $A_2$ must have at least one
			grandchild. Since the Type IV subtree is simply a child of $A_2$, the
			lower right neighbor of $A_2$ must be of Type I.
			This situation is illustrated in Figure \ref{fig:typefour}a.
			
			We place beacons $b_1$, on the vertical between
			$A_1$ and $A_2$, and $b_2$, on the vertical between $A_2$ and $A_3$.
			Beacon $b_1$ covers $A_1$ and $L$ (the Type I subtree)
			and beacon $b_2$ covers $L'$ (the Type IV subtree) and $A_2$.
			The beacons see one another, thus are strongly connected in the attraction
			graph, and $b_2$ is in the polygon $P_{k+1}$ remaining after the reduction.
			By Lemma \ref{lem:routingset}, then, the
			current lemma follows.
		\end{proof}
		
		We have now shown how to reduce the polygon whenever $A_2$ has a tall subtree.
		It remains for us to examine the cases where all of $A_2$'s subtrees are
		short.
		
		\begin{lemma}\label{lem:all-typeI}
		If $A_2$'s subtrees are all Type I, then $P_k$ can be reduced by 3 
			rectangles at a cost of 2 beacons, 5 rectangles at a cost of 3 beacons,
			or 7 rectangles at a cost of 4 beacons.
		\end{lemma}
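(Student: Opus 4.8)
The plan is to treat this as the natural three-level analogue of the earlier two-level reductions, removing $A_2$ together with all of its Type~I subtrees in a single step. First I would pin down the structure. Since $A_3$ is the (upper-right) parent of $A_2$, every child of $A_2$ is a short neighbor carrying a Type~I subtree, so $A_2$ has some combination of short left children (at most two, paired) and, when $A_3$ is a paired top-right neighbor, one short lower-right child---between one and three children in all. Each Type~I subtree contributes its two rectangles $A_1^{(j)}\cup L^{(j)}$, and together with $A_2$ this gives exactly $3$, $5$, or $7$ removed rectangles, matching the three claimed reduction sizes. I would also note that $C_{k+1}$ is connected and meets $P_{k+1}$ in the single vertical $V$ shared by $A_2$ and $A_3$, so the only cut whose trapping behavior must be analyzed is $A_2$--$A_3$.

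Next I would place one beacon per subtree plus a single connecting beacon, giving the stated $2$, $3$, or $4$ beacons. For each Type~I child I would put a beacon $b_j$ on the vertical shared by $A_1^{(j)}$ and $A_2$: lying in $A_1^{(j)}$ it covers $A_1^{(j)}$ by Observation~\ref{obs:contained}, and since $L^{(j)}$ is a tall leaf neighbor of $A_1^{(j)}$ and, $A_1^{(j)}$ having no further child, the two edges of $A_1^{(j)}\cup L^{(j)}$ meeting at its reflex vertex lie on the boundary, Lemma~\ref{lem:tallleafcovered} gives that $b_j$ also covers $L^{(j)}$. For the connecting beacon $b_0$ I would use the vertical $V$ between $A_2$ and $A_3$: a point of $V$ lies in $A_2$ (covering it by Observation~\ref{obs:contained}) and on the boundary of $P_{k+1}$, so $b_0\in P_{k+1}$ as required by the second condition of Lemma~\ref{lem:routingset}.

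To invoke Lemma~\ref{lem:routingset} I would then verify its three conditions. Coverage of $C_{k+1}$ follows from the previous paragraph. For strong connectivity I would observe that every beacon sits on a side of the rectangle $A_2$; since $A_2$ is convex, beacons on distinct sides of $A_2$ see one another across its interior, and $A_k(B')$ is a single strongly connected component containing $b_0\in P_{k+1}$. For trapping, the only cut is $A_2$--$A_3$, and by the analysis preceding Lemma~\ref{lem:repair} a path can be trapped only if $A_2$ is one of a paired set of left neighbors of $A_3$; this happens exactly when $A_3$ is a tall top-right neighbor of $A_2$ and $A_3$ has a second left neighbor in $P_{k+1}$. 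In that subcase I would instead place $b_0$ at the repair position $r_{23}+\varepsilon\hat{x}$ just inside $A_3$, where $r_{23}$ is the reflex vertex common to $A_2$ and $A_3$; because $A_2$ is then short, $\RH{A_2\cup\{b_0\}}\subset A_3$ and Observation~\ref{obs:hullcontained} still yields coverage of $A_2$, while Lemma~\ref{lem:repair} supplies the repair demanded by the third condition.

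The step I expect to give the most trouble is forcing this one beacon $b_0$ to do everything at once when a lower-right child is present: it must cover $A_2$, lie in $P_{k+1}$, remain visible to all of the subtree beacons, and (when $A_3$ is tall) sit at the repair position. The delicate point is visibility, since a lower-right child pushes a subtree beacon onto the same (right) side of $A_2$ as $b_0$, with a boundary wall between them; I would resolve this by nudging the two conflicting beacons an $\varepsilon$ into the interior of $A_2$ so that their connecting segment passes through the interior rather than grazing the right wall, and then check that this nudge leaves the coverage arguments of Lemma~\ref{lem:tallleafcovered} and Observations~\ref{obs:contained} and~\ref{obs:hullcontained} intact.
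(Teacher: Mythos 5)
Your main construction coincides with the paper's proof: one beacon on the relative interior of the vertical between each Type~I subtree's $A_1$ and $A_2$, plus a connecting beacon on the vertical between $A_2$ and $A_3$; coverage of each $A_1$ and $L$ via Observation~\ref{obs:contained} and Lemma~\ref{lem:tallleafcovered} (whose boundary-edge hypothesis holds exactly because the beacon stays \emph{on} the vertical, so $\RH{A_1 \cup \{b\}} = A_1$); strong connectivity because all beacons lie in the closed rectangle $A_2$; and the counts of $3$, $5$, or $7$ rectangles for $2$, $3$, or $4$ beacons. Your trapping discussion is in fact more explicit than the paper's, which simply keeps $b_2$ on the vertical even when $A_3$ is a tall neighbor and asserts this is of no concern; relocating $b_0$ to $r_{23}+\varepsilon\hat{x}$ in that subcase and invoking Lemma~\ref{lem:repair} is a defensible way to discharge condition~3 of Lemma~\ref{lem:routingset}, though you should write $\RH{A_2\cup\{b_0\}}\subset P_k$ rather than $\subset A_3$: the hull contains all of $A_2$, and only its $\varepsilon$-slab beyond the vertical lies in $A_3$.

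The final paragraph, however, is wrong on both ends. The visibility worry is vacuous: when a lower-right child is present, $b_0$ and that subtree's beacon lie on the line supporting $A_2$'s right side with a boundary wall between them, but the polygon is closed, so the straight segment between them lies in the polygon and is an attraction path in both directions; this is exactly what the paper's remark that all beacons are in $A_2$ and see one another is using. Your proposed repair, on the other hand, is the one modification that can genuinely break the proof. Once a subtree beacon is nudged off its vertical into the interior of $A_2$, the hull $\RH{A_1\cup\{b\}}$ is strictly wider than $A_1$, and the horizontal edge of $\RH{A_1\cup\{b\}}\cup L$ incident to its reflex vertex acquires an $\varepsilon$-long piece lying in the \emph{interior} of $A_2$ (which is taller than $A_1$ at the shared vertical), so the hypothesis of Lemma~\ref{lem:tallleafcovered} fails and the lemma can no longer be cited. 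Nor is this a mere technicality: a beacon placed $\varepsilon$ past that vertical at the height of the reflex vertex is precisely the configuration of Figure~\ref{fig:twotallupper}b, where a point of the tall leaf $L$ beyond the notch pulls the beacon into the notch's side wall, along which it slides until the wall is perpendicular to the beacon direction and it sticks---the very failure that forced the paper to defer Type~I subtrees to a three-level reduction in the first place. The correct move is simply to keep every beacon on its vertical, as in your main construction, and delete the nudge.
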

		\begin{proof}
			We place one beacon $b_2$ on the vertical between $A_2$ and $A_3$,
			and one beacon for each subtree of $A_2$, on the vertical between
			$A_2$ and its subtree.  Figure \ref{fig:typefour}b shows the situation
			when $A_2$ has three subtrees. There are 7 rectangles removed
			and 4 beacons placed.
			
			When $A_2$ has two or one subtree, the situation will be as in the figure,
			but with one or two of the subtrees, and the corresponding beacons, removed.
			Also, with one or two subtrees removed, there is a possibility that $A_3$ is
			a tall neighbor of $A_2$; this is of no concern as we still place $b_2$ on the
			vertical between $A_2$ and $A_3$.
			
			So if $A_2$ has two subtrees, then there are 5
			rectangles removed and 3 beacons placed.  If $A_2$ has one subtree, then there are 3 rectangles
			removed and 2 beacons placed.
			
			All beacons are in $A_2$ and therefore cover $A_2$ and see one another.  This
			means they are strongly connected.  The beacon $b$ (or $b'$ or $b''$)
			corresponding to each subtree covers the rectangles $A_1$ and $L$ of that
			subtree.
			The beacon $b_2$ is in the polygon $P_{k+1}$ remaining after the reduction.
		\end{proof}
		
		We now need to consider only cases where there is at least one Type III
		subtree present.  Since $A_2$ has a grandchild, there must also be a Type I
		subtree.  We consider the alternatives for the third subtree of $A_2$:  it is
		either absent, Type I, or Type III.
		
		\begin{lemma}
			If $A_2$ has two Type I subtrees, and one Type III subtree, then $P_k$
			can be reduced by 6 rectangles at a cost of 4 beacons.
		\end{lemma}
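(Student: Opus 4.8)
The plan is to follow the template of Lemma~\ref{lem:all-typeI}: place all four beacons inside the convex rectangle $A_2$ so that each covers $A_2$, they are mutually visible (hence lie in a single strongly connected component of the attraction graph), and one of them lies in $P_{k+1}$. First I would fix the combinatorial layout. Since Type~I and Type~III subtrees both attach to $A_2$ through a \emph{short} neighbor, and a rectangle has at most two short neighbors on a side, three such subtrees can occupy only the lower-right, upper-left, and lower-left child slots of $A_2$. The presence of a (short) lower-right child forces $A_3$ to be a short upper-right neighbor of $A_2$; hence $A_2$ is tall relative to $A_3$ and is the \emph{sole} left neighbor of $A_3$.

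Next I would place the beacons: a beacon $b_2$ on the vertical between $A_2$ and $A_3$, and one beacon on the vertical between $A_2$ and each of its three children. A Type~I child-beacon, placed on the chord shared by $A_2$ and $A_1$, lies in $A_1$ (covering $A_1$ by Observation~\ref{obs:contained}) and covers the tall leaf $L$ by Lemma~\ref{lem:tallleafcovered}: since the beacon lies on the edge of $A_1$ shared with $A_2$, its rectangular hull with $A_1$ is $A_1$ itself, and the two edges of $A_1 \cup L$ meeting at its reflex vertex lie on the boundary of $P$ by the Type~I structure. The Type~III child-beacon, placed on the chord shared by $A_2$ and $L''$, lies in $L''$ and covers all of $L''$ by Observation~\ref{obs:contained}. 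All four beacons lie in $A_2$, so they cover $A_2$ and are mutually visible, forming one strongly connected component containing $b_2 \in P_{k+1}$. This establishes Conditions~1 and~2 of Lemma~\ref{lem:routingset} and removes the six rectangles $A_2$, the two Type~I subtrees, and $L''$ at a cost of four beacons.

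The step requiring the most care is Condition~3, the absence of trapped paths. Trapped paths arise only when a reduction cuts one of a \emph{paired} set of neighbors of the attachment rectangle while leaving the other in $P_{k+1}$. Here the only cut separating $C$ from $P_{k+1}$ is the vertical $V$ between $A_2$ and $A_3$, and, as noted above, $A_2$ is a tall neighbor of $A_3$---indeed $A_3$'s \emph{only} left neighbor---so no $P_{k+1}$-routing path is trapped. The pairings among $A_2$'s own neighbors (the two left children, and the lower-right child with $A_3$) concern $A_2$, which is removed entirely, and so cannot trap any path of the $P_{k+1}$-routing. The one remaining subtlety is that the bottom endpoint of $V$ becomes reflex in $P_k$; following the tall-neighbor analysis of Section~\ref{sec:repair} we absorb that endpoint into the removed rectangle $A_2$, where our beacons already cover it. Hence Condition~3 holds vacuously, and Lemma~\ref{lem:routingset} yields the stated reduction of six rectangles at a cost of four beacons.
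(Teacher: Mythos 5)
Your proposal is correct and takes essentially the same route as the paper, which disposes of this case in one line by invoking the configuration of Figure~\ref{fig:typefour}b with one leaf missing and ``placing a beacon on the vertical between $A_2$ and each of its neighbors,'' exactly your placement. The details you supply---that the three short children force the slots upper-left, lower-left, lower-right, so $A_3$ is a short paired right neighbor and $A_2$ is its sole (tall) left neighbor, ruling out trapped paths, with coverage via Observation~\ref{obs:contained} and Lemma~\ref{lem:tallleafcovered}---are precisely the facts the paper leaves implicit in Lemma~\ref{lem:all-typeI}.
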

		\begin{proof}
			The situation is as depicted in Figure \ref{fig:typefour}b, except that
			one of the leaf rectangles $L$, $L'$, or $L''$ is missing.
			This is handled in the same manner as Lemma \ref{lem:all-typeI},
			placing a beacon on the vertical between $A_2$ and each of its neighbors.
		\end{proof}
		
		\begin{lemma}\label{lem:onethree}
			If $A_2$ has exactly one Type I subtree, and exactly one Type III subtree,
			then $P_k$ can be reduced by 3 or 4 rectangles at a cost of 2 beacons.
		\end{lemma}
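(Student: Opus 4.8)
The plan is to handle this as a two-subtree reduction in the spirit of Lemma~\ref{lem:typeIV} and the Type III branch of Lemma~\ref{lem:typeII}, but now with exactly one Type~I (short) subtree and one Type~III (short leaf) subtree hanging off $A_2$. By the usual normalization I would assume $A_3$ is an upper-right neighbor of $A_2$, so both short subtrees lie on the left of $A_2$, one upper-left and one lower-left. The Type~I subtree consists of a leaf $L$ together with its parent $A_1$ (two rectangles), and the Type~III subtree is a single leaf $L'$ (one rectangle); together with $A_2$ itself this gives a cut-off region $C$ of either $3$ or $4$ rectangles, and the ``$3$ or $4$'' in the statement should correspond to whether we additionally detach $A_2$ from $A_3$ or leave $A_2$ attached. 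I expect the natural beacon placement to put one beacon near $r_{12}$ (the reflex vertex shared by $A_1$ and $A_2$), offset by $\varepsilon\hat{x}$ so as to sit in repair position for the Type~I detachment, and a second beacon near the reflex vertex shared by $L'$ and $A_2$, again offset by $\varepsilon\hat{x}$ for the Type~III detachment.

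The verification then reduces to checking the three conditions of Lemma~\ref{lem:routingset}. For coverage: the first beacon should cover $L$ and $A_1$ by Lemma~\ref{lem:sololeafcovered} / Lemma~\ref{lem:tallleafcovered} applied to the Type~I subtree, while the second beacon covers $L'$ as a solo leaf neighbor of $A_2$ and also covers $A_2$ itself by Observation~\ref{obs:contained}. I would confirm both beacons lie in $A_2$ (or on its bounding verticals) so that they are visible to one another, giving a single strongly connected component, and so that at least one of them lies in $P_{k+1}$ to satisfy condition~2. The delicate point is condition~3: each of the two short subtrees is one of a \emph{paired} set of neighbors of its attachment rectangle (since $A_2$ has children on both the upper-left and lower-left), so detaching either one in isolation can trap inductive paths. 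Placing each beacon at the corresponding $r\pm\varepsilon\hat{x}$ repair position — exactly as prescribed by Lemma~\ref{lem:repair} — is what neutralizes both potential trappings simultaneously.

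The main obstacle I anticipate is the bookkeeping around \emph{which} reductions trap paths and hence how many repair beacons are genuinely forced, since that is what pins down the $3$-vs-$4$ rectangle count at fixed cost of $2$ beacons. Concretely, I would argue that the two beacons already serve double duty: each covers its subtree \emph{and} sits in repair position for that subtree's detachment, so no extra beacons are needed beyond the two. The only case split I expect to grind through is whether $A_3$ is tall, solo, or paired with respect to $A_2$, and whether we must detach $A_2$ from $A_3$ at all; if $A_2$ remains attached to $A_3$ the count is $3$ rectangles, and if a further detachment across the $A_2$–$A_3$ vertical is convenient (or forced by $A_3$ being short paired) it becomes $4$, with the second beacon's $\varepsilon\hat{x}$ placement absorbing any trapping on that side as well. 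Having established coverage, strong connectivity, presence in $P_{k+1}$, and repair positioning, the lemma follows directly from Lemma~\ref{lem:routingset}.
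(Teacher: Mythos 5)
There are two genuine gaps here. First, your normalization is wrong: assuming $A_3$ is an upper-right neighbor of $A_2$ forces only the \emph{tall} (Type II/IV) subtrees to the left; a \emph{short} subtree can perfectly well sit on the lower right of $A_2$, paired with $A_3$. The paper's proof splits into three cases accordingly: both subtrees on the left, the Type III subtree on the lower right, and the Type I subtree on the lower right (Figure \ref{fig:typeonethree}). You silently discard the latter two. This also means your explanation of the ``3 or 4'' count is wrong: in the paper the 4-rectangle reduction is not an optional extra detachment of $A_2$ from $A_3$ but is forced exactly when the Type I subtree is on the lower right --- there $A_2$ itself is removed and the beacons go at $t+\varepsilon\hat{y}$ and $u-\varepsilon\hat{y}$, the lower-right and upper-right corners of $A_2$; in the other two cases $A_2$ stays and only 3 rectangles are removed. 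In the case where the Type III leaf is on the lower right, a repair beacon at $r-\varepsilon\hat{x}$ really is needed (since $L'$ is cut from $A_2$ while its pair $A_3$ remains) --- a case your setup cannot even see.

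Second, in the both-left case your beacon placement fails condition 1 of Lemma \ref{lem:routingset}. A beacon at $r_{12}+\varepsilon\hat{x}$ does not cover the tall leaf $L$ of the Type I subtree: \RH{A_1 \cup \{b\}} then overhangs the reflex vertex by $\varepsilon$ into the interior of $A_2$, so the hypothesis of Lemma \ref{lem:tallleafcovered} fails, and concretely any point of $L$ strictly below the height of $r_{12}$ pulls the beacon into $A_2$'s left wall, where it slides and gets stuck --- exactly the failure mode illustrated in Figure \ref{fig:twotallupper}b. The same stuck-on-the-wall argument shows such points of $L$ do not attract your second beacon either, so those points attract no beacon of $B'$ at all. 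The paper avoids this by placing the beacons \emph{on} the verticals, at $s-\varepsilon\hat{y}$ and $q+\varepsilon\hat{y}$ (the upper-left and lower-left corners of $A_2$), so each beacon lies in both $A_2$ and the adjacent detached rectangle, and $\RH{A_1 \cup \{b_1\}}=A_1$ makes Lemma \ref{lem:tallleafcovered} apply cleanly. Your worry about repair in this case is also misplaced: $A_1$ and $L'$ are the \emph{two} paired left neighbors of $A_2$ and both are detachment rectangles, so condition 3 of Lemma \ref{lem:routingset} is vacuous and no repair position is needed --- spending both beacons on repair positions buys nothing and is precisely what costs you coverage of $L$. (Relatedly, $L'$ is a paired, not solo, neighbor there, so Lemma \ref{lem:sololeafcovered} does not apply as you invoke it; coverage of $L'$ comes from the beacon lying in $L'$ itself, or from Lemma \ref{lem:pairedcovered}.)
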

		\begin{figure}[htbp] 
			\begin{center}
				\includegraphics*[scale=1.25]{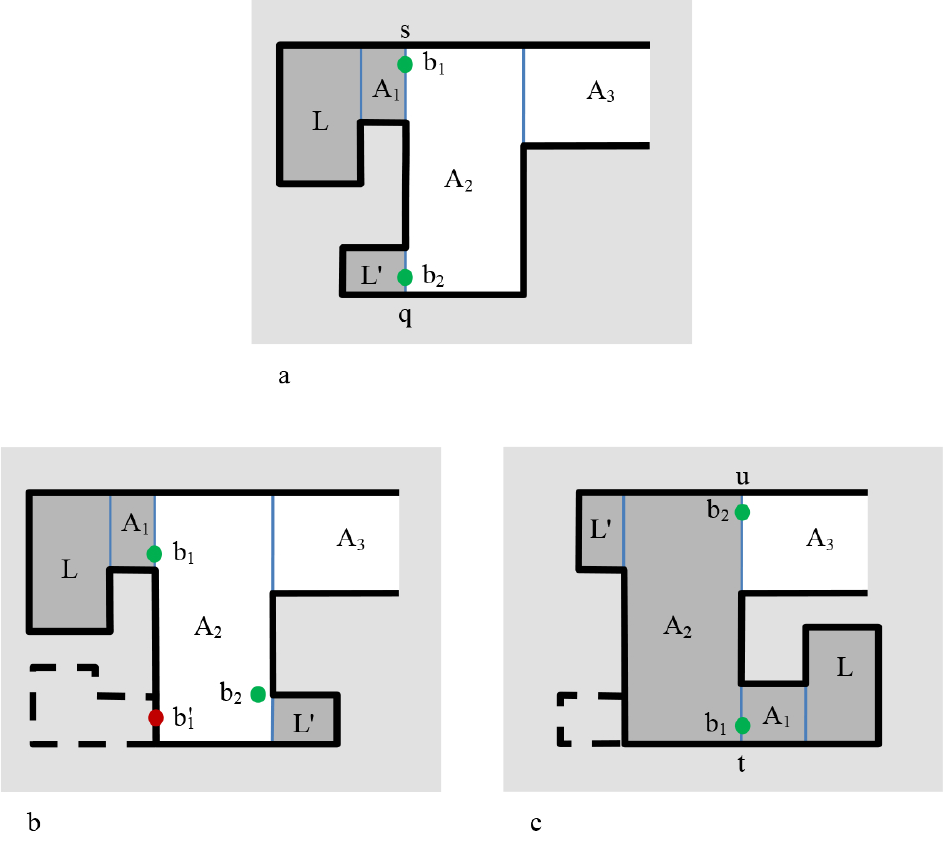} 
			\end{center}
			\caption{ $A_2$ has one Type I subtree and one Type III subtree.
					  (a) both subtrees are on the left of $A_2$.
					  (b) the Type III subtree is on the right.
					  (c) the Type I subtreee is on the right, and the Type III subtree is on
					  the left.
					}
			\label{fig:typeonethree}
		\end{figure}
		\begin{proof}
			Let the Type I subtree have rectangles $L$ and $A_1$, 
			and the Type III subtree have rectangle $L'$.
			We consider the three possibilities:
			both subtrees are on the left of $A_2$,
			the Type III subtree is on the lower right of $A_2$,
			and the Type I subtree is on the lower right of $A_2$.
			
			In the first case, we place a beacon $b_1$ at $s - \varepsilon\hat{y}$,
			and a beacon $b_2$ at $q + \varepsilon\hat{y}$,
			where $q$ and $s$ are the lower-left and upper-left corners of $A_2$,
			respectively.
			We remove $L$, $A_1$, and $L'$.  This is shown in Figure
			\ref{fig:typeonethree}a when $A_1$ is an upper-left neighbor of $A_2$; the
			case when $A_1$ is a lower-left neighbor is similar and not shown.  Here,
			$A_3$ may be a short (solo) neighbor of $A_2$, as pictured, or it may be a
			tall neighbor.
			
			In the second case, we place a beacon $b_1$ on the vertical between 
			$A_1$ and $A_2$, and a beacon $b_2$ at $r-\varepsilon\hat{x}$,
			where $r$ is the reflex vertex shared by $A_2$ and $L'$.
			This is repair position for any paths that get trapped in this reduction.
			We remove $L$, $A_1$, and $L'$. 
			This is shown in Figure \ref{fig:typeonethree}b for $A_1$ in the upper left;
			$A_1$ in the lower left is as suggested by the dashed boundary and red
			beacon placement.
			
			In the third case, we place beacons $b_1$ at $t + \varepsilon\hat{y}$ and
			$b_2$ at $u - \varepsilon\hat{y}$, where $t$ and $u$ are the lower right and
			upper right corners of $A_2$.  We remove the four rectangles $A_2$,
			$A_1$, $L$, and $L'$.
			If $L'$ is an upper-left neighbor of $A_2$, then the situation is as depicted
			in Figure \ref{fig:typeonethree}, and $b_2$ covers $L'$.
			If $L'$ is instead a lower-left neighbor of $A_2$, then the situation is
			suggested with the dashed boundary in the figure, and $b_1$ covers $L'$.
						
			In all cases in this lemma, $b_1$ and $b_2$ are both in $A_2$, so they see
			one another.
			Also, $b_2$ is always in $P_{k+1}$.
			The beacon $b_1$ always covers $A_2$, $A_1$, and $L$.
			In all but the last case, $b_2$ covers $L'$; in the last case,
			$b_1$ or $b_2$ covers $L'$.
			In the one case that trapped paths could occur, $b_2$ was in repair position.
			By Lemma \ref{lem:routingset}, the current lemma follows.
		\end{proof}

		\begin{lemma}\label{lem:onethreethree}
			If $A_2$ has one Type I subtree, and two Type III subtrees,
			then $P_k$ can be reduced by 3 or 5 rectangles at a cost of 2 beacons.
		\end{lemma}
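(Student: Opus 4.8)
The plan is to treat this as the two-Type-III analogue of Lemma~\ref{lem:onethree}, reducing by placing two beacons and then invoking Lemma~\ref{lem:routingset}. Before placing any beacons I would pin down the forced local structure. Because Type~I and Type~III subtrees are both \emph{short} and $A_3$ is the upper-right parent, $A_2$ has exactly four neighbors; a tall right neighbor would be $A_2$'s only right neighbor, and a single side can carry at most two short neighbors, so the three short children are forced into two \emph{paired} short left neighbors together with one short lower-right neighbor that is itself paired with $A_3$. In particular $A_3$ is short relative to $A_2$, making $A_2$ the \emph{only} left neighbor of $A_3$, so severing $A_2$ from $A_3$ never traps a path. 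The one remaining degree of freedom is which slot the Type~I subtree (root $A_1$ with tall leaf $L$) occupies: either $A_1$ is the lower-right child, with the two Type~III leaves $L',L''$ forming the left pair, or $A_1$ is a left child, with a Type~III leaf as its left partner and the other Type~III leaf as the lower-right child.

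For the placement I would reuse the device of Lemmas~\ref{lem:onethree} and~\ref{lem:all-typeI}, putting both beacons on the right side of $A_2$: one low, near the lower-right corner, and one high, near the upper-right corner on the $A_2$--$A_3$ vertical so that it lies in $P_{k+1}$. Refining the argument in the proof of Lemma~\ref{lem:pairedcovered}, a \emph{high} beacon meets the left wall above its lower reflex vertex and hence covers the \emph{upper} member of a left pair, whereas a \emph{low} beacon covers the \emph{lower} member; thus a single right-side beacon reaches one member of the left pair according to its height. Both beacons lie in $A_2$, so they are mutually visible and form one strongly connected component containing the high beacon in $P_{k+1}$, which discharges the second hypothesis of Lemma~\ref{lem:routingset}.

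From here the argument splits on the position of the Type~I subtree. When $A_1$ is the lower-right child, I would remove all five rectangles: the low beacon, placed near the lower-right corner at $A_1$'s height so that $\RH{A_1 \cup \{b\}} \subset P$, covers the Type~I subtree via Lemma~\ref{lem:tallleafcovered} (with Observation~\ref{obs:contained} giving $A_2$) and also the lower-left Type~III leaf, while the high beacon covers the upper-left Type~III leaf and $A_2$; since the only boundary cut is $A_2$--$A_3$, no path is trapped and the five-rectangle reduction goes through. When instead $A_1$ is a left child, its left partner is the other Type~III leaf; I would then retain $A_2$ and delete the three rectangles forming the \emph{whole} left pair, namely the Type~I subtree together with that Type~III leaf. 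Deleting a complete paired set traps nothing (so Lemma~\ref{lem:repair} is not even invoked here), the first beacon on the $A_1$--$A_2$ vertical covers $A_1$, $L$ and $A_2$, and the second covers the remaining deleted leaf; both lie in the retained $A_2 \subseteq P_{k+1}$. In either regime the three hypotheses of Lemma~\ref{lem:routingset} hold, giving a reduction by five rectangles or by three, each at a cost of two beacons.

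The step I expect to be the main obstacle is covering both Type~III leaves once the Type~I tall leaf has consumed a beacon. Lemma~\ref{lem:tallleafcovered} forces that beacon to sit at $A_1$'s height beside the $A_1$--$A_2$ vertical, leaving a single beacon---constrained to lie in $P_{k+1}$ whenever $A_2$ is deleted---to reach both leaves of the opposite pair and $A_2$. The high/low principle resolves this precisely when $A_1$ is the lower-right child, because then the two Type~III leaves split above and below and are caught by the high and low beacons respectively; when $A_1$ is instead a left child, one Type~III leaf shares $A_1$'s side while the other sits lower-right, the two beacons can no longer be aimed to catch both together, and the five-rectangle deletion is unsupported---which is exactly why that configuration is reduced by only three rectangles, retaining $A_2$ and excising a full paired set. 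The remaining care is to confirm that the cross-$A_2$ coverage segments stay within three rectangles so that the routing remains local, and that these two regimes exhaust all placements of the Type~I subtree.
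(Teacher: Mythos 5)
Your proposal is correct and follows essentially the same route as the paper: the same case split on whether the Type I subtree is the lower-right child (remove all five rectangles with beacons at $t+\varepsilon\hat{y}$ and $u-\varepsilon\hat{y}$, the latter in $P_{k+1}$) or a left child (retain $A_2$ and excise the complete left pair of three rectangles with beacons on the left verticals), with the same coverage claims and the same appeal to Lemma~\ref{lem:routingset} and the absence of trapped paths. Your added justification that $A_3$ must be a short paired neighbor (so $A_2$ is a tall solo left neighbor of $A_3$ and the cut traps nothing) is left implicit in the paper and is a welcome clarification; only your opening remark that both beacons go on the right side of $A_2$ is inconsistent with your own (correct) three-rectangle case, where they sit on the left verticals.
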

		\begin{figure}[htbp] 
			\begin{center}
				\includegraphics*[scale=1.25]{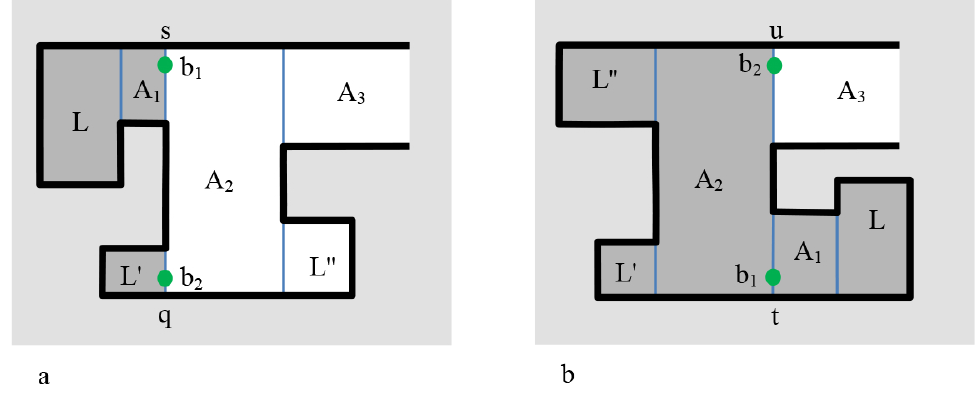} 
			\end{center}
			\caption{ $A_2$ has one Type I subtree and two Type III subtrees.
					  (a) The Type I subtree is on the left.
					  (b) The Type I subtree is on the right.
					}
			\label{fig:onethreethree}
		\end{figure}  
		
		\begin{proof}
			If the Type I subtree is on the upper left, then the situation is
			as in Figure \ref{fig:onethreethree}a, we may
			apply the same reduction used in the first case of Lemma \ref{lem:onethree}.
			The case when the Type I subtree is on the lower left is similar.
			
			If the Type I subtree is on the right, then the situation is as shown
			in Figure \ref{fig:onethreethree}b.
			Here we apply the same reduction used in the last case of Lemma
			\ref{lem:onethree}, placing beacons $b_1$ at $t + \varepsilon\hat{y}$ and
			$b_2$ at $u - \varepsilon\hat{y}$, where $t$ and $u$ are the lower right and
			upper right corners of $A_2$.
			Here, $b_1$ covers all removed rectangles except the
			upper-left neighbor of $L''$, which $b_2$ covers.
			$b_1$ and $b_2$ see each other, and there is no possibility of trapped paths.
		\end{proof}
	
		We now summarize the last four sections.
		
		\begin{thm}\label{thm:casesummary}
			If $T_k$ has depth at least 3, then $P_k$ can be reduced by 2 rectangles at a
			cost of 1 beacon; 3, 4, or 5 rectangles at a cost of 2 beacons;  5 rectangles
			at a cost of 3 beacons; or 6 or 7 rectangles at a cost of 4 beacons.
			The reduction removes at most three layers from the dual tree.
		\end{thm}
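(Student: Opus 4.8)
The plan is to prove this summary by an exhaustive case analysis that merely collects the reductions already established in Lemmas \ref{lem:twotalltwokids} through \ref{lem:onethreethree}; since each of those lemmas supplies a concrete reduction with a known rectangle/beacon count, the only real content here is confirming that the cases are exhaustive and that the resulting counts are exactly those enumerated in the statement. First I would fix a deepest leaf $L = A_0$, its parent $A_1$, and its grandparent $A_2$, the latter existing precisely because $T_k$ has depth at least $3$. Because $L$ is a deepest node, every subtree hanging below $A_2$ has depth at most $2$, so every grandchild of $A_2$ is itself a leaf.

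The argument proceeds in two stages. In the first stage I consider each child of $A_2$ that roots a depth-two subtree (in particular $A_1$) and ask whether its local configuration matches one of the two-level reductions, splitting on the type of $A_2$ as a neighbor of that child (tall, solo, or paired) and on the number of children the child has. The direct configurations are then dispatched: two children yield a $3$-rectangle/$2$-beacon reduction (Lemmas \ref{lem:twotalltwokids}, \ref{lem:twosolotwo}), a single short child yields a $2$-rectangle/$1$-beacon reduction (Lemmas \ref{lem:lowerleft}, \ref{lem:upperleft}, \ref{lem:twosoloone}), and in the paired case three or two children invoke Lemmas \ref{lem:twopairedthree} and \ref{lem:twopairedtwo}. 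If any child of $A_2$ admits such a reduction I perform it and stop, so these branches account for the $(2,1)$ entry and several of the $(3,2)$ entries.

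The second stage is the crux. If \emph{no} child of $A_2$ admits a two-level reduction, I must show the local structure is forced into the Type I--IV classification and then apply a three-level lemma. I would argue that the only depth-two subtrees escaping the two-level lemmas are the two deferred configurations---a tall upper-left single child beneath a tall $A_2$ (Type I) and a single paired child beneath a paired $A_2$ (Type II)---while every depth-one subtree is a bare leaf, hence Type III or Type IV. I then split on whether $A_2$ has a \emph{tall} subtree: a Type II subtree invokes Lemma \ref{lem:typeII} and a Type IV subtree invokes Lemma \ref{lem:typeIV}, each time using that a tall subtree must sit opposite $A_3$ and is therefore the unique left neighbor, which limits the remaining subtree to a single lower-right one. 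If instead all subtrees are short, I enumerate by the multiset of Type I and Type III subtrees (of which there are at most three, since $A_3$ occupies one of $A_2$'s sides): all Type I (Lemma \ref{lem:all-typeI}, giving the $3$-, $5$-, and $7$-rectangle reductions), two Type I with one Type III (the $6$-rectangle/$4$-beacon lemma), and one Type I with one or two Type III (Lemmas \ref{lem:onethree}, \ref{lem:onethreethree}); here the existence of a grandchild of $A_2$ forces at least one Type I subtree, ruling out any all-Type-III multiset.

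Finally I would tabulate the rectangle/beacon pairs produced across all branches and verify that they form exactly $\{(2,1),(3,2),(4,2),(5,2),(5,3),(6,4),(7,4)\}$, matching the statement, and observe that every rectangle removed by any branch sits at depth $d$, $d-1$, or $d-2$ (with $d$ the depth of $L$), so at most three layers are ever removed. I expect the main obstacle to be the exhaustiveness argument opening the second stage: pinning down precisely that the failure of \emph{every} two-level reduction forces the Type I--IV shapes, and that the enumerated subtree-multiset cases---constrained both by $A_2$ having a grandchild and by a tall subtree being forced to the left and unique---cover all remaining possibilities with no gaps and no double-counting.
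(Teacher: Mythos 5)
Your proposal is correct and follows essentially the same route as the paper's own proof: fix a deepest leaf, first try the two-level reductions of Lemmas \ref{lem:twotalltwokids}--\ref{lem:twopairedtwo} at children of $A_2$, and otherwise observe that every subtree of $A_2$ is forced into Types I--IV so that Lemmas \ref{lem:typeII}--\ref{lem:onethreethree} apply. You spell out the exhaustiveness of the case split (the deferred configurations, the uniqueness of a tall subtree, the grandchild forcing a Type I subtree) in more detail than the paper's terse summary, but the underlying argument and the tabulated rectangle/beacon pairs are identical.
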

		\begin{proof}
			Start at a deepest leaf $L$ of $T_k$, and label its parent and ancestors
			$A_1$, $A_2$, $A_3$, etc.  If there is a reduction from Sections
			\ref{sec:atwotall} to \ref{sec:atwopaired} at any child of $A_2$,
			we are done (the number of rectangles and beacons for the reduction is
			listed here in the theorem statement).  These reductions remove at most two
			layers from the dual tree.
			
			Otherwise, all children of $A_2$ are one of the four types in Figure
			\ref{fig:threeleveltypes}.
			If any of these subtrees are tall, then Lemma \ref{lem:typeII} or
			\ref{lem:typeIV} applies, and if they are all short, then one of Lemmas
			\ref{lem:all-typeI}--\ref{lem:onethreethree} applies.
			Again, the number of rectangles and beacons in the reduction is listed here;
			these reductions remove at most three layers from the dual tree.
		\end{proof}
		
		\begin{cor}\label{cor:casescondensed}
			If $T_k$ has depth at least 3, then $P_k$ can be reduced by some $s$
			rectangles at a cost of $b$ beacons, where $b \leq \floor{2s}{3}$.
			The reduction removes at most three layers from the dual tree.
		\end{cor}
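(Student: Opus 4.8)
The plan is to obtain this corollary as an immediate consequence of Theorem~\ref{thm:casesummary}, which already enumerates the complete finite list of $(s,b)$ pairs---rectangles removed against beacons placed---that a single reduction can produce when $T_k$ has depth at least $3$. Because that enumeration is finite and (by the theorem) exhaustive, the only remaining work is to verify the inequality $b \le \floor{2s}{3}$ separately for each listed pair, which I would carry out as a short line-by-line check.

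The pairs furnished by the theorem are $(s,b)\in\{(2,1),\,(3,2),\,(4,2),\,(5,2),\,(5,3),\,(6,4),\,(7,4)\}$. Evaluating the bound gives $\floor{4}{3}=1$, $\floor{6}{3}=2$, $\floor{8}{3}=2$, $\floor{10}{3}=3$, $\floor{12}{3}=4$, and $\floor{14}{3}=4$ for $s=2,3,4,5,6,7$ respectively. Comparing each with the corresponding $b$, the inequality $b\le\floor{2s}{3}$ holds in every case; it is tight (equality) for all pairs except $(4,2)$ and $(5,2)$, where there is slack. This establishes the ratio bound uniformly over all reduction types.

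The second assertion, that the reduction removes at most three layers of the dual tree, is inherited verbatim from Theorem~\ref{thm:casesummary} and requires no additional argument.

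I do not anticipate any genuine difficulty here: all of the structural and geometric content resides in the earlier covering lemmas and in the exhaustive case analysis behind Theorem~\ref{thm:casesummary}, and this corollary merely repackages that theorem's discrete output as a single clean ratio. The one point demanding care is to lean on the theorem's claim of completeness, so that no reduction configuration slips past the bound---but that completeness is precisely what Theorem~\ref{thm:casesummary} asserts for every polygon with dual-tree depth at least $3$.
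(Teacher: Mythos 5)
Your proposal is correct and takes essentially the same route as the paper, which states this corollary without a separate proof precisely because it is the immediate pairwise check you describe: verify $b \leq \floor{2s}{3}$ for each pair $(s,b) \in \{(2,1),(3,2),(4,2),(5,2),(5,3),(6,4),(7,4)\}$ enumerated in Theorem~\ref{thm:casesummary}, and inherit the three-layer claim verbatim. One trivial slip in your side remark on tightness: the bound is also tight at $(4,2)$, since $\floor{8}{3} = 2 = b$; only $(5,2)$ has slack---but this does not affect the validity of the verification.
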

		
	\subsection{Induction basis}
		The basis for our induction is when $T_k$ has depth 2 or smaller.
				The basis cases are when there are only one or two levels in the dual tree.
				
		If it has depth 0,
		the tree is simply a node, and 
		the polygon is a rectangle.
		
		If it has depth 1,
		since we rooted it at a leaf, then the tree has only two nodes, and
		the polygon is a 6-vertex ``L'' shape.
		In both of these cases, every point in the polygon attracts every other point
		in the polygon (see Lemma \ref{lem:solocovered}).  Thus, there are no intermediate
		beacons required and the smallest beacon routing set is of size 0.
		\begin{figure}[htbp] 
			\begin{center}
				\includegraphics*[scale=1.25]{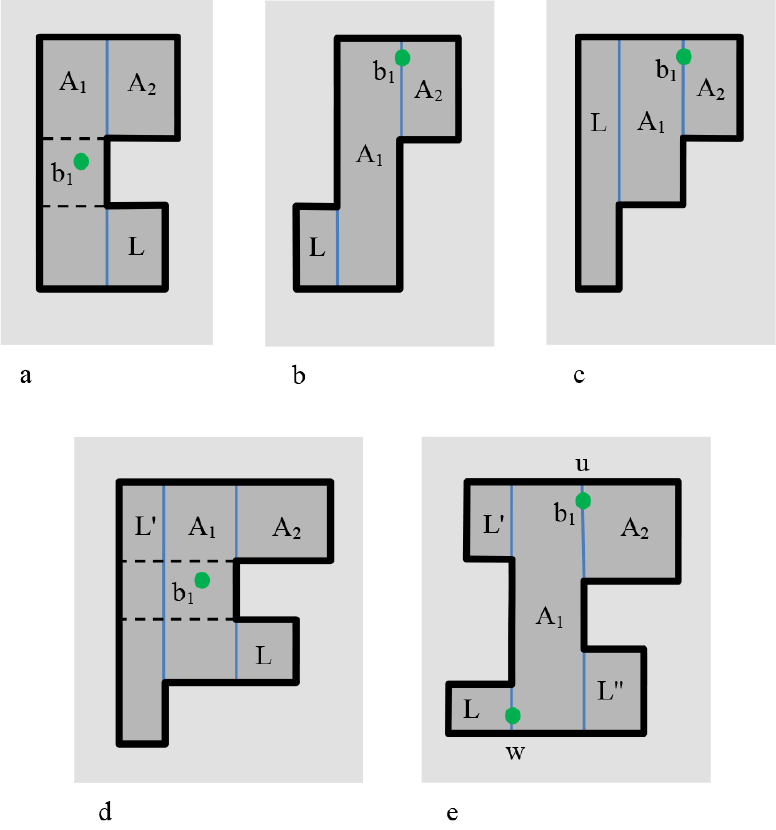} 
			\end{center}
			\caption{ The dual tree has depth 2. 
					(a) $A_1$ has one child on the right.
					(b) $A_1$ has one child, on the left and short.
					(c) $A_1$ has one child, on the left and tall.
					(d) $A_1$ has two or three children.
					}
			\label{fig:basis}
		\end{figure}  
		The depth-2 situation is a little more involved.  $A_2$'s only child is $A_1$,
		but $A_1$ has one to three children.  This gives a total of 3 to 5
		rectangles, or $n = 8$ to $12$.  As above, we assume that $A_2$ is an upper
		right neighbor of $A_1$.
		
		If $A_1$ has one child, then there are three rectangles and $n = 8$.
		If the neighbors of $A_1$ are both right neighbors, then the situation is
		as depicted in Figure \ref{fig:basis}a, and we
		cover the polygon with one beacon in the modified right center of $A_1$, by
		Lemma \ref{lem:pairedcovered}. 
		If there is one left neighbor $L$ and one right neighbor $A_2$, then we cover
		them with a beacon $b_1$ on the vertical between $A_1$ and $A_2$.
		If the left neighbor is short, as shown in Figure \ref{fig:basis}b for a
		lower-left neighbor, the beacon $b_1$ covers $L$ by Lemma
		\ref{lem:sololeafcovered}.
		If instead the left neighbor is tall, as shown in Figure \ref{fig:basis}c for
		an upper-left neighbor, the beacon $b_1$ covers $L$ by Lemma
		\ref{lem:tallleafcovered}.
		(The cases of a short upper-left neighbor and of a tall lower-left neighbor
		are similar.)  Since $\floor{8-4}{3} = 1$, we have covered the
		polygon with a correct number of beacons.
		
		If $A_1$ has two children, and one of its neighbors is tall, then
		the situation is as depicted in Figure \ref{fig:basis}d (or symmetric to it).
		Here we cover the polygon with one beacon placed in the modified
		right center of $A_1$.
		
		Otherwise, if $A_1$ has two or three children, and all of its neighbors are
		short, then there are three or four rectangles, giving $n = 10$ or $12$.
		$\floor{10-4}{3} = \floor{12-4}{3} = 2$, so we have two beacons with which to
		cover the polygon.
		The situation is as depicted in Figure \ref{fig:basis}e, although one of
		$L$, $L'$, or $L''$ may be missing.
		$A_1$ must have at least one lower neighbor, say $L$,
		on either the left or the right.
		We place beacon $b_1$ at $u - \varepsilon\hat{y}$ and 
		$b_2$ at $w + \varepsilon\hat{y}$,
		where $u$ is the upper-right corner of $A_1$, and 
		$w$ is the lower shared corner of $L$ and $A_1$.
		The top beacon $b_1$ covers $A_1$ and $A_1$'s top neighbors, and the bottom
		beacon $b_2$ covers $A_1$'s bottom neighbors.  The beacons are visible to each
		other, so they form a routing beacon set.
		
		We have thus shown the following:
		
		\begin{lemma}\label{lem:basis}
		If the dual tree has depth two or smaller, then the polygon has a beacon
		routing set of $\floor{n-4}{3}$ beacons.
		\end{lemma}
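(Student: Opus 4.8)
The plan is to proceed by a case analysis on the depth of the dual tree, which by hypothesis is $0$, $1$, or $2$. Since the vertical decomposition has $\frac{n-2}{2}$ rectangles, counting rectangles determines $n$ and hence the beacon budget $\floor{n-4}{3}$. For the two shallow depths this budget is $0$, so I would first verify that no intermediate beacons are needed; the depth-$2$ case carries all of the real work.

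First I would dispatch depths $0$ and $1$. At depth $0$ the tree is a single node, so $P_k$ is a single rectangle ($n=4$) in which every pair of points is mutually visible and hence mutually attracting. At depth $1$, because the tree is rooted at a leaf it has exactly two nodes, so $P_k$ is a six-vertex L-shape ($n=6$) whose only reflex vertex is the shared one; hence the curl vertices are convex, $S^*=S$, and Lemma~\ref{lem:solocovered} shows that every point covers every other point. In both cases $\floor{n-4}{3}=0$, so the empty beacon set routes between every pair of points, matching the bound.

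The substantive case is depth $2$. Here $A_2$ is the root, its only child is $A_1$, and $A_1$ has between one and three children, giving $3$ to $5$ rectangles and correspondingly $n\in\{8,10,12\}$. I would split on the number and type of $A_1$'s children and, for each configuration, exhibit an explicit placement of $\floor{n-4}{3}$ beacons, verifying coverage with the lemmas of Section~\ref{sec:coverage}. When $A_1$ has a single child ($n=8$, one beacon), the sub-cases are whether both neighbors lie on the right---covered by one beacon in the modified right center via Lemma~\ref{lem:pairedcovered}---or whether one neighbor $L$ lies on the left, in which case a beacon on the vertical between $A_1$ and $A_2$ covers $A_1$ by Observation~\ref{obs:contained} and covers $L$ by Lemma~\ref{lem:sololeafcovered} or Lemma~\ref{lem:tallleafcovered} according to whether $L$ is short or tall. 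When $A_1$ has two children one of which is tall, a single beacon in the modified right center again suffices (well under budget). Finally, when $A_1$ has two or three short children ($n=10$ or $12$, two beacons), I would place one beacon near the upper-right corner of $A_1$ and one near the lower corner shared by $A_1$ and a bottom child, so that the top beacon covers $A_1$ together with its top neighbors and the bottom beacon covers the bottom neighbors.

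The main obstacle is organizing the depth-$2$ configurations so that each is matched to the correct coverage lemma, and so that the chosen beacons are mutually visible and hence strongly connected in the attraction graph---which is what lets any point route to any other through the set. In particular I would need the $\pm\varepsilon\hat{x}$ and $\pm\varepsilon\hat{y}$ offsets to keep each beacon strictly inside $P_k$ and away from the indeterminate reflex-curl-vertex situation, and I would note that, because these are basis cases with no further reductions, there are no trapped paths to repair, so Lemma~\ref{lem:routingset} is not even required here.
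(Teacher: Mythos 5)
Your proposal is correct and follows essentially the same route as the paper's own proof: depths $0$ and $1$ dispatched by mutual attraction in a rectangle or L-shape, and the depth-$2$ case split by the number and type of $A_1$'s children with the identical beacon placements (modified right center, a beacon on the vertical between $A_1$ and $A_2$ justified by Lemma~\ref{lem:sololeafcovered} or Lemma~\ref{lem:tallleafcovered}, and the two corner-offset beacons for two or three short children). Your closing observation that no trapped paths arise and Lemma~\ref{lem:routingset} is not needed in the basis matches what the paper leaves implicit.
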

		
	\subsection{Lower bound}
	
		\begin{thm}
			Any orthogonal polygon of $n$ vertices has a local beacon routing set of at
			most $\floor{n-4}{3}$ beacons.
		\end{thm}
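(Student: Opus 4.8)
The plan is to prove this by strong induction on the number of rectangles in the vertical decomposition of $P$ (equivalently, on $n$, since a general-position orthogonal polygon with $n$ vertices decomposes into $\frac{n-2}{2}$ rectangles). To keep the repair machinery available at every step, I would strengthen the statement being inducted upon so that it asserts a \emph{local} routing beacon set of size \floor{n-4}{3}: locality is precisely the hypothesis that Lemma \ref{lem:repair} requires of $B_{k+1}$, and all coverage produced by the reduction lemmas is already local, so this strengthening is free and is in fact what the theorem claims.

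For the base case I would invoke Lemma \ref{lem:basis}: whenever the dual tree (rooted at a leaf) has depth at most $2$, the polygon already admits a local routing set of \floor{n-4}{3} beacons, which disposes of the rectangle, the ``L'' shape, and every depth-$2$ configuration. For the inductive step, suppose the dual tree has depth at least $3$. I would apply Corollary \ref{cor:casescondensed} to obtain a reduction that cuts off a region $C$ of $s$ rectangles (possibly disconnected, as in the two-subtree reductions—this does not affect the count) using a beacon set $B'$ with $|B'| = b \le \floor{2s}{3}$, leaving a subpolygon $P'$ with $s$ fewer rectangles. Since $P'$ is itself a general-position orthogonal polygon (each cut vertical becomes a boundary edge) with strictly fewer vertices, namely $n' = n - 2s$, the inductive hypothesis furnishes a local routing set $B''$ of size \floor{n'-4}{3}. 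The case analysis underlying Corollary \ref{cor:casescondensed} (Theorem \ref{thm:casesummary}) verifies, in each instance, exactly the three hypotheses of Lemma \ref{lem:routingset}, so $B = B'' \cup B'$ is a routing set for $P$; it is local because $B''$ is local, the coverage by $B'$ is local, and any path sections introduced by the repair of Lemma \ref{lem:repair} are local.

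It then remains to check the arithmetic, which is the cleanest part: using the inequality $\lfloor x\rfloor + \lfloor y\rfloor \le \lfloor x+y\rfloor$, we get $|B| \le b + \floor{n'-4}{3} \le \floor{2s}{3} + \floor{(n-2s)-4}{3} \le \floor{n-4}{3}$, since the two floored arguments sum to $\frac{n-4}{3}$. The induction is well-founded because every reduction removes at least two rectangles, so the process terminates at the depth-$\le 2$ base.

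I do not expect the main obstacle to lie in this final assembly—the genuinely difficult work is the reduction case analysis of Section \ref{sec:reductions} and the trapped-path repair of Section \ref{sec:repair}, both of which I am free to assume here. The only points demanding care are making locality part of the inductive invariant, so that Lemma \ref{lem:repair} is legitimately applicable at each step, and confirming that the cut-off operation genuinely yields a smaller general-position orthogonal polygon to which the hypothesis applies; once these are in place, the floor bound follows by the superadditivity of $\lfloor\,\cdot\,\rfloor$ noted above.
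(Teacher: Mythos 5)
Your proposal is correct and follows essentially the same route as the paper: induction on the number of decomposition rectangles, with Lemma \ref{lem:basis} handling dual trees of depth at most $2$, Corollary \ref{cor:casescondensed} supplying a reduction of $s$ rectangles for $b \leq \floor{2s}{3}$ beacons, and the bound closed by superadditivity of the floor, $\floor{2s}{3} + \floor{n-2s-4}{3} \leq \floor{n-4}{3}$. Your two points of care---building locality into the inductive invariant so Lemma \ref{lem:repair} applies, and checking the reduced polygon is a valid nonempty instance (which the paper argues via the reductions removing at most three levels of a depth-$\geq 3$ tree)---match the paper's treatment.
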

		\newcommand{\nrect}{\ensuremath{r}}
		
		\begin{proof}
			Let \nrect\  be the number of rectangles in the vertical decomposition of the
			polygon.  Since $n = 2\nrect + 2,$ the floor in the theorem is equivalent to
			$\floor{(2\nrect + 2)-4}{3} = \floor{2\nrect - 2}{3}.$  We proceed to prove
			that there is a beacon set no larger than this, by induction on \nrect.
			First, we root the dual tree at a leaf.
			
			We stop the induction when the dual tree has depth two or smaller, measured
			from this root.
			Lemma \ref{lem:basis} establishes these polygons as satisfying the theorem.

			For our inductive step, the depth of the dual tree is at least 3.
			Thus Theorem \ref{thm:casesummary} applies, and gives us a reduction of
			$s$ rectangles for $b$ beacons, where $b < \floor{2s}{3}$.
			
			We reduce $P$ by $s$ rectangles to construct a
			$P'$ with $\nrect' = \nrect-s$ rectangles.
			We know that $\nrect' > 0$ since the dual tree has depth at least three
			(i.e., at least four levels)
			and the reductions remove at most three levels from that.
			So by induction $P'$ has a local beacon routing set of at most 
			$\floor{2\nrect'- 2}{3}= \floor{2(\nrect-s)-2}{3} = 
			\floor{2\nrect- 2s - 2}{3}$ beacons.
			To construct the beacon set for $P$, we add $b$ beacons to that, 
			and so we have at most
			$\floor{2\nrect- 2s - 2}{3} + b \leq 
			 \floor{2\nrect- 2s - 2}{3} + \floor{2s}{3} \leq
			 \floor{2\nrect - 2}{3}$ beacons.
		\end{proof}

\section{Lower bound} 

\newcommand{\mXin}[1]{\ensuremath{m_{#1}^{\mbox{\small in}}}}
\newcommand{\mXout}[1]{\ensuremath{m_{#1}^{\mbox{\small out}}}} 
\newcommand{\mkin}{\mXin{k}}		
\newcommand{\mkout}{\mXout{k}}
\newcommand{\mimin}{\mXin{3i-2}}
\newcommand{\mimout}{\mXout{3i-2}}
\newcommand{\mipin}{\mXin{3i+1}}
\newcommand{\mipout}{\mXout{3i+1}}
\newcommand{\Hp}[1]{\ensuremath{H^{+}_{#1}}}
\newcommand{\Hm}[1]{\ensuremath{H^{-}_{#1}}}

	In this section we exhibit an infinite class of orthogonal polygons
	that require
	\floor{(n-4)}{3}\ beacons to route between any pair of points.
	The examples are geometrically simple, being orthogonal spiral
	polygons with a corridor width of $1$.

	Our polygons will spiral outwards clockwise as one moves through the reflex
	chain when walking counterclockwise around the polygon (i.e. left hand on
	interior).
	Call the reflex vertices of the polygon $r_1, r_2, \ldots r_{(n-2)/2}$ in
	this counterclockwise order, and let $r_0$ and $r_{n/2}$ denote the convex
	vertices adjacent to $r_1$ and $r_{(n-2)/2}$, respectively.
	Let $c_k$ be the convex vertex just outside of (and closest to) $r_k$
	(refer to Figure \ref{fig:spiralDefinitions}).
	Let $e_k$ be the edge from $r_k$ to $r_k+1$, and $l_k$ be the length of $e_k$.

	Now let $C_k$ be the ``corner'' $1$ by $1$ square in $P$ with vertices $r_k$ and
	$c_k$, and $H_k$ be the ``hallway'' rectangle (with dimensions $1$ by $l_k$)
	between $C_{k-1}$ and $C_{k}$.

	If \mkin\  is the midpoint of $r_{k-1}$ and $r_k$,
	and \mkout\  is the midpoint of $c_{k-1}$ and $c_k$,
	we can partition the ``hallway'' $H_k$ into two halves \Hp{k}\ and \Hm{k}\ 
	by splitting with its bisector \mkin\mkout.
	Let \Hp{k}\ be the half adjoining $C_k$, and
	let that half (and not \Hm{k}) contain the points on the segment
	\mkin\mkout.

\begin{figure}[htb]
	\begin{center}
		\includegraphics{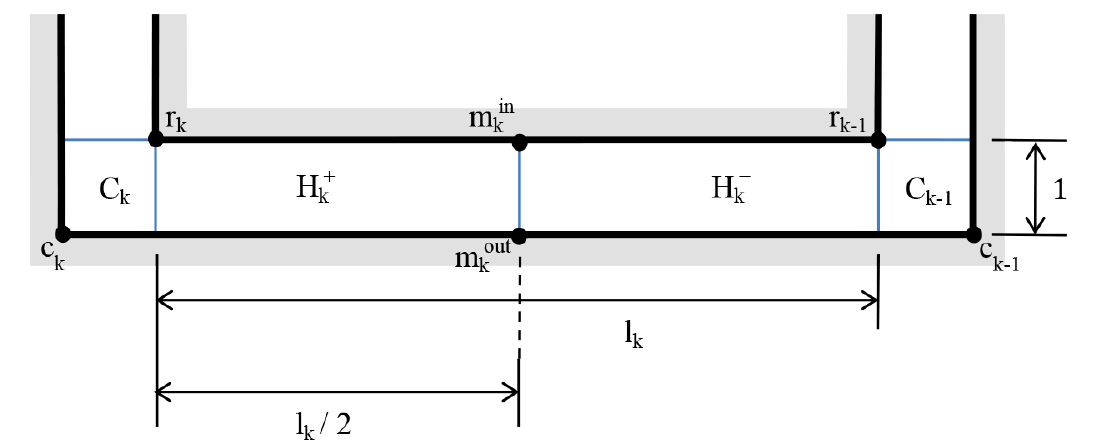}
	\end{center}
	\caption{ Notation for an orthogonal spiral. }
	\label{fig:spiralDefinitions}
\end{figure}

We will construct polygons for $n = 6r + 4$ for some $r$; these polygons
are specified simply by giving the lengths $l_1, l_2, \ldots l_{3r+1}$ 
of the $3r+1$ ``hallway'' rectangles.
Provided we have $l_j > l_{j-2} + 2$ for all $3 \leq j \leq 3r$, the polygon
will spiral outward and not self-intersect.

We specify $r$ sections $S_1, S_2, \ldots S_r$ of the polygon, by letting
$S_i$ be the union of $\Hp{3i-2}, C_{3i-2}, H_{3i-1}, C_{3i-1}, H_{3i},
C_{3i},$ and $\Hm{3i+1}$ (see Figure \ref{fig:spiralSection}).
Note that no point of $P$ is contained in more than one section, and there
are points at either end of the spiral (in \Hm{1} and \Hp{3r+1})
that are in no section.

Now consider a set of beacons $B$ that can route in such a polygon $P$.
We claim that $|B| >= 2r$.
If this were not the case, then by the pigeonhole principle some section $S_i$
would contain less than two beacons.

\begin{figure}[htb] 
	\begin{center}
		\includegraphics{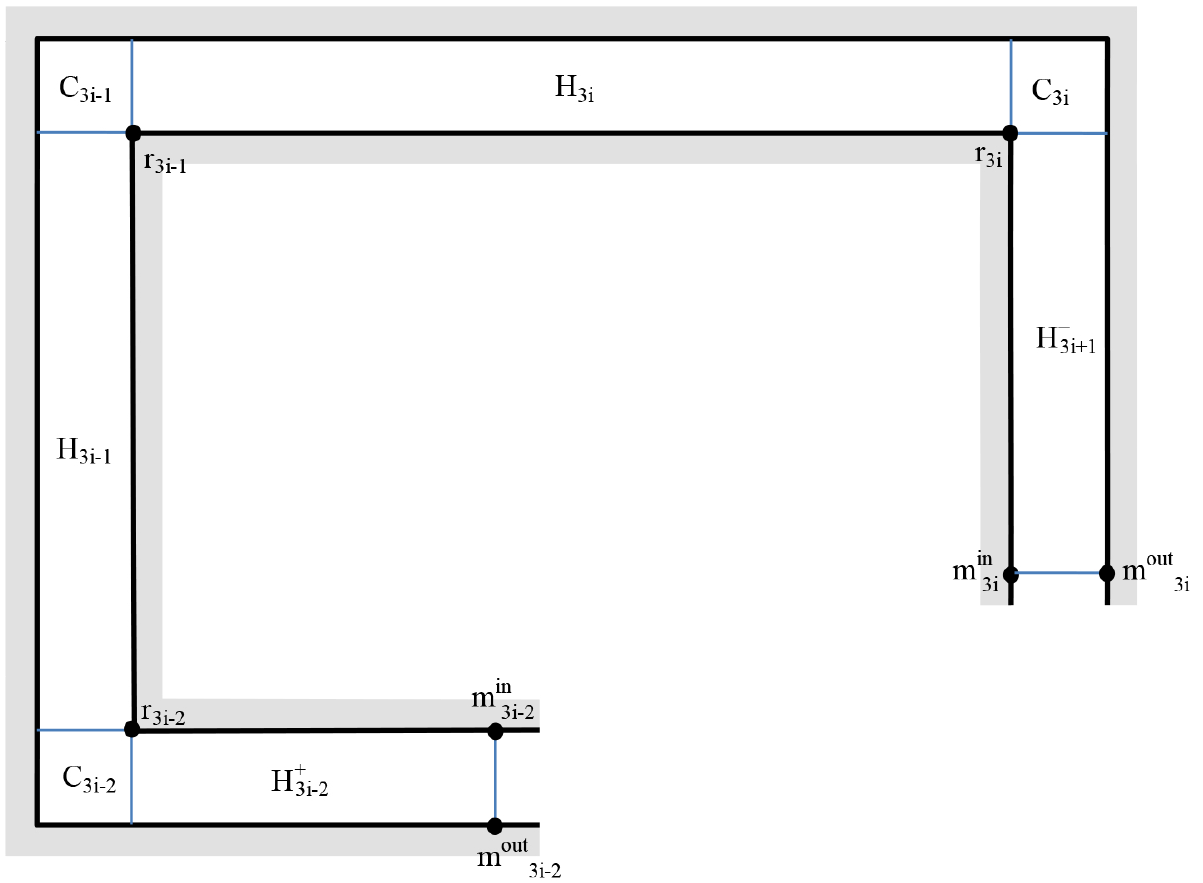} 
	\end{center}
	\caption{ A section of an orthogonal spiral. }
	\label{fig:spiralSection}
\end{figure}

If $S_i$ is removed from $P$, then there are two nonempty subpolygons left:
the part \emph{before} $S_i$, which contains at least \Hm{3i-2},
and the part \emph{after} $S_i$, which contains at least \Hp{3i+1}.
Since $B$ is a routing set of beacons, one must be able to route between
a point in the part before $S_i$ to a point in the part after $S_i$ using only
the beacons of $B$.
In order for a robot to get from a point before $S_i$ to a point after,
it must at some point pass from \Hp{3i-2}\ to $C_{3i-2} \cup H_{3i-1}$ at some
point on the (closed) vertical between $C_{3i-2}$ and \Hp{3i-2} (refer to
Figure \ref{fig:spiralSection})
 For a beacon to cause this to happen, the beacon must be on or left of the
 line $r_{3i-1}r_{3i-2}$ in one of the three local regions $C_{3i-1}, H_{3i-1},$
 and $C_{3i-2}$. (If it is not in one of these three local regions, the robot 
 will become stuck without reaching the beacon.)

To summarize, some $S_i$ has fewer than two beacons, but to route from a point
before $S_i$ to a point after $S_i$, there must be a beacon in $C_{3i-1}, H_{3i-1},$
 or $C_{3i-2}$.  Thus, the no-beacon option is eliminated, and this $S_i$ has
 one beacon.  
 
Now consider routing from some point after $S_i$ to some point before $S_i$.
An argument symmetric to that above shows that $S_i$ must have a beacon in
$C_{3i-1}, H_{3i},$ or $C_{3i}$.

Thus, the single beacon $b$ in $S_i$ lies in $C_{3i-1}$.
Consider again routing from some point before $S_i$ to some point after.
After activating $b$ and attracting the robot there, another beacon must
activate and attract the robot along the next stage of its routing.
Since there are no other beacons in $S_i$, and since we can only use $b$
once, the next beacon must be somewhere after $S_i$.
For a beacon to successfully attract a robot from $C_{3i-1}$ to somewhere
after $S_i$, the beacon must be in either \Hp{3i+1}\ or $C_{3i+1}$. 

\newcommand{\bafter}{\ensuremath{b^{\mbox{\small after}}}}
\newcommand{\ebad}{\ensuremath{r_{3i-2}r_{3i-1}}}
\newcommand{\egood}{\ensuremath{r_{3i-1}r_{3i}}}
Since the hallways $H_{3i}$ and $H_{3i+1}$ are (considerably) longer than they
are wide, a robot in $C_{3i-1}$ attracted towards a beacon \bafter\ in \Hp{3i+1}
or $C_{3i+1}$ will either hit \ebad\  (as shown in red in Figure
\ref{fig:attractCorner}), hit \egood\  (as shown in green in
the figure), or hit the reflex vertex $r_{3i-1}$ itself.
If the robot hits \ebad\  it will eventually get stuck,
but if it hits \egood\  it will continue along this wall,
eventually reaching \bafter.

\begin{figure}[htbp] 
	\begin{center}
		\includegraphics{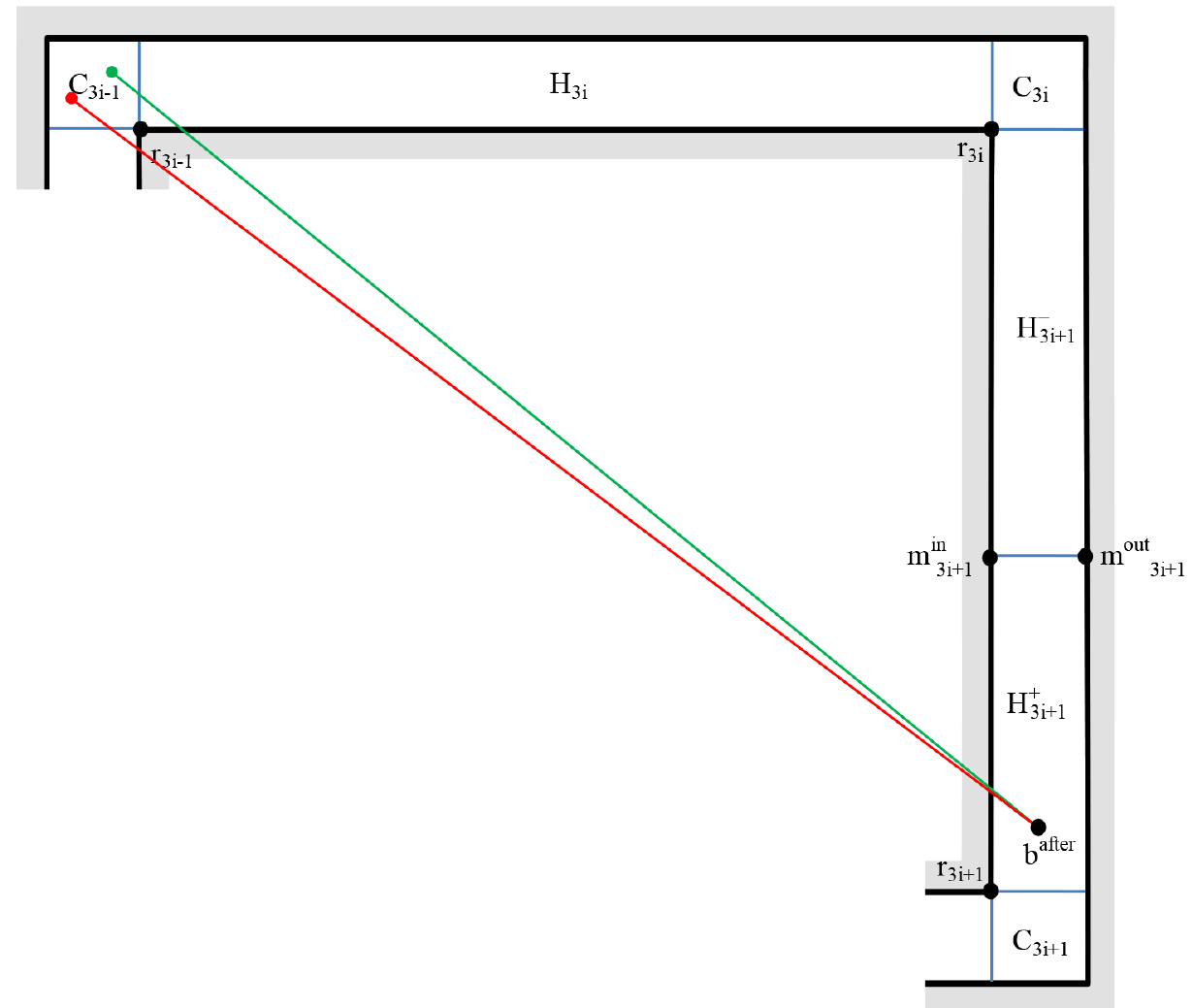} 
	\end{center}
	\caption{ 
		A robot in $C_{3i-1}$ attracted to a beacon in \Hp{3i+1} or 
		$C_{3i+1}$.
	}\label{fig:attractCorner}
\end{figure}

If the robot is below (with reference to Figure \ref{fig:attractCorner})
the line \bafter$r_{3i-1}$, then it will hit \ebad\ and get stuck.
Thus, the robot (and hence the single beacon in $S_i$) must be located
on or above \bafter$r_{3i-1}$ in $C_{3i-1}$.
Since this need be true only for a single beacon \bafter\ in \Hp{3i+1}\ or
$C_{3i+1}$, we can assume the most permissive case of $\bafter = \mipout$,
and derive that the beacon in $S_i$ must be located on or above
\mipout$r_{3i-1}$ in $C_{3i-1}$ (the green-striped region in Figure
\ref{fig:aboveLine}).
That is, any robot below this line would be attracted into \ebad\ by any beacon
in \Hp{3i+1}\ or $C_{3i+1}$.

\begin{figure}[htbp] 
	\begin{center}
		\includegraphics{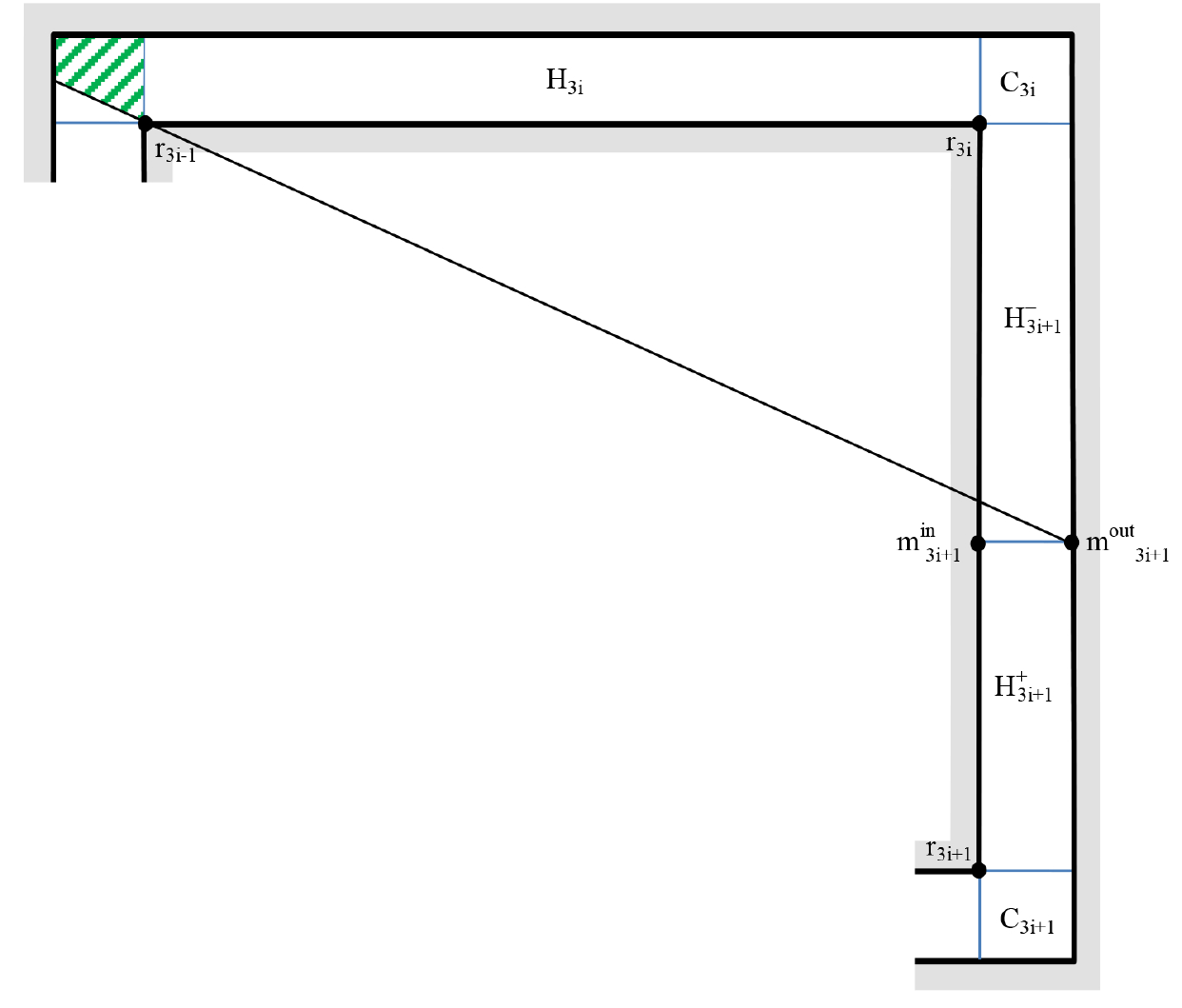} 
	\end{center}
	\caption{The region in which the beacon in $S_i$ must lie.}
	\label{fig:aboveLine}
\end{figure}

By a symmetric argument, considering a routing from some point before $S_i$ to
some point after $S_i$, we get that the beacon in $S_i$ must be located on or
below \mimout$r_{3i-1}$.  The effect of this constraint combined with the
previous one is illustrated in Figure \ref{fig:betweenLines}.
However, Figure \ref{fig:betweenLines} is not the only geometric situation
possible:
if the ratio of $l_{3i+1}/2$ to $l_{3i}$ is greater than the ratio of
$l_{3i-1}+1$ to $l_{3i-2}/2$, then there are no points of $C_{3i-1}$ other than
$r_{3i-1}$ that satisfy both constraints; this is illustrated in Figure
\ref{fig:crossingLines}.

We rewrite this inequality on the ratios of the corridor lengths as
	\[
		\frac{l_{3i+1}}{2l_{3i}} > \frac{2(l_{3i-1}+1)}{l_{3i-2}}
	\]
and multiply both sides by $2l_{3i}$ to obtain
	\[
		l_{3i+1} > \frac{4 l_{3i} (l_{3i-1}+1)}{l_{3i-2}}
	\]
which we shall refer to as the \emph{length inequality}.

Thus far we have shown that, if $S_i$ contains less than two beacons, and
it satisfies the length inequality, then $S_i$ contains exactly one beacon
at $r_{3i-1}$.  
In this situation, consider a before-to-after-$S_i$ routing of a robot, and an
after-to-before-$S_i$ routing.
The next beacon on either of these routings
(being in either $\Hp{3i+1} \cup C_{3i+1}$ 
 or $\Hm{ei-2} \cup C_{3i-3}$)
would pull a robot at $r_{3i-1}$
locally towards the exterior of the polygon.
If the beacon-attraction model specifies either a fixed choice (along the
clockwise edge, or along the clockwise edge) or an arbitrary choice (one can't
tell ahead of time which of the edges the robot will choose to move along)
for a robot pulled towards the exterior of a reflex vertex, then in at least
some instances on one of the before-to-after and after-to-before routings, the
robot goes along the wrong edge and gets stuck.
Thus, even $r_{3i-1}$ is not a valid choice for a single beacon in $S_i$ in a
valid routing set of beacons $B$ when the length inequality holds.

Given the length inequality, we have now eliminated all possibilites for $S_i$
to contain fewer than two beacons, so $S_i$ contains at least two beacons, and
the polygon therefore contains at least $2r$ beacons; since $n = 6r + 4$, we can
rewrite the number of beacons as at least $(n-4)/3$.

We now show how to choose lengths $l_1, l_2, \ldots, l_{r+1}$ so that the length
inequality holds for each $1 \leq i \leq r$, and so that the polygon spirals
outwards without self-intersection.

We will enforce the length inequality for each $l_k$ (where $k > 3$) as the
left-hand side, rather than simply for those $k$ that are equivalent to $1$ modulo $3$:
\[
	l_{k} > \frac{4 l_{k-1} (l_{k-2}+1)}{l_{k-3}}
\]
And we will replace this with the stronger requirement
\[
	l_{k} \geq \frac{8 l_{k-1} l_{k-2}}{l_{k-3}}
\]
by requiring that every $l_{k-2} > 1$.

By letting $m_k = \log l_k$, we get the recurrence
\[
	m_k \geq 3 + m_{k-1} + m_{k-2} - m_{k-3}
\]
which has the solution
\[
	m_k = k^2,
\]
as one can verify by substitution.
(If we change the inequality to an equality and solve the recurrence exactly,
we still get a function in $\Theta(k^2)$.)
So if we choose $l_k = 2^{m_k} = 2^{k^2}$, then the length inequality is
everywhere satisfied.
It is also simple to verify our requirement $l_{k-2} > 1$ is always satsified.

To ensure that the polygon spirals outward without
self-intersection, we only require that $l_k > 2 + l_{k-2}$ for all $3 \leq k
\leq r$.  Again, with our choice of $l_k = 2^{k^2},$ this is easily verified.

In sum, the $(6k + 4)$-vertex rectangular spiral with hallway lengths $l_k =
2^{k^2}$ requires at least $2k = (n-4)/3$ beacons in a beacon set for routing.

\begin{figure}[htb] 
	\begin{center}
		\includegraphics{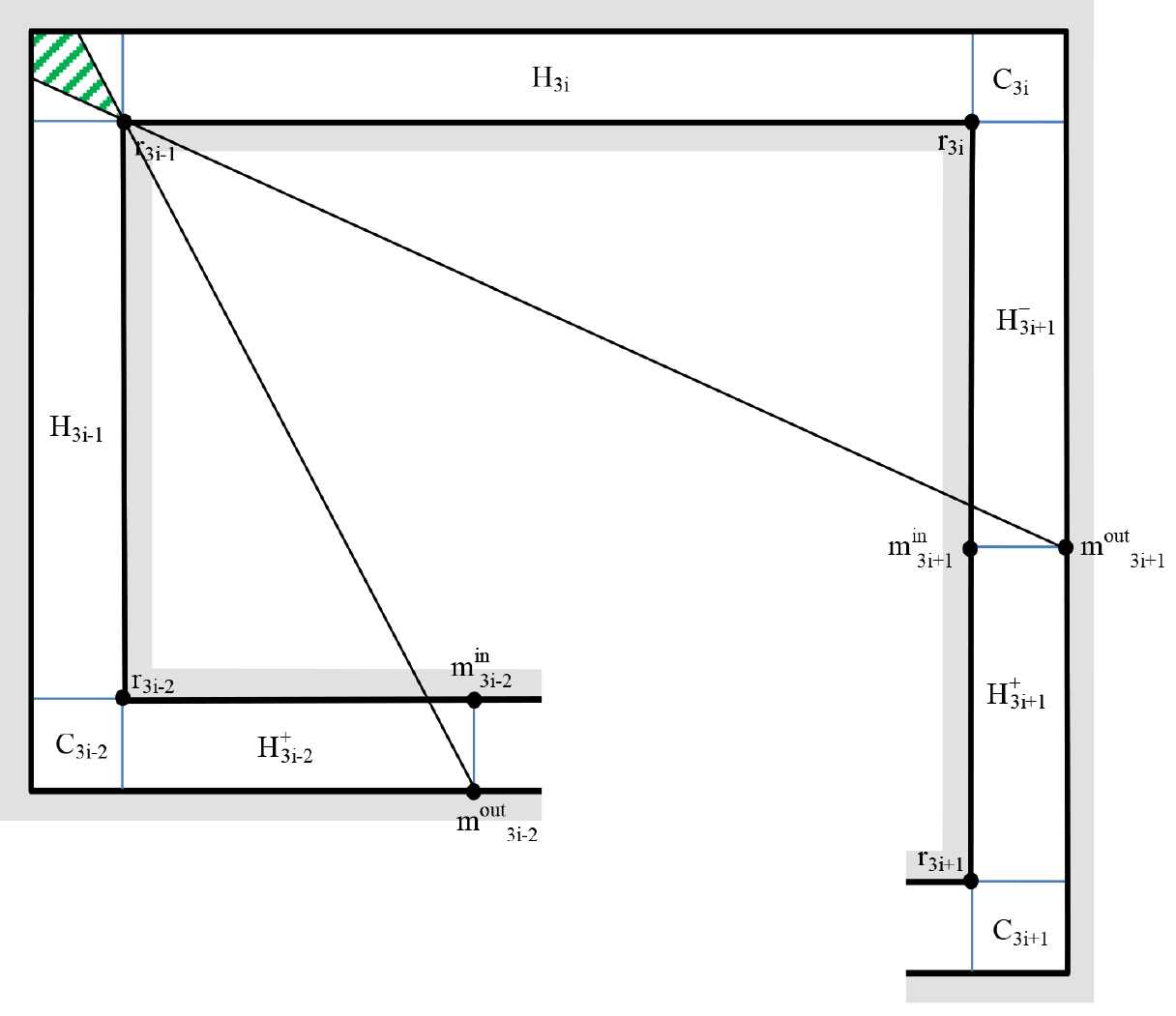} 
	\end{center}
	\caption{
		Adding the symmetric constraint.  The beacon
		in $S_i$ must lie in the shaded area.
	}
	\label{fig:betweenLines}
\end{figure}  

\begin{figure}[htb] 
	\begin{center}
		\includegraphics{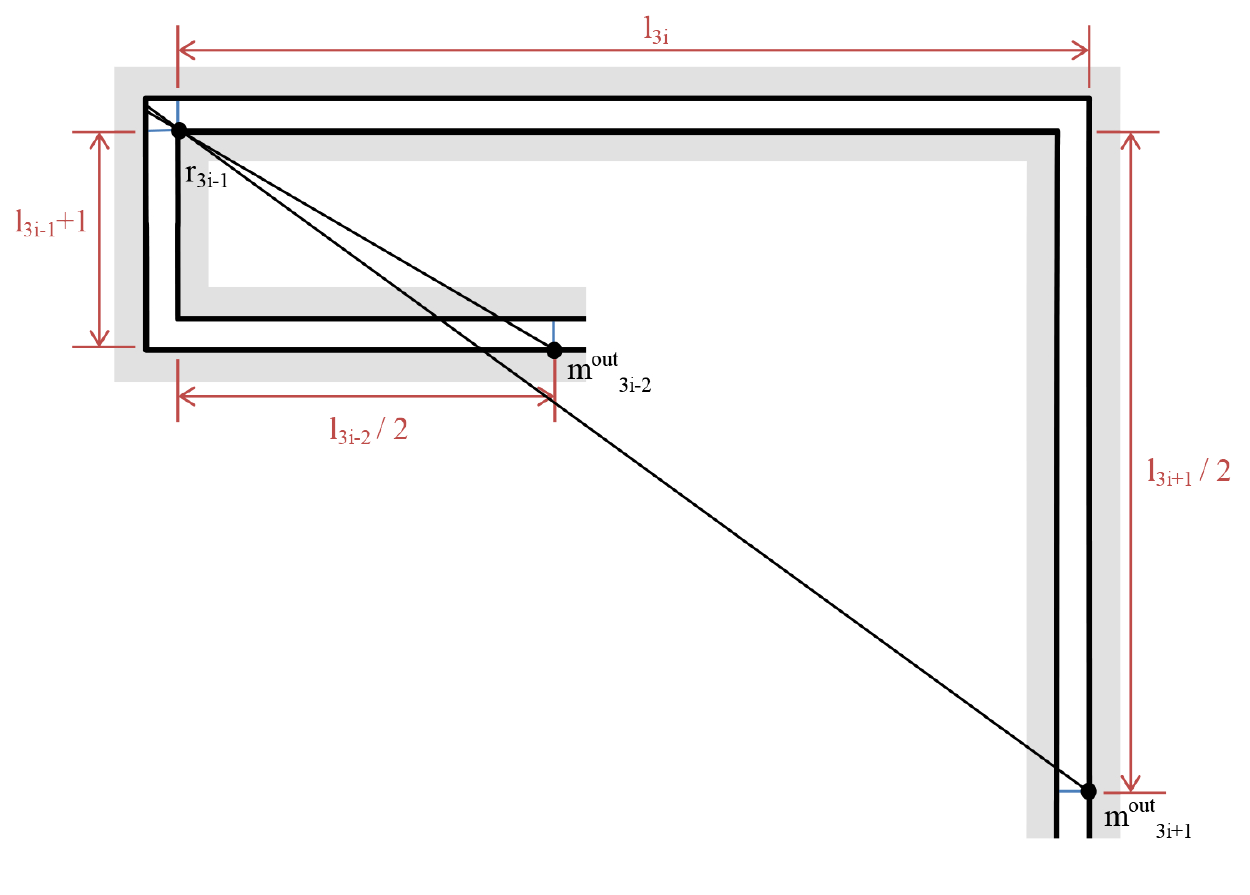} 
	\end{center}
	\caption{
		If the lengths satisfy the length inequality, the allowable region is
		only the point $r_{3i-1}$.
	}\label{fig:crossingLines}
\end{figure}

\bibliographystyle{abbrv}

\end{document}